\documentclass[pdflatex,sn-mathphys-num]{sn-jnl}

\usepackage{graphicx}
\usepackage{amsmath,amssymb,amsfonts}
\usepackage{algorithm}
\usepackage{algorithmic}
\usepackage{tikz}
\usepackage{chemfig}
\usepackage{longtable}
\usepackage{multirow}
\usepackage{pdflscape}
\usepackage{listings}
\usepackage{booktabs}
\usepackage{tabularx}
\usepackage{array}
\usepackage[normalem]{ulem}
\setchemfig{atom sep=2em}

\usetikzlibrary{arrows.meta,positioning,fit,calc,shapes.geometric}

\usepackage[version=4]{mhchem}
\input{figure/syn-style.sty}

\hypersetup{
  pdftitle={On the realizability of abstract reaction networks with real
    molecules and reactions},
  pdfauthor={Tuyet-Minh Phan, Daniel Merkle, Tieu-Long Phan,
    Christoph Flamm, and Peter F. Stadler},
  pdfkeywords={abstract reaction networks, balanced bijection,
    chemical graph transformation, constrained realizability,
    backtracking, NP-completeness},
  bookmarksdepth=3
}
\newtheorem{theorem}{Theorem}
\newtheorem{problem}[theorem]{Problem}
\newtheorem{definition}[theorem]{Definition}
\newtheorem{corollary}[theorem]{Corollary} 
\newtheorem{proposition}[theorem]{Proposition}
\newtheorem{lemma}[theorem]{Lemma}
\newtheorem{obs}[theorem]{Observation}
\newcommand{\coloneqq}{\mathrel{:=}}
\newcommand{\king}{\mathsf{K}}
\DeclareMathOperator{\supp}{supp}
\DeclareMathOperator{\Aut}{Aut}
\DeclareMathOperator{\orb}{orb}

\lstdefinestyle{shell}{
  language=bash,
  basicstyle=\small\ttfamily,
  columns=fullflexible,
  keepspaces=true,
  breaklines=true,
  showstringspaces=false,
  frame=single,
  framerule=0.4pt
}
\providecommand{\ContinuedFloat}{\addtocounter{figure}{-1}}

\begin{document}

\title[Realizability of abstract reaction networks]{On the realizability of
abstract reaction networks with real molecules and reactions}

\author*[1,2]{\fnm{Tuyet-Minh} \sur{Phan}}
\email{minh@bioinf.uni-leipzig.de}

\author[3,4]{\fnm{Daniel} \sur{Merkle}}
\email{daniel.merkle@uni-bielefeld.de}

\author[1,3]{\fnm{Tieu-Long} \sur{Phan}}
\email{tieu@bioinf.uni-leipzig.de}

\author[2]{\fnm{Christoph} \sur{Flamm}}
\email{xtof@tbi.univie.ac.at}

\author*[1,2,5,6,7,8]{\fnm{Peter F.} \sur{Stadler}}
\email{studla@bioinf.uni-leipzig.de}

\affil[1]{\orgdiv{Bioinformatics Group, Department of Computer Science,
Interdisciplinary Center for Bioinformatics, and School for Embedded and
Composite Artificial Intelligence (SECAI)}, \orgname{Leipzig University},
\orgaddress{\street{H{\"a}rtelstra{\ss}e 16--18}, \city{Leipzig},
\postcode{D-04107}, \country{Germany}}}

\affil[2]{\orgdiv{Department of Theoretical Chemistry},
\orgname{University of Vienna},
\orgaddress{\street{W{\"a}hringerstra{\ss}e 17}, \city{Vienna},
\postcode{A-1090}, \country{Austria}}}

\affil[3]{\orgdiv{Department of Mathematics and Computer Science},
\orgname{University of Southern Denmark},
\orgaddress{\city{Odense M}, \postcode{DK-5230}, \country{Denmark}}}

\affil[4]{\orgdiv{Algorithmic Cheminformatics Group, Faculty of Technology
and Center for Biotechnology (CeBiTec)}, \orgname{Bielefeld University},
\orgaddress{\street{Universit{\"a}tsstra{\ss}e 25}, \city{Bielefeld},
\postcode{D-33501}, \country{Germany}}}

\affil[5]{\orgname{Max Planck Institute for Mathematics in the Sciences},
\orgaddress{\street{Inselstra{\ss}e 22}, \city{Leipzig},
\postcode{D-04103}, \country{Germany}}}

\affil[6]{\orgdiv{Facultad de Ciencias},
\orgname{Universidad Nacional de Colombia},
\orgaddress{\city{Bogot{\'a}}, \country{Colombia}}}

\affil[7]{\orgdiv{Center for non-coding RNA in Technology and Health},
\orgname{University of Copenhagen},
\orgaddress{\street{Ridebanevej 9}, \city{Frederiksberg},
\postcode{DK-1870}, \country{Denmark}}}

\affil[8]{\orgname{Santa Fe Institute},
\orgaddress{\street{1399 Hyde Park Rd.}, \city{Santa Fe},
\state{NM}, \postcode{87501}, \country{United States}}}

\abstract{Abstract reaction networks appear not only as models of chemical
  reactions but also as models of complex systems, with applications in
  areas such as ecology and epidemiology, and as one of several alternative
  paradigms for non-standard computation. It is therefore of interest to
  determine whether an abstract reaction network can be realized by a
  concrete set of molecules and plausible chemical reaction mechanisms. It
  is known that a reaction network has a realization in terms of chemical
  graphs (i.e., Lewis structures) if and only if it is conservative. Here
  we consider the problem of assigning a set $M$ of known molecules to a
  set $X$ of abstract entities in a reaction network $(X,R)$ such that each
  reaction satisfies mass balance and adheres to one of an allowed set of
  chemical reaction mechanisms. We show that this problem is
  NP-complete. Nevertheless, it can be solved in practice using a
  backtrace-and-prune algorithm inspired by the VF2 family of algorithms
  originally designed for the subgraph isomorphism problem.}

\keywords{abstract reaction networks, balanced bijection, chemical graph
transformation, constrained realizability, backtracing, NP-completeness}

\maketitle
\raggedbottom

\section{Introduction}

Chemical reaction networks present an interesting model of ``non-standard''
computation. \emph{Stochastic Chemical Reaction Networks} (sCRNs) consider
a set of chemical reactions acting on a finite number of molecules in a
well-stirred solution according to standard chemical kinetics equations.
Their computational power and their relationship to other models of
computation is surveyed in detail in \cite{Cook:09}. They are ``almost''
Turing universal, in the sense that they can approximate any computable
function with probability of error less than $\varepsilon$ for any
$\varepsilon>0$ \cite{Soloveichik:08}. Moreover, there are several mild
extensions that make sCRNs Turing universal: for instance, it suffices to
define a special class of ``fast'' reactions that occur before all other
``slow'' reactions \cite{Cook:09}.

In this setting, a program is an abstract reaction network over a set $X$
of entities that is specified as a set $R$ of formal transformations $r$ of
the form
\begin{equation}
  \label{eq:reaction} 
  \sum_{x\in X} s^-_{xr} x \longrightarrow
  \sum_{x\in X} s^+_{xr} x \,.
\end{equation}
The non-negative coefficients $s^-_{xr}$ and $s^+_{xr}$ are known as the
\emph{stoichiometric coefficients}. If $s^-_{xr}=s^+_{xr}=0$, then the
entity $x$ does not take part in reaction $r$. If $s^-_{xr}>0$ then $x$ is
a reactant of $r$, and if $s^+_{xr}>0$, then $x$ is said to be the product
of $r$. The matrix $\mathbf{S}$ with entries
$\mathbf{S}_{xr}\coloneqq s^+_{xr}-s^-_{xr}$ is known as the stoichiometric
matrix of the reaction network. Note that all reactions are considered
irreversible. A reversible reaction thus is represented by two reactions:
$r$ and its inverse $\bar r$ satisfying $s^-_{x\bar{r}}=s^+_{xr}$ and
$s^+_{x\bar{r}}=s^-_{xr}$. Moreover $\bar{\bar{r}}=r$.

It is natural to ask under which conditions there actually is a set $M$ of
chemical species, i.e., molecules and an assignment $\varphi:X\to M$ such
that, for each $r\in R$, the assignment of concrete molecules $\varphi(x)$
to the abstract entities $x\in X$ results in real-world chemical reaction
network. We write $\varphi(r)$ for the reaction
\begin{equation}
  \label{eq:assigned-r}
  \sum_{x\in X} s^-_{xr} \varphi(x) \longrightarrow
  \sum_{x\in X} s^+_{xr} \varphi(x) 
\end{equation}
expressed in terms of the chemical species. The abstract reaction networks
that satisfy the laws of mass and atom conservation were characterized in
\cite{Mueller:22a}. Assuming that all reactions are at least conceptually
reversible, it is necessary and sufficient that the abstract reaction
network is \emph{conservative} \cite{Horn:72a}, i.e., that a strictly
positive reaction invariant exists. In practice the molecular mass serves
this purpose. An assignment $\varphi:X\to M$ can always be found in terms
of abstract sum formulas composed of a number of formal moieties determined
by the rank of the stoichiometric matrix $\mathbf{S}$. From these sum
formulas, it is then also possible to assign molecular graphs that are
``chemical'' in the sense that vertices and edges are labeled by atoms
types and bond orders, and chemical valence rules are satisfied
\cite{Mueller:22a}. In other words, an assignment exists if $(X,R)$ is
conservative and are free to choose $M$ at will from the entire universe
$\mathfrak{C}$ of chemical graphs.

It remains an open question, however, if and how more detailed chemical
constraints can be accommodated. Most naturally, it is of immediate
interest to restrict molecules to particular sets or classes of molecules
and to limit reactions to particular types of chemical transformations,
such as named reactions \cite{Li:25}. Here we are concerned with the basic
computational problems that arise in this context. We first ask how a
abstract reaction networks $(X,R)$ can be populated with the given set of
molecules such that each reaction is balanced. We will show here that this
Sudoku-like problem, which we will refer to as \textsc{Balanced Bijection},
is NP-complete and can be phrased as a subgraph isomorphism problem. In a
refined version of the problem, one may additionally require that each
reaction follows one of a prescribed set of reaction mechanisms, specified
as graph-transformation rules \cite{Andersen:16a}, or, equivalently, as
subgraphs of \emph{Imaginary Transition States} \cite{Fujita:86} in the
sense of Hendrickson's classification of reactions \cite{Hendrickson:97},
see also \cite{GonzalezLaffitte:24b}.  We shall see that this additional
restriction does not affect the computational complexity of the problem.

This contribution is organized as follows. We first introduce the necessary
notation and formalize the problem in a very general version and show that
it can be mapped subgraph isomorphism problems. Moreover, we convince
ourselves that the reactions can be restricted with polynomial-time effort
to reactions with pre-defined reaction centers of bounded size.  We then
consider the simplest version, in which the list of molecules is given
completely and only a bijection between abstract entities in the network
and the set of molecules needs to be computed. This problem already is
NP-complete. We describe a backtrace-and-prune algorithm inspired by the
VF2 approach \cite{Cordella:04,Juettner:18} to the subgraph isomorphism
problem that solves interestingly large problems efficiently. Detailed
benchmarking shows that the solutions are practically achievable with
limited computational resources for reaction networks with more than a
hundred species and reactions.  Concluding remarks focus on more general
versions of this problem and the limits of the present approach.

\section{Theory}

\subsection{Chemical Graphs} 

For simplicity of presentation we consider here only neutral molecules that
can be represented with Lewis formulas that do not require unpaired
electrons and explicit charges. In this case we work with an universe
$\mathfrak{C}$ of vertex-labeled \emph{chemical graphs} that allow concise
mathematical definition and are equivalent to Lewis formulas.  Bonding
electron pairs are denoted by edges and non-bonding non-bonding electron
pairs are denoted by loops. The bond order is determined by the number of
parallel edge, i.e., chemical graph are multigraphs with loops.  The degree
$\deg(u)$ of a vertex is the number of incident edges plus twice the number
of loops. The degree of each vertex $\deg(u)$ is then determined completely
by the atom label $a(u)$ in coincides with its number of outer-shell
electrons.
\begin{definition}
  A chemical graph $G\in\mathfrak{C}$ is a vertex labeled multigraph with
  loops such that the degree $\deg(u)$ of a vertex $u$ in $G$ is a function
  of the vertex label $a(u)$.
\end{definition}
This definition of chemical multigraphs is explored in more detail in
\cite{Flamm:25b}. In particular, the we may assume that number of atom
types, i.e., distinct vertex labels, and the maximal vertex degree is
bounded. We do allow, however, that the graphs, i.e., the molecules become
arbitrarily large. The structural formula of a molecule is a connected
chemical graph.  We write $\mathfrak{M}\subseteq\mathfrak{C}$ for the set
of all \emph{molecular graphs}, i.e., the connected chemical graphs.
Throughout, we assume that the map $\varphi:X\to M$ is injective, i.e.,
different abstract entities in $(X,R)$ denote different molecules.

The \emph{sum-formula} $\sigma(m)$ of a molecule $m$ is then simply the
multi-set of vertex labels (or -- more conveniently -- a vector indexed by
the atom types $a$ with entries $\sigma_a(m)$) counting the number of atoms
of type $a$ in molecule $m$. Note that two distinct molecules $m,m'\in M$
are isomers if $\sigma(m)=\sigma(m')$.

\subsection{Chemical Reactions} 

A chemical reaction $r$, correspondingly, is a pair $G \longrightarrow H$
of two chemical graphs with vertex (atom) sets $V(G)$ and $V(H)$,
respectively. Since chemical reactions preserve atoms, the total number of
atoms of each type is the same in reactants and the products of every
reaction $r\in R$:
\begin{equation}
  \label{eq:balanced}
  \sum_{x\in X} s^-_{xr} \sigma(\varphi(x)) =
  \sum_{x\in X} s^+_{xr} \sigma(\varphi(x))
  \quad{or, equivalently}\quad
  \sum_{x\in X}  \sigma(\varphi(x)) \mathbf{S}_xr = 0\,.
\end{equation}
We say that a reaction $r$ is \emph{balanced} if it satisfies
Eq.(\ref{eq:balanced}). We will assume through this contribution that all
reactions must be balanced. In this case $(\sigma_a(x))_{x\in X}$ is a left
kernel vector of $\mathbf{S}$. This statement simply captures the fact the
number of atom of each type $a$ is a reaction invariant for all reactions.
\begin{obs}
  A reaction $G \longrightarrow H$ is balanced if and only there is a
  bijection $\alpha:V(G)\to V(H)$ that conserved vertex labels, i.e.,
  $a(\alpha(u))=a(u)$.
\end{obs}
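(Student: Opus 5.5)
The plan is to reduce the statement to the elementary fact that two finite sets partitioned into labeled classes admit a class-preserving bijection exactly when the classes have equal cardinalities. First I need to pin down what ``balanced'' means for a reaction $G\longrightarrow H$ written as a pair of chemical graphs. Here $G$ is the disjoint union of the reactant molecules, taken with multiplicity $s^-_{xr}$, and $H$ is the disjoint union of the products, taken with multiplicity $s^+_{xr}$. Since the sum formula is additive under disjoint union, $\sigma_a(G)=\sum_x s^-_{xr}\sigma_a(\varphi(x))$, and similarly for $H$. Eq.~(\ref{eq:balanced}) is therefore equivalent to $\sigma_a(G)=\sigma_a(H)$ for every atom type $a$, where $\sigma_a(G)=|\{u\in V(G): a(u)=a\}|$.

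For the direction ``$\Leftarrow$'', take a label-preserving bijection $\alpha:V(G)\to V(H)$. For each atom type $a$, $\alpha$ restricts to a map from $V_a(G)\coloneqq\{u\in V(G): a(u)=a\}$ into $V_a(H)$. This restriction is injective because $\alpha$ is. It is also surjective: any $v\in V_a(H)$ has a preimage $u=\alpha^{-1}(v)$, and $a(u)=a(\alpha(u))=a$. Hence $|V_a(G)|=|V_a(H)|$ for all $a$, which is exactly balance.

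For the direction ``$\Rightarrow$'', assume $|V_a(G)|=|V_a(H)|$ for every $a$. For each atom type choose an arbitrary bijection $\alpha_a:V_a(G)\to V_a(H)$. The sets $V_a(G)$ partition $V(G)$, and the sets $V_a(H)$ partition $V(H)$, so the union $\alpha\coloneqq\bigcup_a\alpha_a$ is a well-defined bijection $V(G)\to V(H)$. By construction it preserves labels.

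I do not expect a real obstacle here. The only point that needs care is the first step: stating explicitly that the stoichiometric multiplicities are realized as disjoint copies of the molecular graphs, so that the vector equation in Eq.~(\ref{eq:balanced}) really is the statement about label-class cardinalities. It is also worth remarking that $\alpha$ is not required to respect any edges, since bonds change during the reaction. The observation is therefore purely about vertex multisets, and the chemical-graph structure plays no role beyond supplying the vertex labels.
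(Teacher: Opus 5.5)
Your proof is correct and matches the paper, which states this as an observation without proof and relies on exactly the counting argument you give. The sum formula is the multiset of vertex labels, and stoichiometric multiplicities are realized as disjoint copies of the molecular graphs, so balance means $|V_a(G)|=|V_a(H)|$ for every atom type $a$, which is equivalent to the existence of a label-preserving bijection.
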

The bijection $\alpha$ is called the \emph{atom-to-atom map} (AAM).  The
connected components of $G$ and $H$ are the reactant and product molecules,
respectively. In this setting, the stoichiometric coefficients are
represented by a corresponding number of copies the molecule graphs.  It is
not difficult to see that, since every reaction $r$ is balanced, it also
preserves the sum of the vertex degrees, and thus the total number of
electron pairs.

We remark that it is possible in principle to extend this formalism to
Lewis formulas in general, thus accommodating unpaired electrons and
explicit charges. Details on this more general setting can be found in
\cite{Holzschuh:26w}; the additional details do not affect any of the
theoretical results presented in this contribution.

Prop.~47 and Prop.~48 in \cite{Mueller:22a} can be rephrased in the
language of this contribution a follows:
\begin{proposition}
  \label{prop:exists} 
  There exists an injective map $\varphi:X\to \mathfrak{M}$ such that every
  reaction is balanced if and only if the reaction network $(X,R)$ is
  conservative.
\end{proposition}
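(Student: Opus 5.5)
We prove the two implications separately. Only the sum formulas $\sigma(\varphi(x))$ enter the balance condition, so the argument is mostly linear algebra over $\mathbf{S}$; the realization by connected chemical graphs is a separate, combinatorial step.

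\emph{Necessity.} Let $\varphi:X\to\mathfrak{M}$ be injective with every reaction balanced. By Eq.~(\ref{eq:balanced}), for each atom type $a$ the vector $(\sigma_a(\varphi(x)))_{x\in X}$ lies in the left kernel of $\mathbf{S}$. Summing over all atom types gives $m_x\coloneqq\sum_a\sigma_a(\varphi(x))$, the number of atoms of $\varphi(x)$, which is also a left kernel vector. Every molecular graph has at least one vertex, so $m_x\ge 1$ for all $x$. Hence $m$ is a strictly positive reaction invariant and $(X,R)$ is conservative. (Weighting atom types by atomic masses gives the molecular mass instead.)

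\emph{Sufficiency.} Assume $(X,R)$ is conservative. The left kernel of $\mathbf{S}$ is a rational subspace, and it contains a strictly positive vector. The positive vectors form an open cone in this subspace, and rational vectors are dense in it, so we can choose a strictly positive integer invariant $m\in\mathbb{N}^X$. Using only a single atom type would give identical sum formulas whenever $m_x=m_y$, so isomers would be unavoidable. Instead I would follow the idea of Prop.~47 in \cite{Mueller:22a}:
\begin{itemize}
\item Pick an integer basis $m^{(1)},\dots,m^{(k)}$ of the left kernel, with $k=|X|-\operatorname{rank}\mathbf{S}$.
\item Add a large multiple of $m$ to each so that every $m^{(i)}$ is strictly positive.
\item Interpret $m^{(i)}_x$ as the number of copies of a formal moiety of type $i$ in $x$.
\end{itemize}
Every linear combination of the $m^{(i)}$ is again an invariant, so these sum formulas balance every reaction.

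Distinct entities may still share a sum formula. That is harmless, because injectivity only requires distinct graphs, and isomers with equal $\sigma$ are allowed. So the remaining task is: for each $x$, construct a connected chemical graph with prescribed moiety counts, such that different entities with the same counts receive non-isomorphic graphs.

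\emph{Realization as graphs.} This is the step I expect to be the main obstacle (Prop.~48 in \cite{Mueller:22a}). My plan is to realize each moiety as a fixed neutral fragment built from atoms with fixed valences, for example carbon chains capped by hydrogens, with two open valences available for linking. Given $n\ge1$ moiety units, I would join them into a connected chain and close the two ends with hydrogens, so that every atom degree is correct and the graph is chemical. When several entities need the same multiset of moieties, I would distinguish them by the linking topology, for instance by different branching patterns or ring closures. A chain of $n$ carbon-based units admits many non-isomorphic arrangements once $n$ is moderately large, and we may enlarge $m$ by a common factor to make $n$ as large as needed. I would first check that scaling $m$ preserves both positivity and the invariant property, and that enough pairwise non-isomorphic connected graphs exist with the given formula. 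A simple sufficient construction is alkane isomers, whose number grows unboundedly with the carbon count.
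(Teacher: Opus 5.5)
Your necessity argument is correct. So is the linear algebra in the sufficiency part: a strictly positive integer invariant exists, and adding multiples of it to an integer basis yields positive integer moiety vectors. This matches the route the paper takes by citing Props.~47 and~48 of \cite{Mueller:22a}: moiety sum formulas from the left kernel, then chemical graphs.

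The gap is in the graph realization step. Capping the chain of units with two hydrogens adds the same offset to every molecule: $\sigma(\varphi(x))=\sum_i m^{(i)}_x\sigma(u_i)+2\sigma(\mathrm{H})$, where $u_i$ is the fragment for moiety $i$. Hence for each reaction $\sum_x\mathbf{S}_{xr}\sigma(\varphi(x))=2\sigma(\mathrm{H})\sum_x\mathbf{S}_{xr}$. This vanishes only if $r$ conserves the number of molecules, and nothing in the hypotheses guarantees that.

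Already $A+B\longrightarrow C$ fails: $\mathrm{C}_a\mathrm{H}_{2a+2}+\mathrm{C}_b\mathrm{H}_{2b+2}$ has two more hydrogens than $\mathrm{C}_{a+b}\mathrm{H}_{2a+2b+2}$. So your alkane-isomer construction does not give balanced reactions. Scaling $m$ does not help, because the offset does not depend on $m$. Using ring closures to tell isomers apart makes it worse: each ring changes the hydrogen count by two, so entities with equal moiety counts would get different formulas. The paper's condensation construction runs into the same effect, which is why it needs the by-product $XY$.

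The fix is a realization whose sum formula is exactly linear in the counts, with no per-molecule offset. One option is to close the units into a ring instead of capping them. More simply, drop the moiety basis:
\begin{itemize}
\item Scale a single positive integer invariant $m$ so that $m_x\ge|X|+2$ for all $x$.
\item Realize $x$ as a monocyclic hydrocarbon with $m_x$ carbons: a $k$-membered carbon ring carrying a pendant chain of $m_x-k$ carbons, with $3\le k\le m_x$.
\end{itemize}
A connected all-single-bond carbon skeleton with as many C--C bonds as carbons carries $4n-2n=2n$ hydrogens. So every such graph has formula $m_x\,\sigma(\mathrm{CH}_2)$, and every reaction is balanced because $m$ is an invariant. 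The length of the unique carbon cycle is an isomorphism invariant, so this gives $m_x-2\ge|X|$ pairwise non-isomorphic graphs per formula. That is enough to make $\varphi$ injective.
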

It is also worth noting that it can be checked in polynomial time whether a
reaction network $(X,R)$ is conservative, i.e., whether there is a strictly
positive reaction invariant \cite{Mueller:22a}. In this contribution we
consider several related problems that arise when additional restrictions
are imposed on the admissible molecules and/or the admissible types of
reactions.

\begin{problem}
  \label{problem:1} 
  Given an abstract reaction network $(X,R)$ and a set $M\subseteq
  \mathfrak{M}$ of molecules, is there an injective map $\varphi: X\to M$
  such that all reactions $\varphi(r)$ are balanced?
\end{problem}

Given an abstract reaction network $(X,R)$ and an assignment $\varphi:X\to
M$ the total size $N$ of the input is given by $|X|+|R|$ plus the total
number of non-zero stoichiometric coefficient (i.e., the size of $(X,R)$,
plus the total size of all allowed molecular graphs in $M$.
\begin{lemma}
  \label{lem:checkvarphi}
  It can be decided in linear time in $N$ whether $\varphi$ is a valid
  assignment of molecular graphs for the abstract network $(X,R)$, whether
  every reaction in $(X,R)$ is balanced for given $\varphi$.
\end{lemma}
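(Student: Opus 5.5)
The plan is a direct verification in three passes, each linear in $N$: check that $\varphi$ is a valid (injective) assignment into $M$, compute the sum formulas $\sigma(m)$ of the molecules in $M$, and then evaluate Eq.~(\ref{eq:balanced}) reaction by reaction. Throughout I use the standing assumption that the number of atom types is bounded by a constant. Hence each $\sigma(m)$ is a vector of constant length, and adding or comparing two such vectors costs $O(1)$ word operations. The integer entries are bounded by $N$ times the largest stoichiometric coefficient, so they fit in machine words under the usual RAM convention.

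\emph{Step 1 (validity of $\varphi$).} I assume $\varphi$ is given as a table indexed by $X$ whose entries are indices into $M$. Checking that every entry lies in $M$ takes $O(|X|)$. For injectivity, I allocate a boolean array of length $|M|$ and mark each $\varphi(x)$, reporting a collision if an entry is already marked; this costs $O(|X|+|M|)$. If one also wants to confirm that each $m\in M$ is a molecular graph, i.e.\ a connected chemical graph, this is a BFS plus a degree-versus-label check per vertex. That is linear in the size of $m$, hence $O(N)$ summed over $M$.

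\emph{Step 2 (sum formulas).} For each $m\in M$ I scan its vertex list once and increment the counter $\sigma_{a(u)}(m)$. The total cost is $\sum_{m}|V(m)|\le N$.

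\emph{Step 3 (balance).} For each $r\in R$, I iterate over the nonzero stoichiometric entries $\mathbf{S}_{xr}$, given as sparse lists of the pairs $(x,s^\pm_{xr})$, and accumulate $\sum_x \mathbf{S}_{xr}\,\sigma(\varphi(x))$ into a constant-length vector. The reaction is balanced exactly when this vector is zero. Each nonzero coefficient is touched once, and each touch costs $O(1)$ because the atom alphabet is bounded. The total over all reactions is therefore $O(|R|+\#\text{nonzeros})\le O(N)$. Summing the three steps gives $O(N)$.

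The only delicate point is the cost model. The linear bound needs a bounded number of atom types, which the paper assumes via \cite{Flamm:25b}, and it needs arithmetic on the counts to be unit-cost. Without the bounded-alphabet assumption one can instead store each $\sigma(m)$ sparsely, with size at most $|V(m)|$. The per-reaction work then becomes the sum of the sizes of the participating $\sigma(\varphi(x))$, which is no longer obviously $O(N)$ when a molecule appears in many reactions. For this reason I rely on the constant alphabet assumption rather than attempt the general case.
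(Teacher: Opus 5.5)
Your proposal is correct and follows essentially the same route as the paper: parse the molecules into constant-length sum-formula vectors, using the bounded number of atom types, and then check Eq.~(\ref{eq:balanced}) for each reaction with work proportional to the number of nonzero stoichiometric coefficients. Your explicit injectivity check and your remark on unit-cost arithmetic are harmless additions that the paper leaves implicit.
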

\begin{proof}
  It suffices to first parse, in linear time, all molecular graphs $m\in
  \varphi(X)$ into their sum formulas $\sigma(m)$, i.e., into vectors of
  bounded size, and then to check for each reaction $r\in R$ whether
  equ.(\ref{eq:balanced}) holds. The total effort for this step is
  proportional to the total number of non-zero stoichiometric coefficients,
  and thus in $O(N)$.
\end{proof}

\subsection{Reaction mechanisms} 
\label{sect:rm}

For the latter, we formalize the idea of a \emph{reaction mechanism} as
follows. First, we note that, due to the existence of the bijection $\alpha$, the \emph{imaginary transition state graph} $\Upsilon\coloneqq
\Upsilon(G,H,\alpha)$ with vertex set $V(\Upsilon)=V(G)$, vertex labels
$a(u)$ for all $u\in V(G)$ and edges and loops defined as follows: For each
pair $u,v\in V(G)$ we match as many $u,v$ edges in $G$ and
$\alpha(u),\alpha(v)$ edges in $H$. Remaining unmatched $u,v$ edges in $G$
are labeled `\texttt{--}', while remaining unmatched edges in $H$ are
labeled `\texttt{+}'. In chemical terms, edges labeled `\texttt{--}'
correspond to a reduction of bond order or bond breaking, while
`\texttt{+}' denotes the increase of bond order or bond formation (see Fig.~\ref{fig:realizability-problem}B). Loops are
treated analogously.  We refer to \cite{GonzalezLaffitte:24b,Laffitte:25a}
for an equivalent formalism that uses edge labels instead of multiple
edges. The \emph{reaction center} of $G \longrightarrow H$ consists of the
vertices in $V(G)$ that are incident to an edge in $\Upsilon$ that is
labeled `\texttt{--}' or `\texttt{+}'.  A reaction mechanism is the
characterized by a (not necessarily induced) vertex labeled subgraph
$\Gamma$ of $\Upsilon$ that contains the reaction center and all edges
labeled `\texttt{--}' or `\texttt{+}'. From $\Gamma$ one can then extract
the graphs $\Gamma^-$ as $V(\Gamma^-)=V(\Gamma^-)$ and the unlabeled and
`\texttt{--}' labeled edges. Correspondingly, $\Gamma^+$ has the vertices
$V(\Gamma^+)=V(\Gamma^-)$ and the unlabeled and `\texttt{+}' labeled edges
of $\Gamma$. By construction $\Gamma^-$ is a subgraph of $G$ and $\Gamma^+$
is a subgraph $H$. The restriction of the AAM $\hat\alpha$ to $V(\Gamma)$
is label-preserving bijection $V(\Gamma^-)=V(\Gamma^+)$. The triple
$(\Gamma^-,\Gamma^+,\hat\alpha)$ defined a reaction mechanism in which a
local subgraph $\Gamma^-$ is transformed into subgraph $\Gamma^+$. We note
in passing that the triple $(\Gamma^-,\Gamma^+,\hat\alpha)$ can be
re-interpreted as double-pushout graph rewriting rule, see
e.g.\ \cite{Beier:26a}.

\begin{figure}[htbp]
  \centering
  \resizebox{\linewidth}{!}{\begin{tikzpicture}[x=1cm, y=1cm, font=\rmfamily\footnotesize,
  line cap=round, line join=round,
  fmolC/.style={aC,minimum size=3.8mm,
    font=\figatomfont\bfseries\tiny},
  fmolO/.style={aO,minimum size=3.8mm,
    font=\figatomfont\bfseries\tiny},
  fmolH/.style={aH,minimum size=3.8mm,
    font=\figatomfont\bfseries\tiny},
  fmolR/.style={aR,minimum size=3.8mm,
    font=\figatomfont\itshape\tiny}]

  \draw[jpanel] (0.00,3.20) rectangle ( 6.55,7.35);
  \draw[jpanel] (6.85,3.20) rectangle (13.60,7.35);
  \draw[jpanel] (0.00,0.35) rectangle (13.60,2.90);

  \node[badge] at (0.36,7.02) {A};
  \node[ptitle] at (0.68,7.02) {Abstract reaction:};
  \node[font=\rmfamily\bfseries\footnotesize,text=ink] at (4.65,7.01)
    {$A+B\longrightarrow C+D$};
  \node[ptitle] at (0.68,6.5) {Assignment:};

  \begin{scope}[xshift=5.5mm]
  \foreach \i/\name/\y in {1/A/5.80, 2/B/5.18, 3/C/4.00, 4/D/3.52}{
    \node[lab] (x\i) at (0.92,\y) {$\name$};
  }

  \node[fmolC] (a1) at (2.55,5.80) {C};  \node[fmolC] (a2) at (3.25,5.80) {C};
  \node[fmolO] (a3) at (3.25,6.43) {O};  \node[fmolO] (a4) at (3.95,5.80) {O};
  \node[fmolH] (a5) at (4.65,5.80) {H};
  \draw[bnd] (a1)--(a2);  \draw[bnd] (a2)--(a4);  \draw[bnd] (a4)--(a5);
  \dbond{a2}{a3}

  \node[fmolC] (b1) at (2.75,5.18) {C};  \node[fmolO] (b2) at (3.45,5.18) {O};
  \node[fmolH] (b3) at (4.15,5.18) {H};
  \draw[bnd] (b1)--(b2);  \draw[bnd] (b2)--(b3);

  \node[fmolC] (c1) at (2.20,4.00) {C};  \node[fmolC] (c2) at (2.90,4.00) {C};
  \node[fmolO] (c3) at (2.90,4.63) {O};  \node[fmolO] (c4) at (3.60,4.00) {O};
  \node[fmolC] (c5) at (4.30,4.00) {C};
  \draw[bnd] (c1)--(c2);  \draw[bnd] (c2)--(c4);  \draw[bnd] (c4)--(c5);
  \dbond{c2}{c3}

  \node[fmolH] (d1) at (2.75,3.52) {H};  \node[fmolO] (d2) at (3.45,3.52) {O};
  \node[fmolH] (d3) at (4.15,3.52) {H};
  \draw[bnd] (d1)--(d2);  \draw[bnd] (d2)--(d3);

  \foreach \i/\t in {1/a1, 2/b1, 3/c1, 4/d1}{ \draw[map] (x\i)--(\t); }
  \end{scope}

  \node[badge] at (7.21,7.02) {B};
  \node[ptitle] at (7.55,7.02) {Imaginary transition state};

  \node[fmolC] (C)  at ( 9.65,5.48) {C};
  \node[fmolO] (O1) at (10.75,5.48) {O};
  \node[fmolH] (H)  at (10.75,4.40) {H};
  \node[fmolO] (O2) at ( 9.65,4.40) {O};
  \node[fmolO] (Oc) at ( 9.65,6.38) {O};
  \node[fmolR] (R)  at ( 8.75,5.48) {R};
  \node[fmolR] (Rp) at (11.65,5.48) {R$'$};
  \node[fmolH] (Hb) at ( 8.75,4.40) {H};      

  \draw[bctx] (C)--(R);  \draw[bctx] (O1)--(Rp);  \draw[bctx] (O2)--(Hb);
  \dbond{C}{Oc}

  \draw[bfrm] (C)--(O1)  node[midway,above=2pt,tagp] {$+$};
  \draw[bbrk] (O1)--(H)  node[midway,right=3pt,tagm] {$-$};
  \draw[bfrm] (H)--(O2)  node[midway,below=2pt,tagp] {$+$};
  \draw[bbrk] (O2)--(C)  node[midway,left=3pt,tagm]  {$-$};

  \begin{scope}[shift={(7.15,3.66)}]
    \draw[bbrk] (0.00,0)--(0.50,0);  \node[capt,font=\rmfamily\footnotesize,anchor=west] at (0.60,0) {broken};
    \draw[bfrm] (2.00,0)--(2.50,0);  \node[capt,font=\rmfamily\footnotesize,anchor=west] at (2.60,0) {formed};
    \draw[bctx] (4.00,0)--(4.50,0);  \node[capt,font=\rmfamily\footnotesize,anchor=west] at (4.60,0) {preserved};
  \end{scope}

  \node[badge] at (0.36,2.58) {C};

  \node[ptitle] at (0.68,2.58) {Select $\rho(r)\in\mathfrak{R}$};
  \draw[rulecard] (0.55,1.80) rectangle (3.68,2.22);
  \draw[rulesel]  (0.55,1.20) rectangle (3.68,1.68);
  \draw[rulecard] (0.55,0.60) rectangle (3.68,1.02);

  \draw[bbrk,line width=1.0pt] (0.80,2.01)--(1.10,2.01);
  \node[capt,anchor=west] at (1.28,2.01) {bond cleavage};

  \draw[bbrk,line width=1.0pt] (0.75,1.44)--(0.94,1.44);
  \draw[bfrm,line width=1.0pt] (1.00,1.44)--(1.19,1.44);
  \node[font=\rmfamily\bfseries\scriptsize,text=pfrm!65!black,
        anchor=west] at (1.32,1.44) {esterification};

  \draw[bfrm,line width=1.0pt] (0.80,0.81)--(1.10,0.81);
  \node[capt,anchor=west] at (1.28,0.81) {C--C coupling};

  \draw[rulearr,draw=pfrm!75,line width=1.0pt]
    (3.68,1.44)--(4.25,1.44);

  \node[ptitle] at (4.48,2.58)
    {Test whether $\rho(r)$ explains the assigned reaction $\varphi(r)$};

  \node[fmolC] (rc1) at (4.62,1.44) {C};
  \node[fmolC] (rc2) at (5.18,1.44) {C};
  \node[fmolO] (ro1) at (5.18,2.02) {O};
  \node[fmolO] (ro2) at (5.74,1.44) {O};
  \node[fmolH] (rh1) at (6.30,1.44) {H};
  \draw[bnd,line width=0.65pt] (rc1)--(rc2);
  \draw[bnd,line width=0.65pt] ([xshift=-1pt]rc2.north)--([xshift=-1pt]ro1.south);
  \draw[bnd,line width=0.65pt] ([xshift= 1pt]rc2.north)--([xshift= 1pt]ro1.south);
  \draw[bbrk,line width=1.0pt] (rc2)--(ro2);
  \draw[bnd,line width=0.65pt] (ro2)--(rh1);

  \node[capt] at (6.66,1.44) {+};
  \node[fmolC] (rc3) at (7.05,1.44) {C};
  \node[fmolO] (ro3) at (7.61,1.44) {O};
  \node[fmolH] (rh2) at (8.17,1.44) {H};
  \draw[bnd,line width=0.65pt] (rc3)--(ro3);
  \draw[bbrk,line width=1.0pt] (ro3)--(rh2);

  \node[capt,text=ink] at (6.40,0.88) {acetic acid + methanol};

  \draw[rulearr,draw=ink2!70,line width=0.8pt] (8.46,1.44)--(9.02,1.44);
  \node[font=\rmfamily\tiny,text=mut] at (8.74,1.74) {apply};

  \node[fmolC] (pc1) at (9.30,1.44) {C};
  \node[fmolC] (pc2) at (9.86,1.44) {C};
  \node[fmolO] (po1) at (9.86,2.02) {O};
  \node[fmolO] (po2) at (10.42,1.44) {O};
  \node[fmolC] (pc3) at (10.98,1.44) {C};
  \draw[bnd,line width=0.65pt] (pc1)--(pc2);
  \draw[bnd,line width=0.65pt] ([xshift=-1pt]pc2.north)--([xshift=-1pt]po1.south);
  \draw[bnd,line width=0.65pt] ([xshift= 1pt]pc2.north)--([xshift= 1pt]po1.south);
  \draw[bfrm,line width=1.0pt] (pc2)--(po2);
  \draw[bnd,line width=0.65pt] (po2)--(pc3);

  \node[capt] at (11.32,1.44) {+};
  \node[fmolH] (ph1) at (11.68,1.44) {H};
  \node[fmolO] (po3) at (12.24,1.44) {O};
  \node[fmolH] (ph2) at (12.80,1.44) {H};
  \draw[bnd,line width=0.65pt] (ph1)--(po3);
  \draw[bfrm,line width=1.0pt] (po3)--(ph2);

  \node[capt,text=ink] at (11.05,0.88) {methyl acetate + water};
\end{tikzpicture}}
  \caption{From molecular assignment to reaction-mechanism validation.
  \textbf{(A)} The abstract reaction $A+B\to C+D$ is realized by assigning
  each abstract entity to a candidate molecular graph.
  \textbf{(B)} The imaginary transition state records bonds broken and formed
  by the assigned reaction while retaining the unchanged context.
  \textbf{(C)} A mechanism $\rho(r)\in\mathfrak{R}$ is selected and tested
  to determine whether it explains the assigned reaction $\varphi(r)$; here,
  esterification maps acetic acid and methanol to methyl acetate and water.}
  \label{fig:realizability-problem}
\end{figure}

We may also turn this construction around and start from a pair of chemical
graphs and a reaction mechanism $(\Gamma^-,\Gamma^+,\hat\alpha)$ and two
chemical graphs $G$ and $H$. Now consider a particular subgraph embedding
of $\Gamma^-$ in $G$ and $\Gamma^+$ in $H$. Then the graphs $\hat G$ and
$\hat H$ obtained by deleting from $G$ the edges labeled `\texttt{--}' in
$\Gamma^-$ and from $H$ the edges labeled `\texttt{+}' in $\Gamma^+$. Then
these embeddings \emph{explain} the reaction $G$ to $H$ if (and only if)
the bijection $\hat\alpha$ can extended to a bijection $\alpha:V(G)\to
V(H)$ \cite{GonzalezLaffitte:24b}. Denote by $\mathfrak{R}$ a set of
reaction mechanisms, i.e., triples $(\Gamma^-,\Gamma^+,\hat\alpha)$.  Then
there is an assignment $\rho: R\to\mathfrak{R}$ that associates a mechanism
with each reaction $r\in R$. Importantly, we assume that $|V(\Gamma)|$ is
bounded, i.e., that all reaction mechanisms are determined by subgraphs of
bounded, finite size. Thus $\mathfrak{R}$ is also a finite set. In the
context of graph grammar models of chemistry, see e.g. \cite{Andersen:16a},
this means that all rules have bounded size and that there is also only a
bounded number of possible graph transformation rules.

Restricting the reactions $\varphi(r)$ to certain reaction mechanisms (see Fig.~\ref{fig:realizability-problem}C)
naturally leads to a restricted version of Problem~\ref{problem:1}:
\begin{problem}
  \label{problem:2}
  Given an abstract reaction network $(X,R)$, a set $M\subseteq
  \mathfrak{M}$ of molecules and a set $\mathfrak{R}$ of reaction
  mechanism, is there an injective map $\varphi: X\to M$ such that all
  reactions $\varphi(r)$ are balanced and explained by a reaction mechanism
  $\rho(r)\in\mathfrak{R}$.
\end{problem}

In this case, it is less obvious that a solution to the problem, i.e., a
pair $(\varphi,\rho)$, can indeed be verified efficiently. In fact, this
depends on the boundedness of the patterns defining the reaction mechanisms
and the bounded degree of the chemical graphs. 

\begin{lemma}
  \label{lem:checkrho}
  Let $\mathfrak{R}$ be a set of reaction mechanisms with bounded size, and
  let $G\longrightarrow H$ be a reaction. Then it can be decided in
  polynomial time (w.r.t.\ to the size of $G$) whether there is a mechanism
  $(\Gamma^-,\Gamma^+,\hat\alpha)\in\mathfrak{R}$ that explains the
  reaction $G\longrightarrow H$.
\end{lemma}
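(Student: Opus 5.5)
We prove Lemma~\ref{lem:checkrho} by brute-force enumeration over bounded-size patterns. Since $\mathfrak{R}$ is finite and every mechanism $(\Gamma^-,\Gamma^+,\hat\alpha)$ has at most $k$ vertices for a constant $k$, it suffices to handle one fixed mechanism and multiply the cost by $|\mathfrak{R}|$, which is a constant. Let $n\coloneqq|V(G)|=|V(H)|$; if $G$ and $H$ are not balanced we reject immediately, which can be checked in linear time as in Lemma~\ref{lem:checkvarphi}.

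For a fixed mechanism, we enumerate all pairs of subgraph embeddings $\iota^-:\Gamma^-\hookrightarrow G$ and $\iota^+:\Gamma^+\hookrightarrow H$. These pairs must be compatible with $\hat\alpha$: the labels must agree, and $\iota^+\circ\hat\alpha$ is then a partial atom map sending $\iota^-(V(\Gamma^-))$ to $\iota^+(V(\Gamma^+))$. There are at most $n^{k}$ injective vertex maps for each side, and since edge multiplicities and loop counts are bounded by the maximal degree, checking that a vertex map is a (label-preserving, multiplicity-respecting) subgraph embedding costs $O(k^2)$. Hence there are at most $n^{2k}$ candidate pairs, which is polynomial for fixed $k$. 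The enumeration could be pruned using the bounded degree, but this is not needed for polynomiality.

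For each candidate pair we form $\hat G$ by deleting the images of the `\texttt{--}' edges from $G$, and $\hat H$ by deleting the images of the `\texttt{+}' edges from $H$, with loops treated analogously. By the criterion of \cite{GonzalezLaffitte:24b} quoted above, the embedding explains $G\longrightarrow H$ if and only if the partial bijection induced by $\hat\alpha$ extends to a bijection $\alpha:V(G)\to V(H)$. This extension must be compatible with the remaining structure, i.e., it must be an isomorphism $\hat G\to\hat H$ that agrees with the prescribed map on the reaction-center vertices. This is the crux of the proof: read literally as an unrestricted label-preserving bijection, the extension test is trivial because the graphs are balanced, so it is just a count of labels. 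Read as an isomorphism of $\hat G$ and $\hat H$ that fixes the center, it is an instance of graph isomorphism on bounded-degree graphs.

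The bounded-degree reading is where the main work lies. Our plan is to invoke Luks' polynomial-time isomorphism algorithm for graphs of bounded degree. The multigraph is converted to a simple graph by subdividing edges and marking loops, which keeps the degree bounded, and the prescribed map on the at most $k$ center vertices is enforced by giving these vertices distinct colors. Luks' algorithm handles colored graphs within the same bound. Summing over the constantly many mechanisms and the $O(n^{2k})$ embedding pairs, each tested in polynomial time, gives the polynomial bound. If the intended reading is the weaker one, only a label count is needed per pair, and the overall cost is merely $O(n^{2k+1})$.
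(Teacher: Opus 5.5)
Your proposal is correct and follows essentially the same route as the paper. Both arguments enumerate the polynomially many matches of the bounded-size patterns $\Gamma^-$ in $G$ and $\Gamma^+$ in $H$, build $\hat G$ and $\hat H$, and decide whether they are isomorphic with the $\hat\alpha$-corresponding center vertices fixed. This last step uses Luks' algorithm for bounded-degree graphs, with unique colors on the center vertices. Your explicit handling of multigraphs and loops, and your remark on the weaker, trivial reading of ``explains'', are just more detail than the paper gives, not a different argument.
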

\begin{proof}
  Since $|V(\Gamma^-)|\le K$ for some finite bound $K$, the vertex degree
  of a chemical graphs is bounded by the maximum valency of an atom, it is
  possible to list all subgraph matches of $\Gamma^-$ in $G$ and all
  subgraph matches of $\Gamma^+$ in $H$ in on $O(n^K)$ total time, where
  $n$ is the size of the input graphs $G$ and $H$, respectively. In each
  positive case the graphs $\hat G$ and $\hat H$ can be constructed in
  linear time.  Since both $\hat G$ and $\hat H$, being subgraphs of $G$
  and $H$, have bounded degree, it is possible to test whether they are
  isomorphic in polynomial time using the variant of Luks' algorithm for
  colored graphs \cite{Luks:82} using unique colors for each of the pairs
  of vertices in the embeddings of $\Gamma^-$ in $G$ and $\Gamma^+$ in $H$
  that correspond to each other via $\hat\alpha$. The total running time
  therefore remains polynomial.
\end{proof}

Consider the problem of checking, for a given abstract reaction network
$(X,R)$, an assignment of molecules $\varphi:X\to\mathfrak{M}$ to the abstract entities, and an assignment $\rho:R\to\mathfrak{R}$ of mechanisms to
the reactions, whether all reactions $r$ are balanced and can be explained
by the mechanisms $\rho(r)$. The extra input, $\rho$, compared to
Lemma~\ref{lem:checkvarphi} comprises at most $|R|$ bounded size triples
$(\Gamma^-,\Gamma^+,\hat\alpha)$ and thus is bounded by $N$. By
Lemma~\ref{lem:checkvarphi}, validity of $\varphi$ can be checked in linear
time, and Lemma~\ref{lem:checkrho} yields polynomial effort for validating
$\rho$:
\begin{proposition}
  Let $(X,R)$ be an abstract reaction network,
  $\varphi:X\to\mathfrak{M}$ and $\rho:R\to\mathfrak{R}$. Then it can be
  verified in polynomial time in $N$ whether all reactions $r\in R$ are
  balanced and explained by $\rho(r)$.
\end{proposition}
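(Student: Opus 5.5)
The plan is to split the verification into two independent checks and bound each by a polynomial in $N$. The first check is balance of every reaction under $\varphi$. This is exactly the situation of Lemma~\ref{lem:checkvarphi}. We parse each molecular graph $\varphi(x)$ into its sum formula $\sigma(\varphi(x))$, a vector of bounded length since the number of atom types is bounded. Then, for every $r\in R$, we evaluate Eq.~(\ref{eq:balanced}) using only the non-zero stoichiometric coefficients. All of this costs $O(N)$. If any reaction fails, we reject.

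The second check concerns the mechanisms. For each $r\in R$ we build the concrete reaction $\varphi(r)=G_r\longrightarrow H_r$. Here $G_r$ is the disjoint union of $s^-_{xr}$ copies of $\varphi(x)$ over all $x$, and $H_r$ is built analogously from the $s^+_{xr}$. We then test whether the single prescribed triple $\rho(r)=(\Gamma^-,\Gamma^+,\hat\alpha)$ explains $G_r\longrightarrow H_r$. This is Lemma~\ref{lem:checkrho} with $\mathfrak{R}$ replaced by the singleton $\{\rho(r)\}$, which is trivially of bounded size.

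\begin{itemize}
\item We enumerate the $O(n_r^K)$ embedding pairs of $\Gamma^-$ in $G_r$ and $\Gamma^+$ in $H_r$, where $n_r=|V(G_r)|+|V(H_r)|$.
\item For each pair, we form $\hat G_r$ and $\hat H_r$.
\item We decide, by colored bounded-degree isomorphism (Luks), whether $\hat\alpha$ extends to a label-preserving bijection compatible with the unchanged edges.
\end{itemize}

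Each such test is polynomial in $n_r$. The extra input $\rho$ consists of $|R|$ bounded-size triples, so reading it adds only $O(N)$.

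The main obstacle is relating $n_r$ to $N$. The stoichiometric coefficients are part of the input as numbers, so one might worry that $G_r$ has size proportional to $s^-_{xr}\,|V(\varphi(x))|$, which is exponential in the bit length of $s^-_{xr}$. I would handle this by adopting the convention already implicit in the definition of $N$ and in the paper's remark that "stoichiometric coefficients are represented by a corresponding number of copies of the molecule graphs". Under that convention the coefficients are effectively given in unary, i.e.\ they are counted in $N$ as bounded (typically $\le 2$ or $3$ in chemistry). Then $n_r\le \sum_x (s^-_{xr}+s^+_{xr})\,|V(\varphi(x))| = O(N)$. Summing the per-reaction polynomial cost $\mathrm{poly}(n_r)$ over the $|R|\le N$ reactions gives a total bound of $|R|\cdot\mathrm{poly}(N)$, which is polynomial in $N$.

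Combining the linear-time balance check with this polynomial-time mechanism check establishes the proposition. The answer is positive iff both checks succeed for every $r\in R$.
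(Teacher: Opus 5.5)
Your proposal is correct and follows the paper's argument: it combines the linear-time balance check of Lemma~\ref{lem:checkvarphi} with the polynomial-time mechanism test of Lemma~\ref{lem:checkrho}, applied to each reaction with the single prescribed triple $\rho(r)$, and notes that $\rho$ adds only $|R|$ bounded-size triples, hence $O(N)$, to the input. Your explicit observation that the complexes $G_r,H_r$ have size polynomial in $N$ only when the stoichiometric coefficients are bounded (or given in unary) makes precise an assumption that the paper leaves implicit in its definition of $N$.
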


\subsection{K{\"o}nig graphs}

A reaction network is faithfully represented by a directed bipartite
multigraph $\king(X,R)$ with vertex set $V=X\cup R$, called the K{\"o}nig
graph in the context of hypergraphs. It is also known as the
species-reaction (SR) graph in the literature on (chemical) reaction
networks, see e.g.\ \cite{Feinberg:19}. The edges of $\king(X,R)$ and their
multiplicities are defined by the non-zero stoichiometric coefficient:
There is a directed edge with multiplicity $s^-_{xr}$ from $x$ to $r$ and a
directed edge with multiplicity $s^+_{xr}$ from $r$ to $x$. Note that
the direction is important since otherwise reactants and products cannot be
be distinguished. The graph $\king(X,R)$ is known as the K{\"o}nig graph of
the reaction network. Figure~\ref{fig:koenig}A--C shows how three
formal reactions are encoded as a directed bipartite graph.
\begin{figure}[t]
  \centering
  \resizebox{\linewidth}{!}{\begin{tikzpicture}[
  x=1cm,y=1cm,font=\rmfamily\footnotesize,
  line cap=round,line join=round,
  species/.style={circle,draw=npgBlue!85!black,fill=npgBlue!18,
    line width=0.75pt,minimum size=5.8mm,inner sep=0pt,
    font=\rmfamily\bfseries\scriptsize,text=npgNavy},
  reaction/.style={rounded corners=1.2pt,draw=npgRed!85!black,
    fill=npgSalmon!24,line width=0.75pt,minimum width=6.8mm,
    minimum height=5.2mm,inner sep=1pt,
    font=\rmfamily\bfseries\scriptsize,text=npgRed!72!black},
  reactionrow/.style={rounded corners=1.6pt,draw=npgGrayBlue!48,
    fill=npgBlue!4,line width=0.55pt,minimum width=39mm,
    minimum height=9.2mm,inner sep=2pt},
  reactantcard/.style={rounded corners=1.6pt,draw=npgRed!34,
    fill=npgSalmon!8,line width=0.55pt},
  productcard/.style={rounded corners=1.6pt,draw=npgGreen!40,
    fill=npgTeal!10,line width=0.55pt},
  inedge/.style={-{Stealth[length=1.9mm,width=1.45mm]},
    draw=npgRed!88!black,line width=1.0pt,
    shorten >=1.8pt,shorten <=1.8pt},
  outedge/.style={-{Stealth[length=1.9mm,width=1.45mm]},
    draw=npgGreen!72!black,line width=1.05pt,
    shorten >=1.8pt,shorten <=1.8pt}
]
  \draw[spanel] (-0.35,-0.25) rectangle (4.25,4.00);
  \draw[spanel] ( 4.55,-0.25) rectangle (7.65,4.00);
  \draw[spanel] ( 7.95,-0.25) rectangle (13.50,4.00);

  \node[badge] at (0.01,3.67) {A};
  \node[ptitle] at (0.36,3.67) {Abstract reactions};

  \foreach \i/\y/\lhs/\rhs in {
    1/2.81/{A+B}/C,
    2/1.65/{C+D}/E,
    3/0.49/{F+B}/E}{
    \node[reactionrow] at (1.95,\y) {};
    \node[reaction] at (0.695,\y) {$r_{\i}$};
    \node[font=\rmfamily\bfseries\small,text=npgRed!78!black,inner sep=0pt]
      at (1.805,\y) {$\lhs$};
    \node[font=\rmfamily\bfseries\small,text=npgGreen!58!black,inner sep=0pt]
      at (3.445,\y) {$\rhs$};
    \draw[-{Stealth[length=2.0mm,width=1.5mm]},draw=npgNavy,
      line width=0.85pt] (2.50,\y)--(3.105,\y);
  }

  \node[badge] at (4.91,3.67) {B};
  \node[ptitle] at (5.26,3.67) {Incidences};

  \draw[reactantcard] (4.80,1.98) rectangle (7.40,3.27);
  \node[species] (xin) at (5.26,2.63) {$x$};
  \node[reaction] (rin) at (6.94,2.63) {$r$};
  \draw[inedge] (xin)--(rin)
    node[midway,above=2pt,font=\rmfamily\bfseries\scriptsize,
      text=npgRed!78!black] {$s^-_{xr}$};

  \draw[productcard] (4.80,0.03) rectangle (7.40,1.32);
  \node[reaction] (rout) at (5.26,0.68) {$r$};
  \node[species] (xout) at (6.94,0.68) {$x$};
  \draw[outedge] (rout)--(xout)
    node[midway,above=2pt,font=\rmfamily\bfseries\scriptsize,
      text=npgGreen!55!black] {$s^+_{xr}$};

  \node[badge] at (8.31,3.67) {C};
  \node[ptitle] at (8.66,3.67) {König graph};

  \node[species] (A) at (8.55,2.85) {$A$};
  \node[species] (B) at (8.55,1.67) {$B$};
  \node[species] (F) at (8.55,0.49) {$F$};
  \node[species] (C) at (10.80,2.85) {$C$};
  \node[species] (D) at (10.80,1.33) {$D$};
  \node[species] (E) at (13.05,2.05) {$E$};

  \node[reaction] (r1) at (9.68,2.36) {$r_1$};
  \node[reaction] (r3) at (9.68,0.79) {$r_3$};
  \node[reaction] (r2) at (11.93,2.18) {$r_2$};

  \draw[inedge] (A)--(r1);
  \draw[inedge] (B)--(r1);
  \draw[inedge] (F)--(r3);
  \draw[inedge] (B)--(r3);
  \draw[inedge] (C)--(r2);
  \draw[inedge] (D)--(r2);

  \draw[outedge] (r1)--(C);
  \draw[outedge] (r2)--(E);
  \draw[outedge] (r3) to[out=-18,in=245] (E);
\end{tikzpicture}}
  \caption{Directed K{\"o}nig-graph construction.
    (A) Example reaction network.
    (B) Reactant incidences point from species to reactions
    (red), whereas product incidences point from reactions to species
    (green).
    (C) The resulting directed species--reaction bipartite
    graph $\king(X,R)$.}
  \label{fig:koenig}
\end{figure}

\noindent The formal sums
\begin{equation*}
    \sum_{x\in X} s^-_{xr} x \qquad\text{and}\qquad
    \sum_{x\in X} s^+_{xr} x 
\end{equation*}
in equ.(\ref{eq:reaction}) are called the \emph{complexes} of $(X,R)$ and
consist, for every $r\in R$ of the in-neighbors and out-neighbors,
respectively, with multiplicities given by the stoichiometric coefficients,
and thus the multiplicity of directed $(x,r)$ and $(r,x)$ edges,
respectively. Let
\begin{equation}
  \label{eq:delta}
  \delta(X,R) \coloneqq
  \max_{r\in R} \max \left\{\sum_{x\in X} s^-_{xr},
                            \sum_{x\in X} s^+_{xr}\right\}
\end{equation}
be the maximal in- or out-degree of a reaction vertex in the K{\"o}nig
graph, i.e., the maximal number of molecules appearing as a reactant or
product complex of a reaction in $R$. Denote the set of complexes in
$(X,R)$ by $\mathcal{C}(X,R)$ and let $\mathfrak{M}^{(\delta)}$ be the set
of multisets of molecules with cardinality at most $\delta$. Since the
complexes of $\mathcal{C}(X,R)$ correspond to multisets with a most
$\delta$ elements (counting multiplicities), the construction of
$\mathfrak{M}^{(\delta)}$ ensures that it harbors the image of all
complexes for any possible map $\varphi$:
\begin{obs}
  Let $(X,R)$ be a reaction network, $\delta\coloneqq \delta(X,R)$ given by
  equ.(\ref{eq:delta}), and $\varphi:X\to\mathfrak{M}$ any injective map.
  The for every complex $G\in\mathcal{C}(X,R)$ it holds that
  $\varphi(G)\in\mathfrak{M}^{(\delta)}$.
\end{obs}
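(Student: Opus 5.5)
The claim follows directly from the definition of $\delta(X,R)$ in equ.(\ref{eq:delta}) together with the definition of $\mathfrak{M}^{(\delta)}$; no earlier result is needed. The plan is to fix an arbitrary complex, bound its size by $\delta$, and then check that applying $\varphi$ can only preserve or lower that size while producing a multiset of molecules.

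First, I fix a complex $G\in\mathcal{C}(X,R)$. By the definition of complexes, there is a reaction $r\in R$ such that $G$ is either the reactant complex $\sum_{x} s^-_{xr}x$ or the product complex $\sum_x s^+_{xr}x$. I regard $G$ as a multiset over $X$ in which $x$ has multiplicity $s^\mp_{xr}$. Its cardinality, counted with multiplicities, is $\sum_x s^\mp_{xr}$, which is at most $\delta(X,R)$ because of the maximum over $r$ and over both sides in equ.(\ref{eq:delta}).

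Second, I make precise what $\varphi(G)$ means. It is the image multiset $\sum_x s^\mp_{xr}\varphi(x)$, in which each molecule $m$ appears with multiplicity $\sum_{x\in\varphi^{-1}(m)} s^\mp_{xr}$. Taking images of multisets preserves total cardinality, so $|\varphi(G)|=|G|\le\delta$. Injectivity of $\varphi$ is not actually needed for this bound; it only guarantees that the multiplicities are carried over unchanged. Since every $\varphi(x)\in\mathfrak{M}$, $\varphi(G)$ is a multiset of molecules of cardinality at most $\delta$, and hence $\varphi(G)\in\mathfrak{M}^{(\delta)}$.

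There is no real obstacle here. The only point requiring care is to state consistently that multisets are counted with multiplicity, both in $\delta$ and in $\mathfrak{M}^{(\delta)}$. Under that convention the inequality holds trivially.
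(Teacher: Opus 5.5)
Your argument is correct and matches the paper's own justification: every complex is a multiset of at most $\delta$ entities counted with multiplicity, so its image under $\varphi$ lies in $\mathfrak{M}^{(\delta)}$ by construction. Your remark that injectivity of $\varphi$ is not needed for the cardinality bound is also accurate.
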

For every $U\in \mathfrak{M}^{(\delta)}$ the composition vectors
$\sigma(U)$ is well defined. Thus balanced reactions $U\longrightarrow W$
with $U,W\in \mathfrak{M}^{(\delta)}$ are well-defined by requiring
$\sigma(U)=\sigma(W)$. Moreover, if a set of admissible reaction mechanisms
$\mathfrak{R}$ is given, Lemma~\ref{lem:checkrho} ensures that it can be
checked in polynomial time whether or not $U\longrightarrow W$ can be
explained by a reaction mechanism in $\mathfrak{R}$. Let us denote by
$\mathfrak{E}^*$ the set of balanced reactions, or balanced
reactions with an explanation in $\mathfrak{R}$, respectively. Then
$(\mathfrak{M},\mathfrak{E}^*)$ is in particular a conservative reaction
network and hence there is a well-defined K{\"o}nig graph
$\king(\mathfrak{M},\mathfrak{E}^*)$.  

\begin{theorem}
  \label{thm:subgraphiso}
  Suppose $\delta$ is bounded and $\mathcal{M}$ is finite. Then
  Problem~\ref{problem:1} and Problem~\ref{problem:2} have a \texttt{yes}
  answer if and only if $\king(X,R)$ is isomorphic to an induced subgraph
  of $\king(\mathfrak{M},\mathfrak{E}^*)$.
\end{theorem}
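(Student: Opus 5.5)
The plan is to prove both directions directly from the definitions, reading $\mathcal{M}$ as the given finite molecule set $M$. Accordingly, $\king(\mathfrak{M},\mathfrak{E}^*)$ is taken with species vertices $M$ and reaction vertices all pairs $U\longrightarrow W$ with $U,W\in M^{(\delta)}$ that are balanced (for Problem~\ref{problem:1}) or balanced and explained by some mechanism in $\mathfrak{R}$ (for Problem~\ref{problem:2}). Because $M$ is finite and $\delta$ is bounded, $M^{(\delta)}$ has at most $O(|M|^{\delta})$ elements. Hence this König graph is finite, of polynomial size, and by Lemma~\ref{lem:checkvarphi} and Lemma~\ref{lem:checkrho} it can be built in polynomial time.

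I will fix the notion of embedding to the one natural for König graphs. An embedding is an injective map $\Phi$ that sends species to species and reactions to reactions, preserves edge directions and edge multiplicities (the stoichiometric coefficients), and is \emph{induced}. Induced means the edges between image vertices are exactly the images of the edges of $\king(X,R)$.

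For the forward direction, take a solution $\varphi$ (and $\rho$ for Problem~\ref{problem:2}). Set $\Phi(x)=\varphi(x)$ and $\Phi(r)=\varphi(r)$, the reaction with reactant complex $\sum_x s^-_{xr}\varphi(x)$ and product complex $\sum_x s^+_{xr}\varphi(x)$.
\begin{itemize}
\item By the Observation on complexes, both complexes lie in $M^{(\delta)}$. Since $\varphi(r)$ is balanced (and explained by $\rho(r)$), it is a vertex of $\mathfrak{E}^*$.
\item $\Phi$ is injective on species because $\varphi$ is injective. Injectivity then transfers to reactions: $\varphi(r)=\varphi(r')$ forces equal multisets, and injectivity of $\varphi$ pulls these back to equal complexes, so $r=r'$ (assuming, as is implicit, that $R$ contains no duplicate reactions).
\item Multiplicities are preserved because the edge multiplicity of $(\varphi(x),\varphi(r))$ in the big graph is the multiplicity of $\varphi(x)$ in the reactant multiset, which equals $s^-_{xr}$ by injectivity. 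Products are handled the same way.
\item Inducedness holds because any edge between $\varphi(x)$ and $\varphi(r)$ in the big graph comes from an occurrence of $\varphi(x)$ in a complex of $\varphi(r)$. Again by injectivity, that occurrence comes from $x$ itself.
\end{itemize}

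For the converse, given an induced embedding $\Phi$, define $\varphi\coloneqq\Phi|_X$, which is injective into $M$. The aim is to show $\varphi(r)=\Phi(r)$ as reactions. Balance then follows, and for Problem~\ref{problem:2} so does the existence of an explaining mechanism $\rho(r)\in\mathfrak{R}$, which is available by membership in $\mathfrak{E}^*$.

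The main obstacle is this identification of $\varphi(r)$ with $\Phi(r)$. Inducedness and multiplicity preservation show only that the complexes of $\varphi(r)$ are sub-multisets of those of $\Phi(r)$. An induced subgraph could in principle omit a molecule neighbor of $\Phi(r)$ outside $\varphi(X)$, for instance a spectator molecule that makes $\Phi(r)$ balanced while $\varphi(r)$ is not. I would close this gap by requiring that reaction vertices keep their full in- and out-degree, i.e.\ that the closed neighborhood of each $\Phi(r)$ lies in the image. This costs nothing because reaction degrees are bounded by $\delta$, and it is automatically satisfied by the image constructed in the forward direction, so the equivalence holds with this convention. With it, the in-neighbors of $\Phi(r)$, counted with multiplicity, are exactly $\Phi$ of the in-neighbors of $r$, and likewise for out-neighbors. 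This gives $\Phi(r)=\varphi(r)$ and completes the proof.
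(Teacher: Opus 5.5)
Your proof is correct and follows essentially the paper's argument. In both, a solution $\varphi$ is identified with the embedding that sends each $r=(G\longrightarrow H)$ to the reaction vertex $(\varphi(G),\varphi(H))$ of $\king(\mathfrak{M},\mathfrak{E}^*)$; that vertex exists exactly when $\varphi(r)$ is balanced (and explained), and injectivity of $\varphi$ together with $R$ being a set makes the map an embedding. Your explicit conventions, that species go to species and that each $\Phi(r)$ keeps its full in- and out-neighborhood, are what the paper's case (ii) uses tacitly when it concludes from $(\varphi(G),\varphi(H))\notin\mathfrak{E}^*$ that $\varphi$ cannot extend, and you are right that without the neighborhood condition the converse can fail when $|M|>|X|$, since a spectator molecule outside $\varphi(X)$ can balance $\Phi(r)$.
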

\begin{proof}
  Consider an injective map $\varphi:X\to\mathfrak{M}$ and a reaction
  $r=(G\longrightarrow H)$ in $(X,R)$. Then $q\coloneqq
  (\varphi(G),\varphi(H))\in \mathfrak{E}^*$ if and only if
  $\sigma(\varphi(G))=\sigma(\varphi(H)))$. Since $q$ is uniquely defined
  because $R$ is a set, we may set $q=\varphi(r)$. Now consider the edges
  $(x,r)$ with multiplicity $s^-_{xr}$ and $(r,x)$ with multiplicity
  $s^+_{xr}$ in $\king(X,R)$. Then there are edges
  $(\varphi(x),\varphi(r))$ with multiplicity $s^-_{\varphi(x)\varphi(r)}$
  and $(\varphi(r),\varphi(x))$ with multiplicity
  $s^+_{\varphi(r)\varphi(x)}$ if and only if $s^-_{xr}\ne 0$ and
  $s^+_{xr}\ne 0$, respectively. We distinguish two cases: (i) if
  $\varphi(r)\in\mathfrak{E}^*$ for all $r\in R$, then $\king(X,R)$ is
  isomorphic to a subgraph of $\king(\mathfrak{M},\mathfrak{E}^*)$ and
  $\varphi$ is a witness for a \texttt{yes} instance of
  Problem~\ref{problem:1} or Problem~\ref{problem:2}. (ii) Otherwise, there
  is some reaction $r=(G\longrightarrow H)\in R$ for which
  $\sigma(\varphi(G))\ne\sigma(\varphi(H))$ and thus
  $(\varphi(G),\varphi(H))\notin\mathfrak{E}^*$. In particular, therefore,
  there is no reaction vertex $q\in\king(\mathfrak{M},\mathfrak{E}^*)$ with
  in-neighborhood $\varphi(G)$ and out-neighborhood $\varphi(G)$ with edge
  multiplicities given by $s^-_{\varphi(x)q}$ and $s^+_{\varphi(x)q}$,
  respectively, and hence $\varphi$ does not extend to a subgraph
  isomorphism from $\king(X,R)$ into
  $\king(\mathfrak{M},\mathfrak{E}^*)$. Thus $(X,R)$ and $\mathfrak{M}$
  form a \texttt{no} instance of Problem~\ref{problem:1} or
  Problem~\ref{problem:2} if and only if no injective map $\varphi:X\to R$
  exists that extends to subgraph isomorphism.
\end{proof}

For any finite set $\mathcal{M}$ of molecular graphs, therefore, we can
transform the Problems~\ref{problem:1} and~\ref{problem:2} in polynomial
time into an induced subgraph isomorphism (\textsc{ISI}) problem for
bipartite graphs. The \textsc{ISI} is NP-complete since it in particular
subsumes both the \textsc{Independent Set} and \textsc{Clique} problem. It
remains hard when both $G$ and the putative subgraph $H$ are
disjoint unions of paths, and thus is particular for bipartite graphs
\cite{Damaschke:91}. The NP-completeness proof for \textup{\textsc{Balanced
    Bijection}} below shows, in fact, that the finite variants considered
here are NP-complete: Problem~\ref{problem:1} is NP-hard because
\textup{\textsc{Balanced Bijection}} is a special case. Moreover, the
  same reduction translated to condensation reactions for specific sets of
  oligomers, gives NP-hardness for Problem~\ref{problem:2}.  Together with
Lemmas~\ref{lem:checkvarphi} and~\ref{lem:checkrho}, for bounded
mechanisms, both problems are in NP and hence NP-complete in this finite
setting.

\section{Balanced Bijections}

\subsection{NP-Completeness} 

In this section, we restrict the general realizability question to
scenarios where the molecular search space $\mathfrak{M}$ is known in
advance. Moreover, we ignore the complication of prescribed reaction
mechanisms for the moment and consider the following restricted version
of the Problem~\ref{problem:1}:
\begin{problem} \textup{\textsc{Balanced Bijection.}}
  \label{problem:M=X}
  Given an abstract reaction network $(X,R)$ and set
  $M\subseteq\mathfrak{M}$ of molecules with $|M|=|X|$, is there a
  bijective map $\varphi: X\to M$ such that all reactions are balanced?
\end{problem}

\begin{theorem}
  \label{thm:BB-NPc}
  \textup{\textsc{Balanced Bijection}} is NP-complete. This remains true
  even if every reaction is of the restricted form $x+y\longrightarrow z$
  with $x,y,z\in X$ and $M\subseteq\mathbb{N}$.
\end{theorem}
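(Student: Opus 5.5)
Membership in NP follows directly from Lemma~\ref{lem:checkvarphi}: a bijection $\varphi:X\to M$ is a certificate of size $O(N)$, and checking that every reaction is balanced takes linear time. Hardness is the main task. For the restricted form, each molecule $m\in M$ is identified with a positive integer: its number of atoms of a single type, e.g.\ the number of carbons in a linear alkane, or the oligomer length in a homopolymer, so that $M\subseteq\mathbb{N}$. A reaction $x+y\longrightarrow z$ is then balanced exactly when $\varphi(x)+\varphi(y)=\varphi(z)$. The restricted problem therefore asks for a bijective labeling of the abstract entities by the given integers such that every prescribed triple satisfies an additive constraint.

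I would reduce from \textsc{Numerical 3-Dimensional Matching}, which is strongly NP-complete, or equivalently from \textsc{3-Partition}. Strong NP-completeness is essential: the numbers, and hence molecule sizes, must be polynomially bounded in unary so that the molecular graphs have polynomial total size.
\begin{itemize}
\item \textbf{Source instance.} Take sets $A=\{a_i\}$, $B=\{b_j\}$, $C=\{c_k\}$ of size $n$ each, with $\sum a+\sum b+\sum c=nT$.
\item \textbf{Reformulation.} Replace each $c_k$ by $c'_k\coloneqq T-c_k$. The question becomes whether we can partition into triples with $a_i+b_j=c'_k$.
\item \textbf{Network.} Build a network with $3n$ entities $x_1,\dots,x_n$, $y_1,\dots,y_n$, $z_1,\dots,z_n$ and reactions $x_i+y_i\longrightarrow z_i$ for $i=1,\dots,n$.
\item \textbf{Molecule set.} Let $M$ be the multiset $A\cup B\cup C'$.
\end{itemize}

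The key difficulty is that nothing forces $x_i$ to receive an element of $A$ rather than one of $B$ or $C'$. A balanced bijection is then only a partition of $M$ into triples of the form $p+q=s$, which is not obviously the 3DM structure. I would repair this with offsets that force the type of each value. Add a large constant $L$ to every element of $A$, $2L$ to every element of $B$, and $3L+\dots$ to every element of $C'$, choosing $L$ larger than $3T$. Then $p+q=s$ forces $p,q$ to come from $A$ and $B$ (in either order) and $s$ from $C'$, since any other combination misses the range $[3L,3L+T]$. Sums of two $C'$-class values exceed every available value, and $A+A$ or $B+B$ lands in the wrong band. This gives exactly a numerical 3D matching. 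The numbers stay polynomial in unary, so alkane-like chain molecules of those sizes, or multiple atom types, realize them in polynomial size.

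One subtlety is distinctness: injectivity requires distinct molecules, but the source instance may contain repeated numbers. This can be handled either by using distinct isomers with the same sum formula, or by scaling every value by $3n$ and adding small distinct residues. With the residue approach one must check that the residues still cancel consistently, for instance by giving $C'$ elements residue $r_a+r_b$. I expect this to be the most delicate step. A cleaner alternative is to encode the values in a second atom type, giving every $A$-molecule a unique marker count, and to reduce instead from 3-Partition into 3-Dimensional Matching directly; I would try the isomer argument first, since the paper's molecules need only equal sum formulas.
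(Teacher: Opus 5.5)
Your reduction from \textsc{Numerical 3-Dimensional Matching}, with disjoint reactions $x_i+y_i\longrightarrow z_i$ and band offsets, is a genuinely different route from the paper's, which reduces from \textsc{Hamiltonian Path}. Its arithmetic core is sound. As it stands, however, it does not prove the restricted claim, because you never produce a \emph{set} $M\subseteq\mathbb{N}$ of pairwise distinct integers. An instance of \textsc{Numerical 3-Dimensional Matching} is in general a multiset. Repetitions inside a class (say $a_1=a_2$) survive your offsets, yet \textsc{Balanced Bijection} needs $|M|=|X|$ and a bijection.

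Neither of your repairs removes these repetitions. Isomers do not exist when the molecules \emph{are} natural numbers. They could at most rescue the unrestricted first sentence of the theorem, and even there you would have to exhibit enough pairwise distinct isomers for every repeated value. The residue scheme is circular: giving an element of $C'$ the residue $r_a+r_b$ presupposes which $a$ and $b$ it will be paired with, and that pairing is exactly what the reduction must leave open. The marker-atom variant fails for the same reason. Balance is required for every atom type separately, so a marker unique to an $A$-molecule would have to reappear in a product fixed in advance.

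To close the gap you need one of two things. Either use a strongly NP-hard source problem whose $3n$ numbers are pairwise distinct, which you would have to cite or prove. Or use an encoding in which distinctness is automatic, which is what the paper does. There, vertex $i$ receives $K+s_{i-1}$ with $s_t=2pt+(t^2\bmod p)$ for a prime $n\le p\le 2n$. The $s_t$ are distinct, and the Sidon property makes all edge values $v_i+v_j$ distinct. An offset $K$ exceeding twice the largest code separates vertex values from edge values. The chain $p_k+p_{k+1}\longrightarrow q_k$, together with garbage entities for unused edges, then encodes a Hamiltonian path using numbers of size $O(n^2)$.

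There are two smaller points.

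First, your claim that $A+A$ ``lands in the wrong band'' is false as stated: $(L+a)+(L+a')=2L+(a+a')$ can coincide with a $B$-value $2L+b$. The conclusion survives, but only by a counting argument that you should make explicit. Every sum involving a $C'$-value exceeds $3L+T$, so the $n$ values in $C'$ can never be reactants and must fill the $n$ product slots $z_i$. Only then does your band argument force each pair $x_i,y_i$ to receive one $A$-value and one $B$-value.

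Second, linear alkanes do not realize integers additively. The sum $\mathrm{C}_a\mathrm{H}_{2a+2}+\mathrm{C}_b\mathrm{H}_{2b+2}$ carries two more hydrogens than $\mathrm{C}_{a+b}\mathrm{H}_{2a+2b+2}$, so such reactions are unbalanced. Identifying molecules with integers requires a genuinely one-dimensional composition, such as a single atom type or moiety, as is implicit in $M\subseteq\mathbb{N}$.
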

\begin{proof}
  Clearly the problem is in NP since a bijection $\varphi:X\to M$ serves as
  a certificate that can be checked in linear time by
  Lemma~\ref{lem:checkvarphi}. We prove NP-hardness by reducing from
  \textsc{Hamiltonian Path} in simple undirected graphs $G=(V,E)$ with
  $V=\{1,\dots,n\}$ and $|E|=m$. If $G$ is disconnected and $n>1$, then it
  has no Hamiltonian path and the reduction may output any fixed
  \texttt{no}-instance. One easily checks that $X=\{a,b,c\}$, $R=\{a+b\to
  c\}$ and $M=\{1,2,4\}$ has no balanced bijection $\varphi$ because no two
  distinct elements of $M$ sum to the third. Thus it is a
  \texttt{no}-instance. If $n=1$, $G$ has a Hamiltonian path consisting of
  the single vertex, and the reduction may output the trivial
  \texttt{yes}-instance $X=\{a\}$, $R=\emptyset$, $M=\{1\}$. From now on
  we assume that $n\ge 2$ and $G$ is connected, i.e., $m\ge n-1$.

  The key ingredient is a Sidon encoding of the vertices of $G$. Let $p$
  be a fixed prime number satisfying $n\le p\le 2n$. The existence of $p$
  is guaranteed by Bertrand's postulate, see \cite{Ramanujan:1919} for
  a short proof. Such an integer can be found in linear time, see
  e.g.\ \cite{Gries:78}. For $t=0,\dots,p-1$ we define
  \begin{equation*}
    s_t \coloneqq 2pt + (t^2\bmod p)
  \end{equation*}

  \par\noindent\textbf{Claim:} The set $S_p\coloneqq \{s_0,\dots,s_{p-1}\}$ 
  has the Sidon property, i.e., $s_a+s_b=s_c+s_d$ implies
  $\{a,b\}=\{c,d\}$.
  \begin{small}
  \par\noindent\textit{Proof of Claim.} Writing $r_t\coloneqq t^2\bmod p$
  yields $2p(a+b)+(r_a+r_b)=2p(c+d)+(r_c+r_d)$. Since both residue sums lie
  between $0$ and $2p-2$ we observe that $a+b=c+d$. Moreover, we have
  $2pa+2pb +(a^2+b^2 \bmod p)=2pc+2pd +(c^2+d^2 \bmod p)$ and thus
  $a^2+b^2\equiv c^2+d^2$ $\mod p$. Together this implies $ab\equiv cd$
  $\mod p$, and thus the two unordered pairs are the same in
  $\mathbb{F}_p$, and hence also the same as pairs of integers in
  $\{0,\dots,p-1\}$. \hfill$\triangleleft$ 
  \end{small}
  \medskip

  The sequence $s_t$ is increasing, with consecutive values differing by at
  least $2p-(p-1)=p+1>0$. To each vertex $i$ in $G$ we assign the code
  $c_i\coloneqq s_{i-1}$. Thus the largest vertex code is $c_n=s_{n-1}$.
  Now set $L\coloneqq s_{n-1}$ and $K\coloneqq 2L+1$. For every vertex
  $i\in V$ and every edge $ij\in E$ in $G$ we create the formulas
  \begin{equation}
    \begin{split}
      v_i    &\coloneqq K + c_i\\
      v_{ij} &\coloneqq v_i+v_j = 2K+c_i+c_j
    \end{split}
  \end{equation}
  and set $M\coloneqq\{v_i:i\in V\}\cup\{v_{ij}:ij\in
  E\}\subseteq\mathbb{N}$.  The vertex formulas are pairwise distinct
  because the $s_i$ and hence the $c_i$ are pairwise distinct. The edge
  formulas are pairwise distinct because the sums $v_i+v_j$ are pairwise
  distinct due the Sidon property.  and vertex and edge formulas are
  distinct because of the choice of the offset. Thus $|M|=|V|+|E|=n+m$.

  The entity set $X$ consists of three types: $p_1,\dots,p_n$ are positions
  in a possible Hamiltonian path, $q_1,\dots,q_{n-1}$ are represent the
  edges between consecutive positions in the Hamiltonian path, and
  $g_1,\dots,g_{m-n+1}$ are unused garbage species for the remaining edge
  formulas. We have $|X|=n+(n-1)+(m-n+1)=n+m=|M|$ entities. The reaction
  set comprises for $k=1,\dots,n-1$ the reaction
  \begin{equation}
    p_k+p_{k+1}\longrightarrow q_k
  \end{equation}
  There are no reactions involving the garbage spaces $g_j$.
  \smallskip

  \par\noindent\textbf{Claim:} If $G$ has a Hamiltonian path then
  $(X,R)$ has balanced assignment $\varphi:X\to M$. 
  \par\noindent\textit{Proof of Claim.} Let $i_1,\dots i_n$ be
  a Hamiltonian path in $G$. Set $\varphi(p_k)=v_{i_k}$ for $k=1,\dots,n$.
  Since $i_ki_{k+1}\in E$, we have the edge formula $v_{i_k i_{k+1}}\in
  M$ and hence we can assign $\varphi(q_k)=v_{i_k i_{k+1}}$ for
  $k=1,\dots,n-1$. By construction every reaction is balanced:
  \begin{equation}
    \varphi(p_k)+\varphi(p_{k+1})
    =
    v_{i_k}+v_{i_{k+1}}
    =
    v_{i_k i_{k+1}}
    =
    \varphi(q_k).
  \end{equation}
  All remaining edge formulas are assigned arbitrarily and bijectively to the
  garbage species $g_j$. Thus $\varphi$ is a balanced bijection.
  \hfill$\triangleleft$ 

  \par\noindent\textbf{Claim:} Suppose $\varphi:X\to M$ is a balanced
  bijection for $(X,R)$. Then $G$ has a Hamiltonian path. Recall that
  vertex formulas lie in the interval $[K,K+L]$, while edge formulas lie in
  $[2K,2K+2L]$. Since $K=2L+1$ these intervals are separated, and an edge
  formula plus anything is already too large to be the product of any
  reaction. Now consider a reaction $p_k+p_{k+1}\longrightarrow q_k$.  If
  either reactant is assigned an edge formula, the sum of the reactant
  formulas would be at least $3K$. This is impossible since in $w\in M$
  satisfied $w\le 2K+2L<3K$. Thus both reactant must be assigned a vertex
  formula. Their sum is at least $2K$, hence the product $q_k$ must be
  assigned and edge formula. Since $n\ge 2$, every position species
  $p_1,\dots,p_n$ in at least one of the reactions, and hence all of them
  are assigned a vertex formula. Since there exactly $n$ position species
  and $n$ vertex formulas, the assignment is bijection and hence there is a
  permutation $i_1,\dots, i_n$ of $V$ such that $\varphi(p_k)=v_{i_k}$. For
  each $k$, the reaction $p_k+p_{k+1}\longrightarrow q_k$ is balanced,
  hence $\varphi(q_k)=v_{i_k}+v_{i_{k+1}}=2K+c_{i_k}+c_{i_{k+1}}$. Since
  $\varphi(q_k)\in M$, is corresponds to some vertex formula
  $v_{ab}=2K+c_a+c_b$ with $ab\in E$. The Sidon property now implies
  $\{i_k,i_{k+1}\}=\{a,b\}$ and thus $\{i_k,i_{k+1}\}=\{a,b\}$. Therefore,
  every consecutive pair in $i_1,i_2,\dots,i_n$ is an edge in $G$. Since
  the sequence contains every vertex exactly once, it is a Hamiltonian
  path. \hfill$\triangleleft$

  The construction has polynomial size.  Indeed, the numbers used are
  bounded by $O(n^2)$, so their binary encodings have $O(\log n)$ bits.
  Hence \textsc{Balanced Bijection} is NP-hard.  Together with membership
  in NP, the problem is NP-complete.
\end{proof}  
  
We argue, finally, that the problem remains difficult if reaction
mechanisms are prescribed. To see this, it suffices to consider
condensation reactions of the generic form shown in
Fig.~\ref{fig:condensation-mechanism}A,
where $[w]$ denotes an oligomeric structure consisting of $w$ copies of the
same moiety. Formally, if this byproduct is not suppressed in the notation,
we add a single waste species $w_{\mathrm{XY}}$ with molecule \chemfig{XY}
and replace each reaction $p_k+p_{k+1}\longrightarrow q_k$ in the reduction
by $p_k+p_{k+1}\longrightarrow q_k+w_{\mathrm{XY}}$. Since the same fixed
waste molecule occurs in every such reaction, subtracting this constant
term recovers exactly the balance equations used in the NP-completeness
proof above. A real-life chemical example is the condensation of
homo-peptides. In this case, \chemfig{X}$=$\chemfig{H},
\chemfig{Y}$=$\chemfig{OH}, and the monomeric unit is e.g.\ glycine,
\chemfig{H-NHCH_2CO-OH}, minus the N-terminal \chemfig{H} and the
C-terminal \chemfig{OH}. The reaction mechanism is then explicitly given by the peptide-forming rule in
Fig.~\ref{fig:condensation-mechanism}B.

\begin{figure}[htbp]
  \centering
  \resizebox{\linewidth}{!}{
%
%
\begin{tikzpicture}[
  x=1cm,y=1cm,font=\rmfamily\small,
  line cap=round,line join=round,
  leg/.style={anchor=west,font=\rmfamily\footnotesize,text=ink2}
]
  \newcommand{\rowscale}{1.02}

  \draw[spanel] (0.00,3.15) rectangle (13.60,4.65);
  \draw[spanel] (0.00,0.00) rectangle (13.60,2.95);
  \node[badge] at (0.36,4.31) {A};
  \node[badge] at (0.36,2.61) {B};

  \node[ptitle] at (0.68,4.31) {Generic rule};
  \node[ptitle] at (0.68,2.61) {Glycine instance};

  \node[inner sep=0] at (6.80,3.73) {%
    \scalebox{\rowscale}{%
      \normalsize
      $X-[u]\mathbin{\textcolor{pbrk}{-}}Y
      \;+\;
      X\mathbin{\textcolor{pbrk}{-}}[v]-Y
      \quad\longrightarrow\quad
      X-[u]\mathbin{\textcolor{pfrm}{-}}[v]-Y
      \;+\;
      X\mathbin{\textcolor{pfrm}{-}}Y$%
    }%
  };

  \node[inner sep=0] at (6.80,1.43) {%
    \scalebox{\rowscale}{%
      \normalsize
      \chemfig{-C(-[:90]H)(-[:270]H)-C(=[:90]O)%
        -[:0,,,,draw=pbrk,line width=1.35pt]OH}\;\raisebox{-2pt}{$+$}\;
      \raisebox{-3.5pt}{\chemfig{H-[,,,,draw=pbrk,line width=1.35pt]%
        N(-[:90]H)-C(-[:90]H)(-[:270]H)-}}\quad
      \raisebox{-2pt}{$\longrightarrow$}\quad
      \chemfig{-C(-[:90]H)(-[:270]H)-C(=[:90]O)%
        -[,,,,draw=pfrm,line width=1.45pt]%
        N(-[:90]H)-C(-[:90]H)(-[:270]H)-}\;\raisebox{-2pt}{$+$}\;
      \raisebox{-3.5pt}{\chemfig{H-[,,,,draw=pfrm,line width=1.45pt]OH}}%
    }%
  };

  \draw[draw=pbrk,line width=1.35pt] (4.88,0.34)--(5.38,0.34);
  \node[leg] at (5.50,0.34) {broken};
  \draw[draw=pfrm,line width=1.45pt] (7.72,0.34)--(8.22,0.34);
  \node[leg] at (8.34,0.34) {formed};
\end{tikzpicture}}
  \caption{(A) Generic oligomer condensation with fixed byproduct \(XY\).
    (B) Peptide-bond formation with water release.}
  \label{fig:condensation-mechanism}
\end{figure}
We note that for $u\ne v$ there are in fact two inequivalent applications
of this rule, depending on whether the N-terminus of $u$ or $v$ reacts; both
yield the same reaction product, however. 

The construction shows that the reduction in the proof of
  Thm.~\ref{thm:BB-NPc} can be ``implemented'' e.g.\ by the condensation of
  glycine oligomers with peptide formation as the only reaction rule. Hence
  Problem~\ref{problem:2} is not easier than
  Problem~\ref{problem:1}. Moreover, Lemma~\ref{lem:checkrho} ensures that
  the applicability of a bounded rule can be performed in polynomial time.
  Hence we have
\begin{corollary}
  \textup{\textsc{Balanced Bijection}} remains NP-complete if a set
  $\mathfrak{R}$ of admissible reaction mechanisms of bounded size is
  prescribed.
\end{corollary}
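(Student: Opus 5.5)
The corollary has two halves: membership in NP and NP-hardness. I would treat them separately.

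\emph{Membership in NP.} The certificate is a pair $(\varphi,\rho)$, where $\varphi:X\to M$ is a bijection and $\rho:R\to\mathfrak{R}$ assigns a mechanism to each reaction. The certificate has size $O(N)$ because every mechanism in $\mathfrak{R}$ has bounded size. Balance is checked in linear time by Lemma~\ref{lem:checkvarphi}. Explainability of each reaction $\varphi(r)$ by $\rho(r)$ is checked in polynomial time by Lemma~\ref{lem:checkrho}. This needs the bounded pattern size and bounded vertex degree, so that subgraph embeddings can be enumerated in $O(n^K)$ time and the residual graphs can be compared by Luks' bounded-degree isomorphism test. Alternatively, $\rho$ can be dropped from the certificate: since $\mathfrak{R}$ is finite, one simply tries every mechanism for every reaction.

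\emph{NP-hardness.} I would reduce from the restricted \textsc{Balanced Bijection} of Thm.~\ref{thm:BB-NPc}, and hence from \textsc{Hamiltonian Path}. Fix glycine as the monomer and let $\mathfrak{R}$ consist of the single peptide-condensation rule of Fig.~\ref{fig:condensation-mechanism}B. Each integer $w\in M$ from the reduction is replaced by the linear homo-oligomer $[w]$: $w$ glycine residues with an N-terminal H and a C-terminal OH. The network is modified as described: one waste species $w_{\mathrm{XY}}$ is added with molecule \chemfig{H_2O}, and each reaction becomes $p_k+p_{k+1}\to q_k+w_{\mathrm{XY}}$.

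Forward direction. A Hamiltonian path yields the integer bijection $\varphi$. Mapping $w_{\mathrm{XY}}$ to water, each reaction becomes $[u]+[v]\to[u+v]+\mathrm{H_2O}$. This reaction is balanced and is explained by the condensation rule by construction: the embedding of the rule breaks the C--OH bond of one oligomer and the N--H bond of the other, and forms the peptide bond and water.

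Converse. From a balanced bijection I need to recover the integer instance. Water is the only molecule in $M$ not of the form $[w]$, so the first step is to show that $w_{\mathrm{XY}}$ must be mapped to water. A counting argument on the sum formulas should do this. Each oligomer $[w]$ has formula $\mathrm{H_2O}+w\cdot\mathrm{C_2H_3NO}$. If $w_{\mathrm{XY}}$ were assigned some $[w']$, the number of terminal $\mathrm{H_2O}$ units would be unbalanced: two on the left, versus two on the right plus the extra contribution of $[w']$. Once $w_{\mathrm{XY}}$ is mapped to water, each balance equation becomes $u+v=w$ on the residue counts. The argument of Thm.~\ref{thm:BB-NPc} then applies verbatim.

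\emph{Size and main obstacle.} The instance has polynomial size because all integers in the reduction are $O(n^2)$, so each oligomer graph has polynomially many atoms. I expect the main obstacle to be the rigorous verification that the condensation rule explains $[u]+[v]\to[u+v]+\mathrm{H_2O}$ in the sense of Section~\ref{sect:rm}. This means extending $\hat\alpha$ to a full atom-to-atom map such that the residual graphs $\hat G$ and $\hat H$ coincide. I would handle it by building the AAM explicitly, residue by residue: residues of $[u]$ map to the first $u$ residues of $[u+v]$, residues of $[v]$ map to the rest, and the terminal OH and H atoms map to the water.
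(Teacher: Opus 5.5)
Your membership argument, forward direction and size bound are fine and follow the paper's route: glycine oligomers, the single peptide-condensation rule, and one shared waste species. The gap is in the converse, at the step that forces $\varphi(w_{\mathrm{XY}})$ to be water. The counting argument you give for this is false.

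Water is $\mathrm{H_2O}+0\cdot\mathrm{C_2H_3NO}$, so it lies on the same line as the oligomers $[w]=\mathrm{H_2O}+w\cdot\mathrm{C_2H_3NO}$. Both sides of every modified reaction $p_k+p_{k+1}\to q_k+w_{\mathrm{XY}}$ contain exactly two molecules. Hence the ``terminal'' $\mathrm{H_2O}$ units contribute $2=2$ no matter what $w_{\mathrm{XY}}$ receives. Balance therefore reduces to $a+b=c+d$ on residue counts, where $d$ is the residue count of $\varphi(w_{\mathrm{XY}})$ and water counts as $0$. Nothing in these equations singles out $0$.

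There are in fact balanced configurations in which $w_{\mathrm{XY}}$ is not mapped to water:
\begin{itemize}
\item $[v_i]+[v_{je}]\to[v_e]+[v_{ij}]$ for a path $i,j,e$ in $G$;
\item $[v_{ab}]+[v_{cd}]\to[v_{bd}]+[v_{ac}]$ for a $4$-cycle $a,b,d,c$.
\end{itemize}
So balance alone does not establish the converse. The paper's remark that ``subtracting this constant term'' recovers the old equations tacitly assumes the same pinning, so you were right to isolate this step, but it needs a different justification.

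The missing idea is that the mechanism constraint, not atom balance, must pin $w_{\mathrm{XY}}$. The product side $\Gamma^+$ of the condensation rule contains the newly formed byproduct $X{-}Y$, here $\mathrm{H{-}OH}$, i.e.\ an O atom bonded to two H atoms. Water is the only molecule of $M$ with such an atom, so every reaction explained by the rule has water among its products.

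You can also get this directly from the definition of ``explain''. Deleting the broken bonds always leaves an isolated two-atom O--H fragment in $\hat G$, because the only C--O single bond at a carbonyl carbon in $M$ is the C-terminal C--OH. Deleting the formed C--N and O--H bonds produces such a fragment in $\hat H$ only if the formed O--H lies in a water molecule.

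Hence, for every $k$, water is $\varphi(q_k)$ or $\varphi(w_{\mathrm{XY}})$:
\begin{itemize}
\item If $n\ge 3$ there are at least two reactions, and injectivity forces $\varphi(w_{\mathrm{XY}})$ to be water.
\item If $n=2$, the connected graph $G$ has a Hamiltonian path and there is nothing to prove.
\end{itemize}
Once water is pinned, every other entity receives an oligomer $[v_i]$ or $[v_{ij}]$. The balance equations then become exactly $u+v=w$ on residue counts, and the converse of Thm.~\ref{thm:BB-NPc} applies verbatim.
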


\subsection{A Backtrace and Prune Algorithm} 

Problem~\ref{problem:M=X} can reduced to \textsc{SIS} as we have seen in
Theorem~\ref{thm:subgraphiso} and thus it can be solved at least in
principle by any of the commonly employed methods for this much more common
problem. However, even for $\delta=2$ the graph $\Gamma$ has $O(|X|^4)$
vertices, and thus will in general much larger than $\king(X,R)$. We can
thus expect that solving the equivalent induced subgraph isomorphism
problem is impractical.

The idea of the most common Backtracing and Pruning approaches use for
\textsc{SIS}, such as Ullmann's algorithm \cite{Ullmann:76} or VF2 and its
variants \cite{Cordella:04,Juettner:18}, however, also can be used to solve
Problem~\ref{problem:M=X} directly. To this end we introduce a set of
matches $\Phi\subseteq X\times M$ such that $(x,m), (x',m')\in\Phi$ and
$x=x'$ iff $m=m'$. Thus $\Phi$ encodes part of the desired injective map
$\varphi:X'\to M$ for a subset $X(\Phi)\subseteq X$. Moreover, we denote by
$\overline{X}\coloneqq X\setminus X(\Phi)$ the set of as yet unassigned
entities. For each $x\in X$, we denote by $\mathcal{D}_x$ its set of
possible assignments. The key algorithmic idea of backtracing and pruning
is to construct a search tree whose vertices are the partial assignments
$\Phi$ such that the children of $\Phi$ are of the form
$\Phi'=\Phi\cup\{(x^*,m^*)\}$ where $x^*\in\overline{X}$ and
$m^*\in\mathcal{D}_{x^*}$. Of course, $\overline{X}$ and the list of
remaining candidate matches $\mathcal{D}$ must be updated in each step by
removing entries that are no longer feasible. Denote by $\mathcal{D}'$ the
candidate list obtained from $\mathcal{D}$ after adding the match
$(x^*,m^*)$ to $\Phi$. At the very least, all pairs $(x^*,m)$ and $(x,m^*)$
are removed in $\mathcal{D}'$. Clearly, a branch can be abandoned if and
only if $\mathcal{D}'_{y}=\emptyset$ for some $y\in\overline{X}$. In this
case we set $\mathcal{D'}=\emptyset$. This branching and pruning process is
illustrated in Fig.~\ref{fig:searchtree}.

\begin{figure}[htbp]
  \centering
  \resizebox{\linewidth}{!}{\begin{tikzpicture}[
  x=1cm,y=1cm,font=\rmfamily\footnotesize,
  line cap=round,line join=round,
  entity/.style={circle,draw=npgBlue!85!black,fill=npgBlue!17,
    line width=0.75pt,minimum size=6.2mm,inner sep=0pt,
    font=\rmfamily\bfseries\scriptsize,text=npgNavy},
  molecule/.style={rounded corners=2.4pt,draw=npgGreen!62!black,
    fill=npgTeal!13,line width=0.7pt,minimum width=9mm,
    minimum height=6.5mm,font=\rmfamily\scriptsize,
    text=npgGreen!45!black},
  candidate/.style={draw=ink!27,line width=0.7pt,
    dash pattern=on 1.8pt off 1.5pt},
  chosen/.style={-{Stealth[length=1.8mm,width=1.35mm]},
    draw=npgGreen!68!black,line width=1.05pt,
    shorten >=1.5pt,shorten <=1.5pt},
  removed/.style={draw=npgRed!75,line width=0.85pt,
    dash pattern=on 2pt off 1.4pt},
  live/.style={rounded corners=2.6pt,draw=npgNavy!62,
    fill=npgBlue!7,line width=0.7pt,inner sep=3.5pt,
    font=\rmfamily\scriptsize,text=npgNavy},
  reject/.style={rounded corners=2.6pt,draw=npgRed!72,
    fill=npgSalmon!12,line width=0.75pt,inner sep=3.5pt,
    font=\rmfamily\scriptsize,text=npgRed!75!black},
  solution/.style={rounded corners=2.6pt,draw=npgGreen!68!black,
    fill=npgTeal!15,line width=0.8pt,inner sep=3.5pt,
    font=\rmfamily\scriptsize,text=npgGreen!45!black},
  branch/.style={-{Stealth[length=1.8mm,width=1.35mm]},
    draw=npgNavy!72,line width=0.8pt,shorten >=1pt,shorten <=1pt},
  deadbranch/.style={-{Stealth[length=1.8mm,width=1.35mm]},
    draw=npgRed!70,line width=0.75pt,dash pattern=on 2pt off 1.5pt,
    shorten >=1pt,shorten <=1pt}
]
  \draw[spanel] (0.00,0.00) rectangle (4.35,4.80);
  \draw[spanel] (4.62,0.00) rectangle (8.97,4.80);
  \draw[spanel] (9.24,0.00) rectangle (13.60,4.80);
  \node[badge] at (0.36,4.46) {A};
  \node[badge] at (4.98,4.46) {B};
  \node[badge] at (9.60,4.46) {C};

  \node[entity] (Ax1) at (0.83,3.46) {$x_1$};
  \node[entity] (Ax2) at (0.83,2.38) {$x_2$};
  \node[entity] (Ax3) at (0.83,1.30) {$x_3$};
  \node[molecule] (Am1) at (3.52,3.55) {$m_1$};
  \node[molecule] (Am2) at (3.52,2.40) {$m_2$};
  \node[molecule] (Am3) at (3.52,1.25) {$m_3$};
  \draw[chosen] (Ax1)--(Am2);
  \draw[candidate] (Ax2)--(Am1);
  \draw[candidate] (Ax2)--(Am3);
  \draw[candidate] (Ax3)--(Am1);
  \node[font=\rmfamily\scriptsize,text=npgGreen!48!black]
    at (2.10,4.27) {$\Phi=\{(x_1,m_2)\}$};

  \node[entity] (Bx1) at (5.42,3.46) {$x_1$};
  \node[entity] (Bx2) at (5.42,2.38) {$x_2$};
  \node[entity] (Bx3) at (5.42,1.30) {$x_3$};
  \node[molecule] (Bm1) at (8.17,3.55) {$m_1$};
  \node[molecule] (Bm2) at (8.17,2.40) {$m_2$};
  \node[molecule] (Bm3) at (8.17,1.25) {$m_3$};
  \draw[chosen] (Bx1)--(Bm2);
  \draw[chosen] (Bx2)--(Bm3);
  \draw[chosen] (Bx3)--(Bm1);
  \node[font=\rmfamily\scriptsize,text=npgNavy]
    at (6.80,4.27) {choose $(x_2,m_3)$};
  \node[font=\rmfamily\scriptsize,text=npgGreen!48!black]
    at (6.80,0.43) {$\mathcal D_{x_3}\gets\{m_1\}$};

  \begin{scope}[xshift=1.2mm]
  \node[live] (root) at (11.42,4.02) {$\Phi$};
  \node[live] (left) at (10.35,2.68) {$\mathcal D_{x_3}=\{m_1\}$};
  \node[reject] (right) at (12.53,2.68) {$\mathcal D_{x_3}=\varnothing$};
  \node[solution,minimum width=27mm] (leaf) at (11.42,1.16)
    {complete $\varphi$\quad $\checkmark$};
  \draw[branch] (root)--(left)
    node[midway,above left=-1pt,font=\rmfamily\tiny,text=mut]
      {$x_2\mapsto m_3$};
  \draw[deadbranch] (root)--(right)
    node[midway,above right=-1pt,font=\rmfamily\tiny,text=mut]
      {$x_2\mapsto m_1$};
  \draw[branch] (left)--(leaf)
    node[midway,right,font=\rmfamily\tiny,text=mut]
      {$x_3\mapsto m_1$};
  \node[font=\rmfamily\scriptsize,text=npgRed!75!black]
    at (12.53,2.10) {prune};
  \end{scope}
\end{tikzpicture}}
  \caption{Backtrace-and-prune search.  (A) Solid green links denote
    assignments; dashed links denote remaining candidates.  (B) Assigning
    $x_2\mapsto m_3$ leaves $\mathcal D_{x_3}=\{m_1\}$.  (C) Propagation
    prunes the infeasible branch and completes the alternative assignment.}
  \label{fig:searchtree}
\end{figure}

\begin{algorithm}[t]
  \caption{Procedure $\textrm{Match}(\Phi,\overline{X},\mathcal{D})$}
  \label{alg:basic_recursion}
  \renewcommand{\algorithmicendif}{\vspace*{-\baselineskip}} 
  \begin{algorithmic}
    \IF{$\overline{X}=\emptyset$}
       \PRINT $\Phi$ and \textbf{terminate} 
    \ENDIF
    \FORALL{$(x^*,m^*)\in\mathcal{D}$}
       \IF{$\textrm{Compatible}(\Phi,(x^*,m^*))$} 
          \STATE $\mathcal{D}'\leftarrow\textrm{Update}(\mathcal{D},x^*,m^*)$
          \IF{$\mathcal{D}'\ne \emptyset$}
             \STATE $\textrm{Match}(\Phi\cup\{(x^*,m^*)\},
                   \overline{X}\setminus\{x^*\},
                   \mathcal{D}')$
          \ENDIF
       \ENDIF
    \ENDFOR
  \end{algorithmic} 
\end{algorithm}

Algorithm~\ref{alg:basic_recursion} outlines a generic Backtracing and
Pruning algorithm. In this recursive version, a call to
$\textrm{Match}(\emptyset,X,X\times M)$ either produces a valid assignment
$\Phi$ the first time it reaches $\overline{X}=\emptyset$, or it terminates
without output. It is not difficult to modify the algorithm to list all
valid assignments: it suffices to continue after yielding the first
assignment $\Phi$.

The practical performance of Algorithm~\ref{alg:basic_recursion} depends
crucially on the two problem specific functions $\textrm{Compatible}()$,
which checks whether an extension of $\Phi$ is admissible and the
$\textrm{Update}()$ of the candidate list $\mathcal{D}$ upon adding the
match $(x^*,m^*)$. The following sections are concerned with
  problem-specific methods to remove as many potential matches from
  $\mathcal{D}$ as possible.

\subsection{Compositional Constraints}
\label{compositional-constraints}

Denote the \emph{support} of a reaction $r\in R$ by
\begin{equation}
  \supp(r) \coloneqq \{ x\in X | s^-_{xr}+s^+_{xr}>0 \}  
\end{equation}
Correspondingly, we write $\supp(G)$ and $\supp(H)$ for the two complexes
$G$ and $H$ comprising a reaction $r=(G\longrightarrow H)$.  We say that a
reaction $r$ is \emph{assigned} (in $\Phi)$ if $x\in\supp(r)$ implies that
there is $(x,m_x)\in\Phi$, i.e., all abstract entities taking part in the
reaction are assigned a concrete molecule. Moreover, a reaction $r$ is
\emph{almost assigned} if there is $x^*\in\supp(r)$ and
$x\in\supp(r)\setminus\{x^*\}$ implies that there is $(x,m_x)\in\Phi$,
i.e., $x^*$ is the only entity appearing in $r$ that has not yet been
assigned. Finally, we say that a reaction is \emph{semi-assigned} if
exactly one of its complexes in assigned. Note that if $r$ is almost
assigned it is also semi-assigned unless the missing entity acts as a
catalyst. 

Compatibility in our setting amounts to checking, for each reaction $r\in
R$, whether $r$ (i) is balanced. This can only be done for completely
assigned reactions and amount to verifying
equ.(\ref{eq:balanced}). Clearly, it suffice in
$\textrm{Compatible}(\Phi,(x^*,m^*))$ to perform this test for completely
assigned reaction with $x^*\in\supp(r)$, since this test has been performed
in earlier steps for all completely assigned reaction with
$x^*\notin\supp(r)$. 

The update of $\mathcal{D}$ is straightforward in principle. In order to
ensure correctness, it suffices to remove all potential assignments
$(y,m')$ from $\mathcal{D}$ that contain $m'\in\overline{X}$, i.e., a
molecule that has already been assigned. A practically efficient algorithm,
however, is obtained only if $\mathcal{D}$ is pruned as aggressively as
possible.

To this end we observe that there is a natural partial order on
$\mathfrak{M}$ defined by the sub-multiset relation of the atomic
composition. More formally, we have   
\begin{equation}
  m\prec m' \iff \sigma(m)\ne\sigma(m') \text{ and } \sigma_a(m) \le
  \sigma_a(m') \text{ for all atom types } a,
\end{equation}
This ``composition subset partial order'' translates to the set of
unassigned entities $\overline{X}$ in a way that depends on the
assigned part of the reaction.
\begin{itemize}
\item Whenever a reaction $r$ is almost assigned, the composition
  $\sigma(y^*)$ of the remaining unassigned species $y^*$ can be computed
  unambiguously from equ.(\ref{eq:balanced}).  This limits
  $\mathcal{D}_{y^*}$ to pairs $(y^*,m')$ with $\sigma(m')=\sigma(y^*)$.
  As a useful byproduct we observe that this restriction of $\mathcal{D}$
  ensures that $\textrm{Compatible}(\Phi,(x^*,m))$ is always true and thus
  the compatibility check becomes trivial and can be omitted.
\item Whenever a reaction is semi-assigned but not almost assigned, there
  are at least two unassigned entities on one side. Assume, w.l.o.g.\
  that the reactants and some of the products are assigned, i.e.,
  $G \longrightarrow H' + \sum_{x\in\overline{X}} s^+_{xr} x$  
  Then we can infer $\sigma(x)\prec \sigma(G)-\sigma(H')$ for each
  $x\in \overline{X}\cap\supp(H)$.
\item For reactions $r$ that are not semi-assigned, we obtain constraints
  only if $r$ has a single reactant or a single product $x$ since
  $\sigma(y)\prec\sigma(x)$ for $y\in\supp(r)\setminus\{x\}$. 
\end{itemize}

These three cased yields the following rules to update $\mathcal{D}$ after
$(x^*,m^*)$ is appended to $\Phi$ and $x^*$ is removed from $\overline{X}$:
\begin{itemize}
\item[(1)] If $x^*\in r$ and $r$ is now almost assigned then compute
  the composition $\sigma(y^*)$ of the remaining unassigned species 
  $y^*$ and set $\mathcal{D}_{y^*}=\{(y^*,m)|\sigma(m)=\sigma(y^*)\}$.
  If $\mathcal{D}_{y^*}=\empty$ then the branch can be abandoned
  altogether, and we may set $\mathcal{D}=\emptyset$.
\item[(2)] If $x^*\in r$ and $r=(G\longrightarrow H)$ is now semiassigned
  but not almost assigned:
  If $G$ is assigned, compute $\tau\coloneqq \sigma(G)-\sigma(H')$ and
  if $H$ is assigned, compute $\tau\coloneqq \sigma(H)-\sigma(G')$.
  For all unassigned entities $x\in\supp(r)\cap\overline{X}$ a candidate 
  match $(x,m)$ is removed from $\mathcal{D}_x$ if $\tau\preceq\sigma(m)$
  since at least two unassigned entities must share the
  ``remainder sum formula'' $\tau$.
  Again we can abandon the branch if $\mathcal{D}_x$ for one of these
  vertices $x$, which case we set $\mathcal{D}=\emptyset$. 
\item[(3)] If $x^*\in r$ and $r$ is a reaction that is not semi-assigned
  with a single reactant $x$ or a single product $x$, then remove from
  $\mathcal{D}_x$ all $(x,m)$ with $\sigma(m)\preceq \sigma(m')$ for all
  assigned entities $y\in\supp(r)\setminus\overline{X}$ and
  $(y,m')\in\Phi$. Again $\mathcal{D}_x=\emptyset$ forces us to abandon the
  branch and thus $\mathcal{D}=\emptyset$.
\end{itemize} 
Algorithm~\ref{alg:updateD} summarizes these considerations.

\begin{algorithm}[t]
  \caption{Procedure $\textrm{Update}(\mathcal{D}',x^*,m^*)$}
  \label{alg:updateD}
  \begin{algorithmic}
    \STATE $\mathcal{D}\leftarrow\mathcal{D}'$
    \FORALL{$r\in R$ with $x^*\in\supp(r)$}
       \IF{$r$ is almost assigned}
          \STATE $\{y^*\}=\supp(r)\cap\overline{X}$
          \STATE $\mathcal{D}_{y^*}=\{(y^*,m)|\sigma(m)=\sigma(y^*)\}$
          \IF{ $\mathcal{D}_{y^*}=\emptyset$ }       
              \RETURN $\emptyset$ 
          \ENDIF       
       \ELSIF{$r$ is semi-assigned}
          \STATE compute $\tau$
          \FORALL{$x\in\supp(r)\cap\overline{X}$}
             \FORALL{$(x,m)\in\mathcal{D}_x$ with $\tau\preceq\sigma(m)$}
                \STATE $\mathcal{D}_x \leftarrow \mathcal{D}_x \setminus
                   \{(x,m)\}$
             \ENDFOR
             \IF{ $\mathcal{D}_{x}=\emptyset$ }       
                 \RETURN $\emptyset$ 
             \ENDIF
          \ENDFOR
       \ELSIF{$\supp(G)=\{z\}$ or $\supp(H)=\{z\}$} 
          \FORALL{$y\in \supp(r)\setminus\overline{X}$}
             \FORALL{$(y,m')\in\Phi$ with $\sigma(m)\preceq \sigma(m')$} 
                \STATE $\mathcal{D}_z \leftarrow \mathcal{D}_z \setminus
                    \{(z,m)\}$
             \ENDFOR
             \IF{ $\mathcal{D}_{x}=\emptyset$ }       
                \RETURN $\emptyset$ 
             \ENDIF                   
          \ENDFOR
       \ENDIF      
     \ENDFOR
     \RETURN $\mathcal{D}$  
  \end{algorithmic} 
\end{algorithm}

A potential disadvantage of working with the partial order on sum formulas
is a comparatively high cost of evaluation. It is possible, however, to
precompute the partial order for $M$ and to store the corresponding DAG.

\subsection{Mass Intervals} 

A computationally easier but somewhat less stringent approach is to
consider only the molecular masses rather than the full composition
vectors. To this end we consider the mass function
$\mu:\mathfrak{M}\to\mathbb{R}^+$ that assigns a mass $\mu(m)=\sum_a
\sigma_a(m)\mu(a)$ to each molecule, where $\mu(a)$ is the mass of the atom
or moiety $a$. For a valid assignment $\varphi:X\to M$ of course
we must have
\begin{equation}
  \sum_{x\in X} s^-_{xr}\mu(\varphi(x))
  =
  \sum_{x\in X} s^+_{xr}\mu(\varphi(x)).
\end{equation}
Then one can maintain for each entity $x\in X$ an interval $I(x)=[a_x,b_x]$,
where $a_x$ is the minimal and $b_x$ is the maximal mass of a molecule that
can still be assigned to $x$. The mass intervals immediately translate to
mass intervals for a complex $G$:
\begin{equation} 
  I(G)\coloneqq \left[
    \sum_{x\in\supp(G)}s_{xG}a_x, \sum_{x\in\supp(G)}s_{xG}b_x \right].
\end{equation}
It becomes impossible to find a valid assignment whenever there is a
reaction $G\longrightarrow H$ such that $I(G)\cap I(H)=\emptyset$.

\begin{figure}[htbp]
  \centering
  \resizebox{\linewidth}{!}{
\begin{tikzpicture}[
    font=\small,
    reaction/.style={anchor=west},
    full interval/.style={line width=1.7mm,line cap=round,draw=gray!35},
    pruned interval/.style={line width=1.7mm,line cap=round,draw=blue!70!black},
    assigned mass/.style={line width=0.25mm,draw=cyan!80!blue},
    tick/.style={line width=0.25mm,draw=black!65},
    poolchip/.style={rounded corners=2pt,draw=gray!48,fill=gray!5,
      line width=0.55pt,minimum width=16.6mm,minimum height=5.2mm},
    poolsel/.style={rounded corners=2pt,draw=blue!70!black,
      fill=blue!6,line width=0.8pt,minimum width=16.6mm,
      minimum height=5.2mm},
    massdot/.style={circle,fill=gray!62,text=white,minimum size=4.2mm,
      inner sep=0pt,font=\rmfamily\bfseries\tiny},
    massdotsel/.style={circle,fill=blue!70!black,text=white,
      minimum size=4.2mm,inner sep=0pt,font=\rmfamily\bfseries\tiny},
    fixedmass/.style={circle,fill=blue!70!black,minimum size=1.7mm,
      inner sep=0pt}
  ]
    \draw[spanel] (0.00,-1.42) rectangle (4.45,4.35);
    \draw[spanel] (4.61,-1.42) rectangle (13.60,4.35);
    \node[badge] at (0.36,4.01) {A};
    \node[badge] at (4.97,4.01) {B};
    \node[ptitle] at (0.68,4.01) {Reaction chain};
    \node[ptitle] at (5.29,4.01) {Candidate mass intervals};

    \begin{scope}[xshift=0.48cm]
    \newcommand{\mx}[1]{5.00 + #1*0.095}

    \node[reaction] at (0,3.55) {$r_1:\quad A+B\longrightarrow C$};
    \node[reaction] at (0,2.95) {$r_2:\quad C+D\longrightarrow E$};
    \node[reaction] at (0,2.35) {$r_3:\quad E+F\longrightarrow G$};
    \node[ptitle] at (0.20,1.72) {Candidate molecules};

    \newcommand{\masscard}[6]{%
      \node[#1] at (#2,#3) {};
      \node[anchor=west,font=\rmfamily\tiny,text=ink]
        at ({#2-0.75},#3) {#4};
      \node[#5] at ({#2+0.62},#3) {#6};
    }
    \masscard{poolchip}{0.80}{1.15}{$\mathrm{H_2}$}{massdot}{2}
    \masscard{poolchip}{2.66}{1.15}{$\mathrm{H_2O}$}{massdot}{18}
    \masscard{poolsel}{0.80}{0.45}{$A{:}\,\mathrm{C_2H_2}$}{massdotsel}{26}
    \masscard{poolchip}{2.66}{0.45}{$\mathrm{CHN}$}{massdot}{27}
    \masscard{poolchip}{0.80}{-0.25}{$\mathrm{C_2H_4}$}{massdot}{28}
    \masscard{poolchip}{2.66}{-0.25}{$\mathrm{C_3H_5N}$}{massdot}{55}
    \masscard{poolchip}{1.73}{-0.95}{$\mathrm{C_3H_7NO}$}{massdot}{73}

    \begin{scope}[yshift=-2mm]
    \draw[->,tick] ({\mx{0}},3.35) -- ({\mx{78}},3.35)
      node[right,font=\scriptsize] {$\mu$};
    \foreach \m in {0,20,40,60}
      \draw[tick] ({\mx{\m}},3.27) -- ({\mx{\m}},3.43)
        node[above,font=\scriptsize] {$\m$};

    \draw[assigned mass] ({\mx{26}},3.27) -- ({\mx{26}},3.43)
      node[above=0.5pt,font=\rmfamily\scriptsize,
        text=cyan!70!blue] {$A$};

    \foreach \name/\lo/\hi/\y in {
      B/2/28/2.63,
      C/28/55/2.07,
      D/2/28/1.51,
      F/2/18/0.39} {
      \node[anchor=east] at (4.80,\y) {$\name$};
      \draw[full interval] ({\mx{2}},\y) -- ({\mx{73}},\y);
      \draw[pruned interval] ({\mx{\lo}},\y) --
        node[midway,above=0.5pt,font=\tiny] {$[\lo,\hi]$}
        ({\mx{\hi}},\y);
    }

    \foreach \name/\mass/\y in {
      E/55/0.95,
      G/73/-0.17} {
      \node[anchor=east] at (4.80,\y) {$\name$};
      \draw[full interval] ({\mx{2}},\y) -- ({\mx{73}},\y);
      \node[fixedmass] at ({\mx{\mass}},\y) {};
      \node[font=\tiny,above=0.5pt] at ({\mx{\mass}},\y)
        {$[\mass,\mass]$};
    }
    \end{scope}

    \draw[full interval] (5.00,-1.02) -- +(0.75,0)
      node[right=0.2cm,font=\rmfamily\footnotesize] {initial domain};
    \draw[pruned interval] (8.65,-1.02) -- +(0.75,0)
      node[right=0.2cm,font=\rmfamily\footnotesize]
      {surviving candidates};
    \end{scope}
  \end{tikzpicture}}
  \caption{Mass-interval pruning.  (A) A partial assignment fixes
  $\mu(A)=26$.  (B) Balance propagation contracts the
  mass intervals of $B$--$G$; candidates outside the blue intervals are
  removed.}
  \label{fig:massprune}
\end{figure}

Partial assignments in a reaction $r$ will lead to bounds on the mass
intervals for each of the unassigned entities.  Consider $G\longrightarrow
H$ and assume $x\in\supp(G)\cap\overline{X}$. Then the lower bound $a_x$ is
obtained by substituting for $\mu_x$ the mass the molecules $\varphi(x)$
for each assigned entity $x\in\supp(r)\setminus\overline{X}$, the maximum
admissible mass $b_y$ for each $y\in\supp(G)\cap\overline{X}\setminus\{x\}$
in the same complex, and the minimum mass $a_y$ for unassigned entities
$y\in\supp(H)\cap\overline{X}$ and then computing $a_x$ as $\mu_x$ from the
mass balance equation $\sum_{x} s^-_{xr}\mu_x=\sum_{x}
s^+_{xr}\mu_x$. Similarly, the upper bound $b_x$ is obtained by
substituting the minimum mass $b_y$ for all unassigned entities
$y\in\supp(H)\cap\overline{X}$ and the maximum mass $a_y$ for unassigned
entities $y\in\supp(H)\cap\overline{X}$. Analogous expressions yield the
interval for $x\in\supp(H)\cap\overline{X}$.  Moreover, whenever $a_x$ is
lower than the lightest unassigned molecule the $a_x$ is set to this smallest
available value. Similarly, if $b_x$ is larger than the heaviest unassigned
molecule, $b_x$ is set to this maximal available value.
Fig.~\ref{fig:massprune} illustrates the principle.

When a new assignment $(x^*,m^*)$ is fixed, all reactions with
$x^*\in\supp(r)$ need to be updated, since the interval $I(x^*)$ has collapsed
to a single value, the mass of $m^*$. All entities in
$\supp(r)\cap\overline{X}$ whose mass intervals change are added to the
\emph{frontier} $\mathcal{F}$. If $\mathcal{F}\ne\emptyset$, we remove an
entity $z$ from $\mathcal{F}$ and update the mass intervals of all
unassigned entities $y\in\supp(r)\cap\overline{X}$ in all reactions with
$z\in\supp(r)$. Again all unassigned entities whose intervals are affected
in these update step are added to $\mathcal{F}$. Updating of intervals
continues until $\mathcal{F}=\emptyset$. In this process an entity can
be added to $\mathcal{F}$ more than once. As part of the update procedure,
we check whether $I(G)\cap I(H)=\emptyset$ for each $r=(G\longrightarrow
H)$, and abandon the branch if this is the case.

The updated set $\mathcal{D}$ then comprises the candidate assignments
$(x,m)$ where $x\in\overline{X}$ and $m$ is an as yet unassigned molecule
with mass $\mu(m)\in I(x)$. Note that if the reaction
$r$ is almost assigned then its mass interval is also reduced to a single
value. The set $\mathcal{D}$ could now be pruned further by using
compositional constraints of the previous section.

\subsection{Conservation Laws}
\label{sec:conservation-laws}

The stoichiometric matrix of a conservative reaction networks harbors
additional structural information in its left null space. In particular, it
identifies \emph{obligatory isomers} and \emph{obligatory polymers}.  To
this end we rephrase \cite[Thm.37]{Mueller:22a} as follows:
\begin{proposition}
  Let $\{u^{(1)},\dots,u^{(k)}\}$ be a basis of the left nullspace of
  $\mathbf{S}$, i.e., $u^{(j)}\mathbf{S}=0$ for $1\le j\le k$, and let
  $\varphi:X\to M$ be a balanced assignment. If there is
  $c_1,c_2\in\mathbf{N}$ such that $c_1 u_x^{(j)}=c_2 u_y^{(j)}$ for
  $1\le j\le k$ then $c_1 \sigma(\varphi(x))=c_2\sigma(\varphi(y))$.
\end{proposition}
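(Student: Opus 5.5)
The proof is linear algebra over the left null space of $\mathbf{S}$. For each atom type $a$, consider the vector $\sigma_a\circ\varphi\coloneqq(\sigma_a(\varphi(x)))_{x\in X}$. Since $\varphi$ is balanced, Eq.(\ref{eq:balanced}) holds componentwise for every reaction $r\in R$. This means $\sigma_a\circ\varphi$ lies in the left null space of $\mathbf{S}$, as already noted after Eq.(\ref{eq:balanced}). So it is a linear combination $\sigma_a\circ\varphi=\sum_{j=1}^k \lambda_{a,j}u^{(j)}$, with real (indeed rational) coefficients $\lambda_{a,j}$, because $\{u^{(1)},\dots,u^{(k)}\}$ is a basis.

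The second step reads off the two coordinates $x$ and $y$. We have $\sigma_a(\varphi(x))=\sum_j\lambda_{a,j}u^{(j)}_x$ and $\sigma_a(\varphi(y))=\sum_j\lambda_{a,j}u^{(j)}_y$. Multiplying the first by $c_1$ and the second by $c_2$, the hypothesis $c_1u^{(j)}_x=c_2u^{(j)}_y$ for every $j$ gives
\begin{equation*}
  c_1\sigma_a(\varphi(x))=\sum_j\lambda_{a,j}c_1u^{(j)}_x
  =\sum_j\lambda_{a,j}c_2u^{(j)}_y=c_2\sigma_a(\varphi(y)).
\end{equation*}
This holds for every atom type $a$, so $c_1\sigma(\varphi(x))=c_2\sigma(\varphi(y))$ as vectors.

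The main point needing care is the ambient field. The basis is computed over $\mathbb{Q}$ (or $\mathbb{R}$), while $\sigma_a\circ\varphi$ is an integer vector; it still lies in the real span of the basis, so the coefficients $\lambda_{a,j}$ exist, and nothing more is needed. I would also remark that the condition $c_1u^{(j)}_x=c_2u^{(j)}_y$ for all $j$ does not depend on the choice of basis, since it says that $c_1e_x-c_2e_y$ is orthogonal to the whole left null space, i.e.\ lies in the column space of $\mathbf{S}$. This gives an equivalent basis-free formulation of the same argument.

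For the interpretation, the case $c_1=c_2$ yields obligatory isomers, and the case $c_1\ne c_2$ yields obligatory polymers.
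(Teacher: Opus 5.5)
Your proof is correct and follows the argument the paper intends. The paper gives no separate proof and cites Thm.~37 of \cite{Mueller:22a}, but the statement rests on its remark after Eq.~(\ref{eq:balanced}) that each $(\sigma_a(\varphi(x)))_{x\in X}$ is a left kernel vector of $\mathbf{S}$; your expansion in the basis and comparison of the $x$- and $y$-coordinates completes exactly that reasoning, and your basis-free reformulation via $c_1e_x-c_2e_y\in\mathrm{col}(\mathbf{S})$ is also correct.
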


For obligatory isomers and oligomers we obtain in particular also
additional constraints on their mass intervals: upper and lower bounds are
linked by the same constants, and one member of such a family
has been assigned, the composition and thus also the mass intervals of all
other members collapse to a unique value.

\subsection{Candidate Order}

Much of the improvements in VF2-style algorithms \cite{Cordella:04} derives
from a suitable order of the candidate matches, see
e.g.\ \cite{Juettner:18}.

\paragraph{Entity ordering.}
A natural first principle is to assign highly constrained entities
early. This is captured by the minimum-remaining-values principle from
constraint satisfaction, which gives priority to unassigned entities with
small current domains.  Such an ordering tends to expose contradictions
early and thereby limits the exploration of unproductive branches.  It is
useful to break ties by prioritizing an entity incident to a reaction with
the fewest currently unassigned entities. Assigning such an entity brings
that reaction closest to being almost assigned. It is
obvious, therefore, to prioritize entities $x^*$ for which there is an
almost assigned reaction $r$. In particular, in this case the composition
$\sigma(x^*)$ and the corresponding mass is already known. Similarly, it is
useful to assign entities early that are involved in isomerization
reactions of the form $A\longrightarrow B$, since the result is an almost
assigned reaction.

\paragraph{Value ordering.}
Following the selection of an abstract entity $x$, the order in which
molecules from $\mathcal{D_x}$ are considered may also play a role. A
robust approach is the least-constraining-value heuristic, commonly applied
in the real of constrain satisfaction problems. If a candidate molecule is
an option for many other unassigned entities, assigning it to $x$
eliminates a shared resource that might be strictly required elsewhere.
Conversely, a molecule that occurs in few other domains is more specific to
$x$ and is less likely to restrict the remaining assignment.
To formalize this logic, we define a score $A_x(m)$ for each molecule
$m \in \mathcal{D}_x$ that number of unassigned entities whose 
domains also contains $m$, i.e., 
\begin{equation} 
  A_x(m) \coloneqq
  \bigl|\{\,y \in \overline{X}\setminus\{x\} : m \in D_y\,\}\bigr|.
\end{equation} 
Guided by this metric, the candidate molecules for $x$ are systematically 
considered in increasing order of $A_x(m)$ (see Fig.~\ref{fig:candidate-order}).

\begin{figure}[htbp]
  \centering
  \resizebox{\linewidth}{!}{\begin{tikzpicture}[
  x=1cm,y=1cm,font=\rmfamily\footnotesize,
  line cap=round,line join=round,
  species/.style={circle,draw=npgNavy!78,fill=npgBlue!14,
    line width=0.7pt,minimum size=5.8mm,inner sep=0pt,
    font=\rmfamily\bfseries\scriptsize,text=npgNavy},
  speciespick/.style={species,draw=npgGreen!70!black,fill=npgTeal!25,
    line width=1.0pt,text=npgGreen!42!black},
  reaction/.style={rounded corners=1.3pt,draw=npgRed!75!black,
    fill=npgSalmon!20,line width=0.65pt,minimum width=6.5mm,
    minimum height=4.8mm,inner sep=1pt,
    font=\rmfamily\bfseries\tiny,text=npgRed!72!black},
  count/.style={circle,draw=white,fill=ink!68,line width=0.5pt,
    minimum size=4.4mm,inner sep=0pt,font=\rmfamily\bfseries\tiny,
    text=white},
  countpick/.style={count,fill=npgGreen!64!black},
  candidate/.style={rounded corners=3pt,draw=npgNavy!55,
    fill=white,line width=0.65pt,minimum width=22mm,
    minimum height=17mm,font=\rmfamily\scriptsize,text=npgNavy},
  candidatepick/.style={candidate,draw=npgGreen!70!black,
    fill=npgTeal!13,line width=0.9pt,text=npgGreen!42!black},
  molecule/.style={circle,draw=npgNavy!75,fill=npgBlue!18,
    line width=0.7pt,minimum size=6.5mm,inner sep=0pt,
    font=\rmfamily\bfseries\scriptsize,text=npgNavy},
  sharednode/.style={circle,draw=ink!38,fill=white,line width=0.55pt,
    minimum size=4.8mm,inner sep=0pt,font=\rmfamily\tiny,text=mut},
  inedge/.style={-{Stealth[length=1.6mm,width=1.2mm]},
    draw=npgRed!78!black,line width=0.75pt,
    shorten >=1.4pt,shorten <=1.4pt},
  outedge/.style={-{Stealth[length=1.6mm,width=1.2mm]},
    draw=npgGreen!65!black,line width=0.8pt,
    shorten >=1.4pt,shorten <=1.4pt},
  overlap/.style={draw=npgGrayBlue!72,line width=0.65pt,
    dash pattern=on 1.8pt off 1.4pt},
  choose/.style={-{Stealth[length=1.9mm,width=1.45mm]},
    draw=npgGreen!68!black,line width=1.0pt}
]
  \draw[spanel] (0.00,0.00) rectangle (7.90,4.30);
  \draw[spanel] (8.05,0.00) rectangle (13.60,4.30);
  \node[badge] at (0.36,3.96) {A};
  \node[badge] at (8.41,3.96) {B};

  \node[species] (A) at (0.72,2.90) {$A$};
  \node[speciespick] (B) at (0.72,1.67) {$B$};
  \node[reaction] (r1) at (2.02,2.29) {$r_1$};
  \node[species] (C) at (3.10,2.29) {$C$};
  \node[reaction] (r2) at (4.18,2.29) {$r_2$};
  \node[species] (E) at (5.25,2.29) {$E$};
  \node[reaction] (r3) at (6.34,2.29) {$r_3$};
  \node[species] (G) at (7.20,3.06) {$G$};
  \node[species] (D) at (5.25,0.84) {$D$};
  \node[species] (F) at (6.34,3.48) {$F$};

  \draw[inedge] (A)--(r1);
  \draw[inedge] (B)--(r1);
  \draw[outedge] (r1)--(C);
  \draw[inedge] (C)--(r2);
  \draw[inedge] (D)--(r2);
  \draw[outedge] (r2)--(E);
  \draw[inedge] (E)--(r3);
  \draw[inedge] (F)--(r3);
  \draw[outedge] (r3)--(G);
  \draw[outedge] (r3) to[out=-100,in=0] (D);

  \node[count] at (0.40,3.25) {4};
  \node[countpick] at (0.40,1.32) {2};
  \node[count] at (3.10,2.76) {3};
  \node[count] at (5.25,0.38) {3};
  \node[countpick] at (5.25,2.76) {2};
  \node[count] at (6.34,3.91) {4};
  \node[count] at (7.48,3.40) {3};

  \draw[rounded corners=7pt,draw=npgGreen!62!black,
    line width=0.75pt,dash pattern=on 2.2pt off 1.6pt]
    (0.32,1.26) rectangle (1.12,2.08);
  \node[rounded corners=2.4pt,draw=npgGreen!52,fill=npgGreen!7,
    line width=0.55pt,font=\rmfamily\scriptsize,
    text=npgGreen!45!black,inner sep=2.5pt] at (2.22,0.43)
    {$|\mathcal D_B|=|\mathcal D_E|=2
      \ \Rightarrow\ B\ \checkmark$};
  \draw[choose] (1.15,0.66) to[out=135,in=-75] (0.76,1.24);

  \node[candidatepick] (card4) at (9.53,2.06) {};
  \node[molecule] (m4) at (9.53,2.34) {$m_4$};
  \node[font=\rmfamily\bfseries\scriptsize,text=npgGreen!45!black]
    at (9.53,1.52) {$A_B(m_4)=1$};
  \node[sharednode] (u1) at (9.53,3.58) {$E$};
  \draw[overlap] (m4)--(u1);

  \node[candidate] (card2) at (12.13,2.06) {};
  \node[molecule] (m2) at (12.13,2.34) {$m_2$};
  \node[font=\rmfamily\bfseries\scriptsize,text=npgNavy]
    at (12.13,1.52) {$A_B(m_2)=3$};
  \node[sharednode] (v1) at (11.17,3.36) {$A$};
  \node[sharednode] (v2) at (12.13,3.58) {$C$};
  \node[sharednode] (v3) at (13.09,3.36) {$D$};
  \draw[overlap] (m2)--(v1);
  \draw[overlap] (m2)--(v2);
  \draw[overlap] (m2)--(v3);

  \node[font=\rmfamily\bfseries\scriptsize,text=npgGreen!45!black] (try)
    at (9.53,0.52) {try first};
  \draw[choose] (try.north) -- (card4.south);
\end{tikzpicture}}
  \caption{Network-aware candidate ordering.
    (A) Both $B$ and $E$ have domain size two. Assigning $B$ first leaves
    only two unassigned entities in the three-entity reaction $r_1$,
    whereas $r_3$ contains four entities.
    (B) Candidate $m_4$ is tried before the more widely shared
    $m_2$.}
  \label{fig:candidate-order}
\end{figure}

\subsection{Symmetries of $(X,R)$}

The reaction network $(X,R)$ is faithfully represented by its K{\"o}nig
graph $\king(X,R)$. Two networks $(X,R)$ and $(X',R')$ are considered
isomorphic if there exists a bijective mapping
$\vartheta=(\vartheta_X,\vartheta_R)$, with $\vartheta_X: X\to X'$ and
$\vartheta_R:R\to R'$, satisfying
\begin{equation}
  \label{eq:XRiso} 
  s^-_{\vartheta_X(x)\vartheta_R(r)}=s^-_{xr}
  \quad\text{and}\quad 
  s^+_{\vartheta_X(x)\vartheta_R(r)}=s^+_{xr}
\end{equation}
for all $x\in X$ and $r\in R$.  Correspondingly, an automorphism of $(X,R)$
is an isomorphism from the network onto itself, inducing a bijection
$\vartheta:X\to X$. Clearly, if $\varphi:X\to M$ is a balanced bijection,
then $\varphi\circ\vartheta$ is also a balanced bijection.  These
automorphisms form a group $\Aut(X,R)$, which is generally a subgroup of
$\Aut(\king(X,R))$. Any difference between these groups essentially arises
from automorphisms of $\king(X,R)$ that fix $X$, which occur only if there
are different reactions with the same stoichiometry, i.e., duplicate
columns in the stoichiometric matrix. If no such cases exist, any
automorphism of $(X,R)$ extends uniquely to a mapping on the reaction
vertices, and the two groups coincide.

For the performance of the algorithms outline above, it is of immediate
interest to remove parts of the search tree that derive from
automorphisms. Several techniques of symmetry breaking which are routinely
used in constraint programming \cite{Crawford:96,Fahle:01,Gent:06} can also
be utilized here.

To this end, we consider the orbits
$\orb(x)\coloneqq\{y\in X | \exists \vartheta\in\Aut(X,R):
y=\vartheta(x)\}$. We may endow both the set of abstract entities $X$ and
the set of molecules $M$ with arbitrary but fixed orders $<$.  Considering
the restriction of $\varphi$ to an orbit $\orb(x)$ of $\Aut(X,R)$, we
observe that any bijection $\orb(x)\to\varphi(\orb(x))$ is a valid
assignment. We can thus restrict $\varphi$ to be order-preserving on all
orbits, which introduces the additional constraint:
\begin{equation}
  x<y \implies \varphi(x)<\varphi(y) \quad\text{for all } y\in\orb(x) 
\end{equation}

Symmetry can also be used after a partial assignment has been shown to be
impossible. A \emph{no-good} list records a partial assignment pattern that
cannot be extended to a valid realization. Clearly, whenever a pattern
fails, every symmetric copy of the pattern must also fail.  Symmetry breaking 
during search (SBDS) uses this observation by recording not only the failed 
assignment pattern itself, but also all relevant symmetric variants. Recording
such forbidden patterns prevents the same failure from being rediscovered
under a different labeling. In practice, it suffices to store minimal
forbidden patterns, i.e., failed assignments at abandoned vertices closest
to the root of the search tree. 

\begin{figure}[htbp]
  \centering
  \resizebox{\linewidth}{!}{
\begin{tikzpicture}[
  x=1cm,y=1cm,font=\rmfamily\footnotesize,
  line cap=round,line join=round,
  species/.style={circle,draw=npgBlue!85!black,fill=npgBlue!17,
    line width=0.75pt,minimum size=6mm,inner sep=0pt,
    font=\rmfamily\bfseries\scriptsize,text=npgNavy},
  reaction/.style={rounded corners=1.4pt,draw=npgRed!82!black,
    fill=npgSalmon!24,line width=0.75pt,minimum width=7mm,
    minimum height=5.3mm,inner sep=1pt,
    font=\rmfamily\bfseries\scriptsize,text=npgRed!72!black},
  molecule/.style={rounded corners=2.5pt,draw=npgGreen!65!black,
    fill=npgTeal!15,line width=0.75pt,minimum width=9mm,
    minimum height=6.5mm,font=\rmfamily\scriptsize,
    text=npgGreen!45!black},
  state/.style={rounded corners=2.5pt,draw=npgNavy!60,
    fill=npgBlue!7,line width=0.65pt,minimum width=13mm,
    minimum height=7mm,font=\rmfamily\scriptsize,text=npgNavy},
  canon/.style={rounded corners=2.8pt,draw=npgGreen!68!black,
    fill=npgTeal!14,line width=0.8pt,minimum width=34mm,
    minimum height=8mm,font=\rmfamily\scriptsize,
    text=npgGreen!45!black},
  inedge/.style={-{Stealth[length=1.8mm,width=1.35mm]},
    draw=npgRed!85!black,line width=0.95pt,
    shorten >=1.8pt,shorten <=1.8pt},
  outedge/.style={-{Stealth[length=1.8mm,width=1.35mm]},
    draw=npgGreen!70!black,line width=1.0pt,
    shorten >=1.8pt,shorten <=1.8pt},
  mapedge/.style={-{Stealth[length=1.8mm,width=1.35mm]},
    draw=npgNavy!72,line width=0.85pt,
    shorten >=1.5pt,shorten <=1.5pt},
  merge/.style={-{Stealth[length=1.8mm,width=1.35mm]},
    draw=npgGreen!62!black,line width=0.85pt,
    shorten >=1pt,shorten <=1pt},
  note/.style={font=\rmfamily\scriptsize,text=mut},
  edgelab/.style={font=\rmfamily\tiny,text=mut,inner sep=1pt,
    fill=white,rounded corners=1pt}
]
  \draw[spanel] (0.00,3.08) rectangle (13.60,6.85);
  \node[badge] at (0.36,6.51) {A};
  \node[ptitle] at (0.68,6.51)
    {{\boldmath Orbit under \(\Aut(X,R)\)}};

  \begin{scope}[yshift=-1mm]

  \draw[rounded corners=9pt,draw=npgBlue!70!black,fill=npgBlue!7,
    line width=0.85pt,dash pattern=on 2.6pt off 1.8pt]
    (0.72,4.02) rectangle (1.89,6.17);
  \node[font=\rmfamily\bfseries\scriptsize,text=npgNavy,anchor=west]
    at (2.06,6.10) {$\orb(A)=\orb(B)=\{A,B\}$};

  \node[species]  (A)  at ( 1.30,5.65) {$A$};
  \node[species]  (B)  at ( 1.30,4.54) {$B$};
  \node[reaction] (r1) at ( 3.10,5.10) {$r_1$};
  \node[species]  (C)  at ( 5.10,5.10) {$C$};
  \node[reaction] (r2) at ( 7.10,5.10) {$r_2$};
  \node[species]  (E)  at ( 9.10,5.10) {$E$};
  \node[reaction] (r3) at (11.10,5.10) {$r_3$};
  \node[species]  (G)  at (12.90,5.10) {$G$};
  \node[species]  (D)  at ( 9.10,3.70) {$D$};
  \node[species]  (F)  at (10.10,6.05) {$F$};

  \draw[inedge]  (A)--(r1);
  \draw[inedge]  (B)--(r1);
  \draw[outedge] (r1)--(C);
  \draw[inedge]  (C)--(r2);
  \draw[inedge]  (D)--(r2);
  \draw[outedge] (r2)--(E);
  \draw[inedge]  (E)--(r3);
  \draw[inedge]  (F)--(r3);
  \draw[outedge] (r3)--(G);
  \draw[outedge] (r3) to[out=-90,in=0] (D);

  \begin{scope}[shift={(2.06,3.42)}]
    \draw[inedge]  (0.00,0)--(0.65,0);
    \node[note,anchor=west] at (0.76,0) {substrate};
    \draw[outedge] (2.35,0)--(3.0,0);
    \node[note,anchor=west] at (3.11,0) {product};
  \end{scope}
  \end{scope}

  \draw[spanel] (0.00,-0.10) rectangle (13.60,2.78);
  \node[badge] at (0.36,2.44) {B};
  \node[ptitle] at (0.68,2.44)
    {{\boldmath Fixed order \(A<B\)}};

  \node[species]  (BA)  at (1.3,1.79) {$A$};
  \node[species]  (BB)  at (1.3,0.79) {$B$};
  \node[molecule] (Bm1) at (3.0,1.79) {$m_1$};
  \node[molecule] (Bm2) at (3.0,0.79) {$m_2$};

  \draw[{Stealth[length=1.8mm,width=1.35mm]}-{Stealth[length=1.8mm,width=1.35mm]},
    draw=npgBlue!78!black,line width=0.8pt,
    shorten >=1.5pt,shorten <=1.5pt]
    (BA.west) to[out=180,in=180,looseness=1.25] (BB.west);

  \draw[mapedge] (BA)--(Bm2);
  \draw[mapedge] (BB)--(Bm1);

  \node[state,minimum width=18mm,minimum height=9mm,align=left]
    (psi) at (5.05,1.93)
    {\(A\mapsto m_2\)\\[1pt]\(B\mapsto m_1\)};
  \node[canon,minimum width=18mm,minimum height=9mm,align=left]
    (psit) at (5.05,0.69)
    {\(A\mapsto m_1\)\\[1pt]\(B\mapsto m_2\)};

  \draw[mapedge] (Bm1.east) to[out=0,in=180] (psi.west);
  \draw[mapedge] (Bm2.east) to[out=0,in=180] (psit.west);

  \node[canon,minimum width=34mm] (key) at (8.17,1.31)
    {\(\kappa=[A\!:\!m_1,\ B\!:\!m_2]\)};
  \draw[merge] (psi.east) to[out=0,in=150] (key.west);
  \draw[merge] (psit.east) to[out=0,in=210] (key.west);

  \node[canon,minimum width=28mm,minimum height=9mm,align=center]
    (first) at (11.82,1.85)
    {new \(\kappa: (m_1,m_2)\)\\[1pt]expand \(\checkmark\)};
  \node[rounded corners=2.8pt,draw=npgRed!70,fill=npgSalmon!12,
    line width=0.75pt,minimum width=28mm,minimum height=9mm,
    font=\rmfamily\scriptsize,text=npgRed!75!black,align=center] (dup)
    at (11.82,0.55)
    {same \(\kappa: (m_2,m_1)\)\\[1pt]prune \(\times\)};

  \draw[merge] (key.east) to[out=0,in=180] (first.west);
  \draw[-{Stealth[length=1.8mm,width=1.35mm]},draw=npgRed!72,
    line width=0.8pt,dash pattern=on 2pt off 1.5pt,
    shorten >=1pt,shorten <=1pt]
    (key.east) to[out=0,in=180] (dup.west);
\end{tikzpicture}}
  \caption{Symmetry pruning.
    (A) The automorphism $\vartheta=(A\,B)$ exchanges equivalent
    species.
    (B) With \(A<B\) and \(m_1<m_2\), the tuple
    \((m_1,m_2)\) is lexicographically smaller than \((m_2,m_1)\).
    The assignments
    \(A\mapsto m_2,\ B\mapsto m_1\) and
    \(A\mapsto m_1,\ B\mapsto m_2\) share the canonical key
    \(\kappa=[A\!:\!m_1,\ B\!:\!m_2]\).
    The lex-minimal representative is expanded; the lex-larger symmetric
    assignment and its associated no-goods are pruned.}
  \label{fig:symmetry-prune}
\end{figure}

Symmetry breaking by dominance detection (SBDD) provides an alternative
inroad. Here, partial assignments equivalent under the action of
automorphism groups are identified by canonical state keys, computed from
the orbits and the fixed orders on $X$ and $M$ introduced above.  Let
$\psi:X'\to M$ with $X'\subseteq X$ be the partial assignment at a vertex
of the search tree. For each orbit $\omega=\orb(x)$, write
$\omega\cap X'=\{x_1<\cdots<x_k\}$ and order the assigned molecules as
$\{\psi(x_1),\ldots,\psi(x_k)\}=\{m_1<\cdots<m_k\}$. The canonical
representative $\psi^\circ$ is then defined by $\psi^\circ(x_i)=m_i$ for
$i=1,\ldots,k$, and agrees with $\psi$ outside these orbit-wise
rearrangements. In other words, each assigned orbit fragment is replaced by
its order-preserving representative.  The canonical key of the search
vertex is then
\begin{equation}
  \kappa(\psi)
  \coloneqq
  \operatorname{sort}
  \{(x,\psi^\circ(x))\mid x\in X'\}.
\end{equation}
Assignments that differ only by a permutation within an orbit therefore
receive the same canonical key. Once all extensions of a search vertex with
key $\kappa(\psi)$ have been explored, any later partial assignment $\psi'$
with $\kappa(\psi')=\kappa(\psi)$ can be discarded, since every extension
of $\psi'$ is equivalent to an extension of the already explored canonical
representative (see Fig.~\ref{fig:symmetry-prune}).

\subsection{Enforcing Reaction Mechanisms}

Recall from Sect.~\ref{sect:rm} that reaction types are determined by
partial ITS graphs, or -- equivalently -- by a triple
$(\Gamma^-,\Gamma^+,\hat\alpha)$ where $\Gamma^-$ and $\Gamma^+$ are
pattern graphs and $\hat\alpha:V(\Gamma^-)\to V(\Gamma^+)$ is a bijection
on their vertex sets. The reaction is encoded by the changes in the edge
set, i.e., the chemical bonds. From this representation, we can also derive
a \emph{context graph} $K$ consisting of all vertices as well as the edges
that do not change between $\Gamma^-$ and the $\Gamma^+$. The context graph
thus is by construction a spanning subgraph of both $\Gamma^-$ and
$\Gamma^+$, and hence there are subgraph morphisms $l:K\to\Gamma^-$ and
$r:K\to\Gamma^+$ that are bijections on the vertices. We refer to
\cite{GonzalezLaffitte:24b,Beier:26a} for a more detailed and more formal
discussion of this equivalence. Reinterpreting $\Gamma^-$, $\Gamma^+$, and
$K$ as simple graphs with edge labels encoding the multiplicities of edges,
$\Gamma^-\leftarrow K\rightarrow \Gamma^+$, constitutes as graph rewriting
rule in the double-pushout (DPO) formalism as implemented in
\texttt{M{\O}D} \cite{Andersen:16a}. The problem reduces to determining
whether there is a subgraph monomorphism $m:\Gamma^-\to G$ such that
application of the rule $p$ recovers $H$.

Whenever an abstract reaction $r\in R$ is assigned in some node of the
search tree, we therefore check if the assigned reaction
equ.(\ref{eq:assigned-r}) has an explanation by a reaction mechanism in
$\mathfrak{R}$. If this is not the case, the assignment is rejected by
setting $\mathcal{D}'\leftarrow\emptyset$ in
Algorithm~\ref{alg:basic_recursion}. 

\section{Implementation and Benchmarking}
\label{sec:evaluation}

\subsection{The \texttt{SynDOKU} software package} 

The backtrace-and-prune algorithm described above is implemented in Python
as the open-source software \texttt{SynDOKU}
\cite{tuyet_minh_phan_2026_21705350}, available at
\texttt{GitHub}~\footnote{\texttt{https://github.com/tuyetminhphan/SynDOKU}}. It
takes an abstract reaction network $(X,R)$, a finite set of candidate
molecules, and, optionally a finite set of reaction templates as input.  To
connect to chemical databases, molecules are represented by SMILES strings
\cite{Weininger:88}. Molecular formulas, masses, and canonical molecular
representations are computed using \texttt{RDKit}~\cite{Landrum:13}.
Applicability of reaction templates to assigned reactions is performed with
\texttt{SynKit}~\cite{Phan:25SynKit}. Applying the transformation rule
results in predicted product graphs, which, after standardization and
canonicalization are compared to the product graph prescribed by the
partial assignment $\Phi$.
The current implementation assumes that the abstract stoichiometry
represents complete reactions and does not infer missing reactants or
products. When networks are constructed from incomplete database records,
these records may first be curated using reaction-rebalancing methods such
as \texttt{SynRBL} \cite{Phan:24a}.

The same graph-based formalism is used to extract symmetry information from
the abstract reaction network. Using \texttt{SynKit}, the network is
encoded as a directed bipartite incidence graph, in which vertices
represent abstract entities and reactions, while directed edges record
reactant and product roles together with stoichiometric
multiplicities. Automorphisms of this representation identify structurally
interchangeable entities and therefore determine the corresponding
orbits. Computationally, this step combines efficient Weisfeiler--Lehman
refinement procedures \cite{Shervashidze11} with exact
individualization--refinement algorithms for graph automorphism and
canonical labeling \cite{Junttila07,Mckay14}. The detected orbits provide
the symmetry information used for lexicographic orbit ordering,
symmetry-derived exclusion constraints, and dominance detection by
canonical state keys.

The software package computes the obligatory isomer and oligomer relations
described in Sect.~\ref{sec:conservation-laws} directly from the
stoichiometric matrix. An exact rational basis $\{u^{(1)},\dots,u^{(k)}\}$
of the left nullspace of $\mathbf{S}$ is computed with \texttt{SymPy}
\cite{Meurer:17}. For each $x\in X$, define its conservation signature by
\begin{equation}
  \lambda(x)=\bigl(u_x^{(1)},\dots,u_x^{(k)}\bigr)\in\mathbb{Q}^k .
\end{equation}
Every non-zero signature is written as $\lambda(x)=s_xd_x$, where $d_x$ is
an oriented primitive integer vector and $s_x>0$. For signatures with the
same direction, the scale factors are jointly reduced to relatively prime
positive integer levels $h_x$. Zero signatures are omitted, and opposite
directions are treated separately. If $d_x=d_y$, then
\begin{equation}
  h_y\,\sigma(\varphi(x)) = h_x\,\sigma(\varphi(y)).
\end{equation}
Thus, $h_x=h_y$ identifies obligatory isomers. If $h_y=n h_x$ for some
$n\in\mathbf{N}$ with $n>1$, then $\sigma(\varphi(y))=n\sigma(\varphi(x))$,
and $y$ is recorded as an obligatory $n$-oligomer of $x$. The corresponding
candidate domains $\mathcal{D}_x$ are pruned by retaining only molecules
whose sum-formulas satisfy these relations.

\subsection{Benchmark data sets}

\subsubsection{Rule-generated NOG networks}

We generated benchmark instances with the DPO graph-rewriting engine
\texttt{M{\O}D} \cite{Andersen:16a}. A derivation graph was expanded from
seed molecules and reaction rules $\mathfrak{R}$, duplicate and identity
reactions were removed, and a connected subnetwork of prescribed size was
selected. Its molecular graphs define a ground-truth bijection $\varphi^*$;
replacing them by abstract entities yields $(X,R)$ and the candidate set
$M$ with $|X|=|M|$. Each reaction retains its generating rule, providing
reaction-mechanism constraints in addition to atom balance.

We used the non-oxidative glycolysis (NOG) grammar \cite{Andersen:17}, with
eleven seed molecules (Supplemental Fig.~\ref{fig:nog-seeds}) and seven DPO
rules representing five reaction classes (Supplemental
Fig.~\ref{fig:si-nog-dpo} and Table~\ref{tab:nog-rules}). Products were
limited to eight carbon atoms in the initial expansion and twelve in its
extensions. Ten independently selected, nested network series were
constructed with
\begin{equation*}
 |X|=|M|=|R| \in \{42,63,84,105\}.
\end{equation*}
Each original network contains equal numbers of one-to-one, one-to-many,
and many-to-many reactions, and two independently permutable entity pairs,
giving a controlled four-fold symmetry.

To isolate the effect of reaction information, we kept $X$ and $M$ fixed
and extended only the reaction set from $|R|=|X|$ to
$|R|=11|X|/7$. We write NOG-X$x$-R$r$ for $x=|X|$ and $r=|R|$;
NOG-$x$ abbreviates the original NOG-X$x$-R$x$ instance. Isomerization
constrains molecular composition directly, whereas hydrolysis,
phosphoketolase, and carbon-transfer reactions can remain ambiguous under
mass and composition alone and therefore require reaction-template
validation.

\subsubsection{KEGG glycolysis module M00001}

In addition to the artificially generated NOG instances, we consider the
KEGG module M00001, i.e., the Embden--Meyerhof glycolysis pathway from
glucose to pyruvate \cite{Kanehisa:25}. The resulting instance comprises
$|X|=|M|=25$ entities and molecules and $|R|=15$ reactions. Its K{\"o}nig
graph and the abstract reactions are shown in Supplemental
Fig.~\ref{fig:si-m00001-koenig}. To determine the set of reaction
mechanisms $\mathfrak{R}$, we first computed atom-to-atom maps (AAMs) from
reaction SMILES with explicitly represented hydrogen atoms using
\texttt{GraphormerMapper} \cite{Nugmanov:22}. The resulting AAMs were
converted to ITS graphs and clustered with \texttt{SynTemp}
\cite{Phan:25SynTemp}, yielding eleven reaction-center classes. The
reaction center of \texttt{class:9} was corrected manually to include the role of the
\(\mathrm{Fe}/\mathrm{Fe}^{+}\) pair. The eleven reaction centers define
the DPO rules in $\mathfrak{R}$, which are shown in Supplemental
Fig.~\ref{fig:si-m00001-rules}. The 25 molecular graphs are
compiled in Supplemental Fig.~\ref{fig:si-m00001-molecules}. This example
differs from the NOG data in that both the reaction network and the molecular
graphs are derived from a biochemical database. It thus provides an example
in which the reactions have different numbers of reactants and products and
include cofactors as well as non-trivial stoichiometric coefficients.

\subsection{Experimental design}

All configurations enforce atom conservation and reaction-mechanism
validation. We compare five additional strategies: mass-interval pruning
(M), obligatory isomer and oligomer pruning (I/O), compositional pruning
(C), frontier scheduling (F), and symmetry breaking (S). The configurations
B1--B7 compare combinations of the three pruning procedures. Configurations
B8--B12 additionally use frontier scheduling, while B13--B16 also employ
symmetry breaking. The precise configurations are listed in Supplemental
Table~\ref{tab:ablation}.

For the study of the effect of additional reactions, we compare
configurations B13, B15, and B16 on the original and the reaction-enriched
networks. The two instances in each comparison have the same entities and
the same molecules and differ only in the number of reactions. The
corresponding configurations and results are given in Supplemental
Table~\ref{tab:reaction-extended-results}.

Running times are summarized over network seeds 0--9 by the mean and sample
standard deviation as well as by the median and interquartile range.
Comparisons with the ground truth assignment $\varphi^*$ are performed
subject to the symmetries of $(X,R)$. To this end we compare the
assignments $\varphi(\orb(x))$ and $\varphi^*(\orb(x))$ as multisets.  This
comparison regards two assignments as equivalent if they differ only by an
automorphism of the abstract reaction network. In particular, it does not
require an arbitrary correspondence between entities that belong to the
same orbit. In every completed NOG run, the ground-truth assignment was
recovered up to such an automorphism.

\subsection{Benchmark results}

\subsubsection{Network structure and reaction information}
The structural properties of the NOG networks are summarized in
Supplemental Table~\ref{tab:network-information}. For the original
networks, the average reaction degree is $\bar d_R=3.00$ by construction.
The mean value of the maximal reaction degree increases from $12.20$ for
NOG-X42-R42 to $16.30$ for NOG-X105-R105. At the same time, the mean
density of obligatory isomer relations decreases from $4.27\%$ to
$2.84\%$, and the mean density of obligatory oligomer relations decreases
from $12.09\%$ to $9.28\%$. We note that the standard deviations of the
oligomer densities are large. Thus, the ten independently generated series
differ considerably in the amount of information that can be obtained from
the conservation laws.

Adding reactions while holding $X$ and $M$ fixed raises
$\bar d_R$ from $3.00$ to $4.71$ and increases the maximum reaction degree
for all four network sizes. The densities of obligatory isomer and oligomer
relations also increase in all four cases. For NOG-X105-R165, for instance,
the mean isomer density is $3.58\%$, compared to $2.84\%$ for
NOG-X105-R105, while the corresponding oligomer densities are $14.58\%$
and $9.28\%$, respectively. The additional reactions therefore provide
more constraints without changing the set of molecules.

\subsubsection{Ablation study}
Supplemental Table~\ref{tab:ablation} compares the different pruning and
search strategies. As expected, the running times increase with the size of the
network. We observe, however, that frontier scheduling becomes increasingly
important for the larger instances. For example, adding F to compositional
pruning (B3 versus B9) reduces the mean running time from $32.32$ to $16.73$
seconds at X84 and from $235.24$ to $88.81$ seconds at X105. Similarly,
adding F to the I/O+C configuration (B6 versus B11) reduces the X105 mean
from $229.35$ to $100.74$ seconds. Among the configurations with frontier
scheduling, B11 has the lowest mean and median running time up to X84 and the
lowest median running time at X105.

Symmetry breaking provides a further improvement as the instances grow.
Configurations without S return the four assignments induced by the two
controlled transpositions, whereas configurations with S return one
representative. Comparing otherwise identical settings, B11 versus B15
(I/O+C+F) decreases the X105 mean running time from $100.74$ to $51.93$
seconds, and B12 versus B16 (M+I/O+C+F) decreases it from $130.44$ to
$60.45$ seconds. For the smaller instances the differences are much less
pronounced. We also note that enabling all pruning procedures does not
always give the smallest running time. B15 has a smaller mean running time
than B16 for X42, X63, and X84, while B13 has the smallest mean running
time at X105.  Thus the computational effort of an additional pruning step
is not always compensated by a further reduction of the search
tree. Obligatory isomer and oligomer pruning by itself (B2) gives
particularly variable running times; it was too slow to complete the full
benchmark at the two largest sizes.  Nine of ten X84 seeds and seven of ten
X105 seeds completed; the remaining runs exceeded the 1800-second timeout,
so summary statistics for these two incomplete cells are not reported.

Mass-interval and compositional pruning exhibit a notable interaction
with frontier scheduling. Without F, M and C have similar mean running times at
X42 and X63, but M is faster at X84 (B1: 21.98\,s; B3: 32.32\,s) and X105
(B1: 137.46\,s; B3: 235.24\,s). With F, the ranking reverses: C+F is faster
than M+F at every size and reduces the X105 mean from 125.42\,s for B8 to
88.81\,s for B9. Thus, the utility of a pruning criterion cannot be assessed
independently of the variable-ordering strategy with which it is combined.

Figure~\ref{fig:nog-ablation-landscape} summarizes the results of the
ablation study. Panel B shows how frontier scheduling reverses the relative
performance of M and C, while panels C and D show the seed-matched speedups
from adding frontier scheduling and symmetry breaking, respectively.
The seed-level running time distributions for representative configurations
from the three search regimes are shown in
Figure~\ref{fig:nog-running time-distributions}.

\begin{figure}[H]
  \centering
  \includegraphics[width=\linewidth]{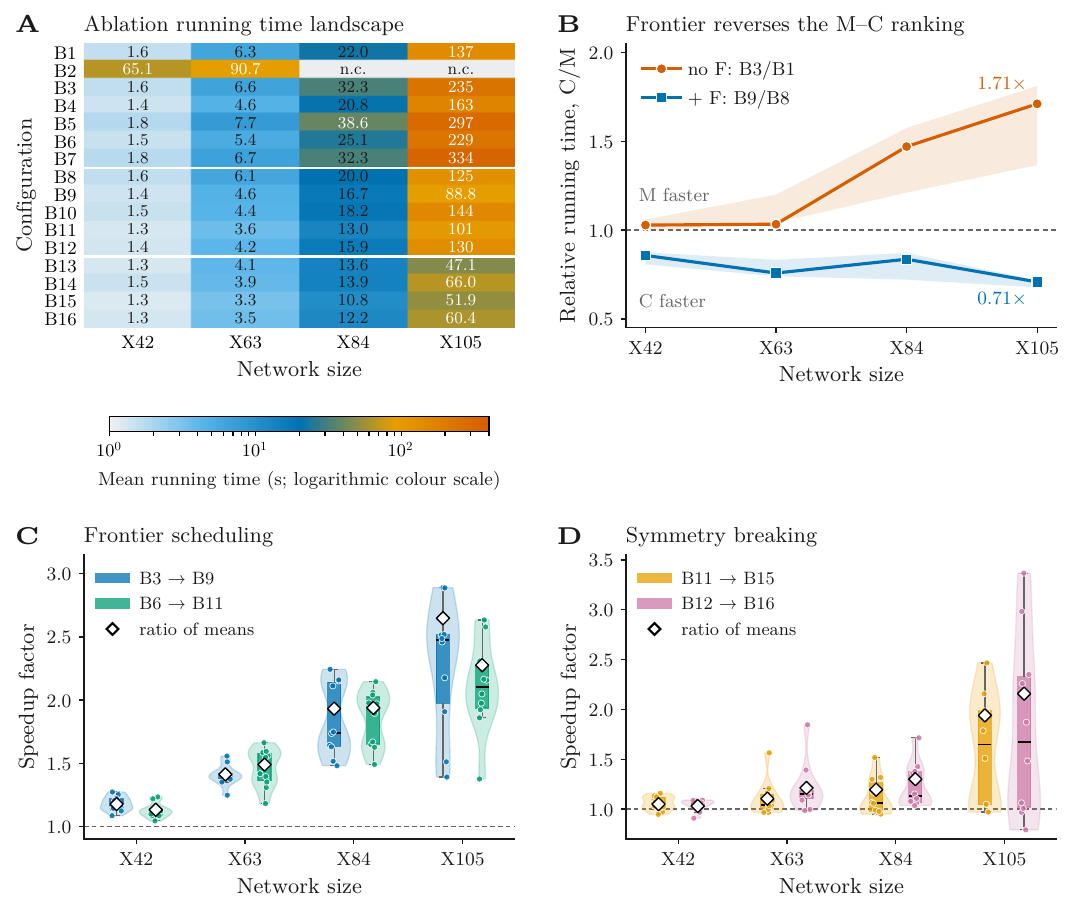}
  \caption{Running time landscape and paired effects of the NOG ablation
    strategies. (A) Mean running time for configurations B1--B16 and the
    four original network sizes; the colour scale is logarithmic. The
    entries marked n.c.\ were not completed because B2 was too slow at X84
    and X105. White horizontal rules separate pruning-only configurations,
    configurations with frontier scheduling, and configurations with
    symmetry breaking.  (B) Relative running time of compositional pruning
    to mass-interval pruning, without frontier scheduling (B3/B1) and with
    frontier scheduling (B9/B8). Markers show ratios of arithmetic mean
    running times and shaded bands show the interquartile range of
    seed-matched running time ratios.  Values above the dashed line at one
    favour M, whereas values below it favour C. (C,D) Seed-matched speedup
    factors after adding frontier scheduling or symmetry breaking,
    respectively. Each speedup is the baseline running time divided by the
    optimized running time, so values above the dashed line at one indicate
    an improvement. Violin envelopes show kernel-density estimates, boxes
    show the interquartile range and median, circles denote individual
    network seeds, and white diamonds denote ratios of arithmetic mean
    running times.}
  \label{fig:nog-ablation-landscape}
\end{figure}

\begin{figure}[H]
  \centering
  \includegraphics[width=\linewidth]{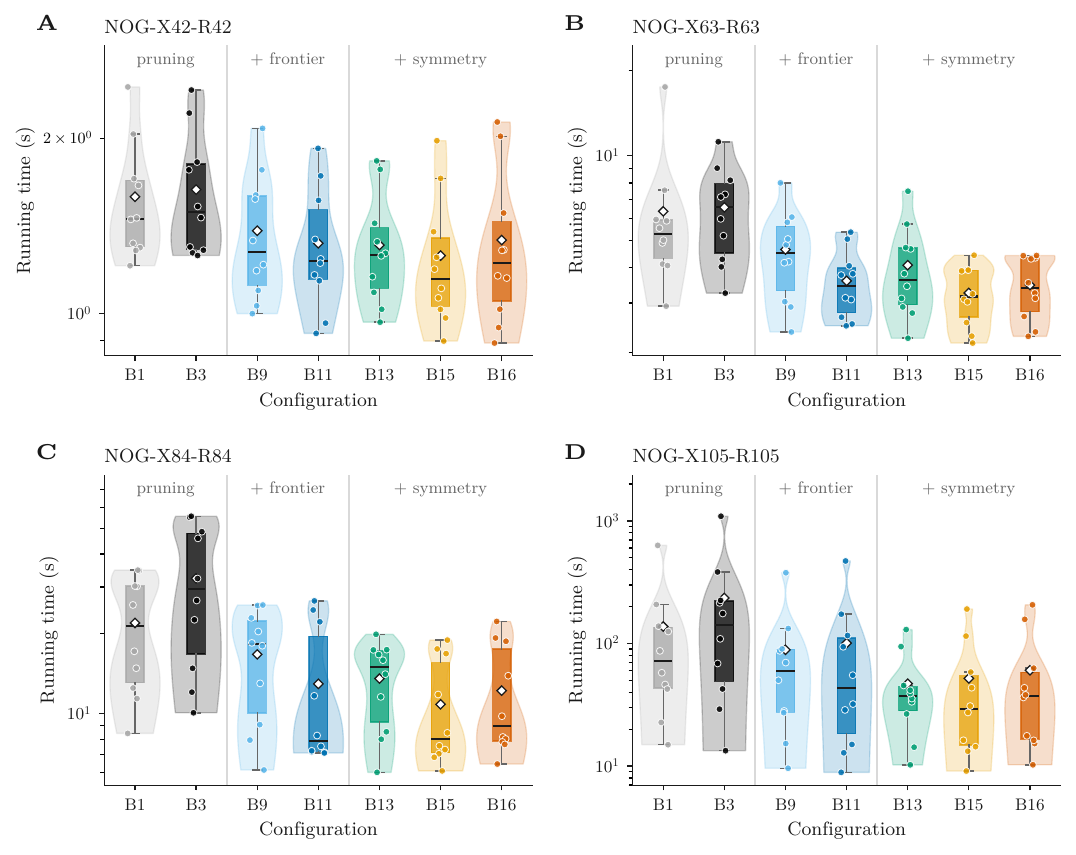}

  \caption{Seed-level running time distributions for representative NOG
    configurations. Panels A--D correspond to the four original network
    sizes. Violin envelopes show kernel-density estimates across the ten
    network seeds (bandwidth factor 0.45); boxes show the interquartile
    range and median, whiskers extend to the most extreme observations
    within 1.5 interquartile ranges, circles denote individual seeds, and
    diamonds denote arithmetic means. Vertical rules separate pruning-only
    configurations (B1 and B3), configurations with frontier scheduling (B9
    and B11), and configurations with symmetry breaking (B13, B15, and
    B16). Running time axes are logarithmic and scaled independently in
    each panel.}
  \label{fig:nog-running time-distributions}
\end{figure}

\subsubsection{Reaction-enriched networks}
The effect of increasing the number of reactions is shown in Supplemental
Table~\ref{tab:reaction-extended-results}. Since the set of molecules is
unchanged, these comparisons isolate the information contributed by the
additional reactions. For X42 and X63 the differences are small and are not
consistent across the three configurations. For X84, increasing $|R|$ from
84 to 132 reduces the mean running time from $13.55$ to $7.52$ seconds for
B13, from $10.85$ to $8.29$ seconds for B15, and from $12.21$ to $9.06$
seconds for B16. At X105, enrichment from 105 to 165 reactions reduces the
corresponding means from $47.06$, $51.93$, and $60.45$ seconds to $33.90$,
$41.35$, and $46.60$ seconds. Thus, additional reactions do not necessarily
increase the running time. For the two larger NOG data sets, the reduction
of the candidate domains due to the additional constraints more than
compensates for the additional reaction-mechanism checks.

Figure~\ref{fig:nog-reaction-enrichment} shows that this effect is driven
by a reduction in search effort rather than by uniformly lower per-reaction
work. The benefit is most evident for X84 and X105, while the smaller
networks show little or inconsistent running time gain.
\begin{figure}[H]
  \centering
  \includegraphics[width=\linewidth]{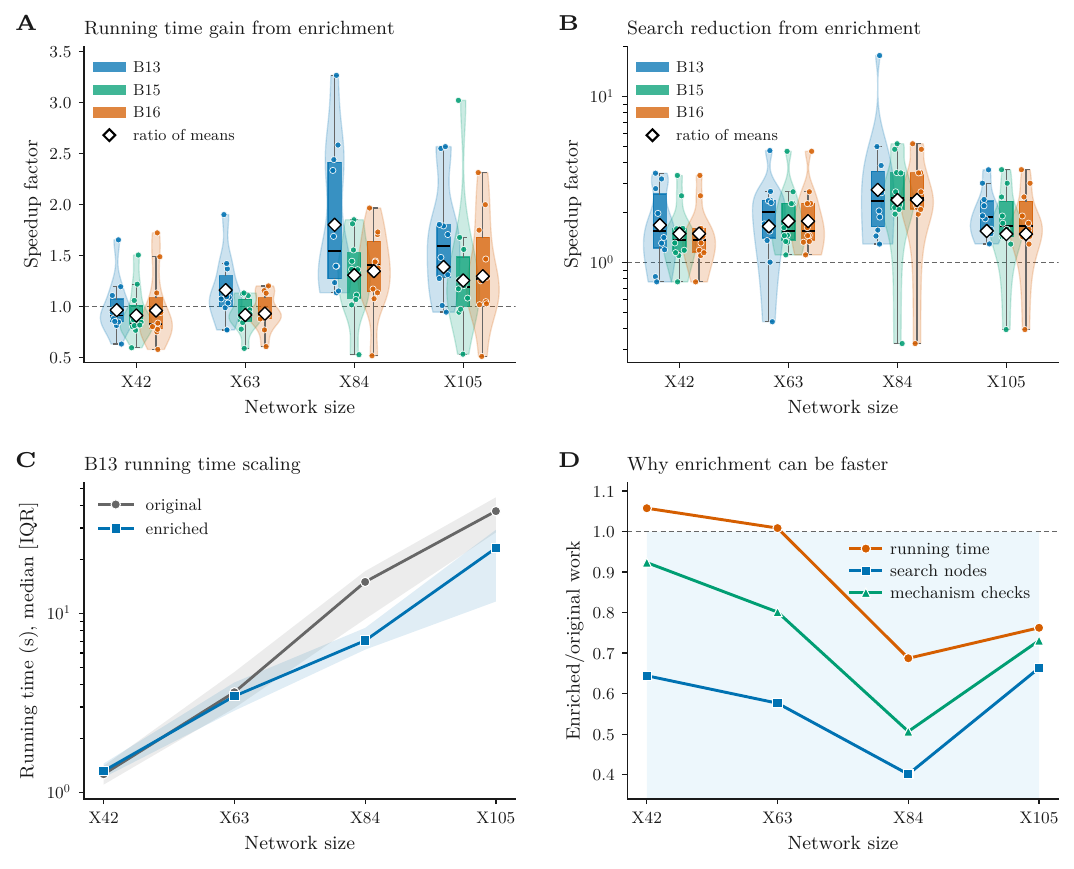}

  \caption{Effect of reaction enrichment with fixed entity and molecule
    sets. (A) Seed-matched running time speedup and (B) seed-matched
    reduction in explored search nodes for configurations B13, B15, and
    B16. Ratios are original divided by enriched, so values above the
    dashed line at one favour the enriched networks. Violin envelopes show
    kernel-density estimates, boxes show the interquartile range and
    median, circles denote individual network seeds, and white diamonds
    denote ratios of arithmetic means. The search-reduction axis in panel B
    is logarithmic.  (C) Median running time and interquartile range for
    B13 on the original and enriched networks. (D) Enriched-to-original
    ratios of mean running time, explored search nodes, and
    reaction-mechanism checks, averaged over B13, B15, and B16. Values
    below one indicate less work on the enriched networks.}
  \label{fig:nog-reaction-enrichment}
\end{figure}

\subsubsection{KEGG glycolysis module M00001}

The structural characteristics of the KEGG M00001 instance are
included in Supplemental Table~\ref{tab:network-information}. The network
contains 25 abstract entities and 15 reactions, with
$\bar d_R=2.32$ and $d_R^{\max}=6$. Of the 15 reactions, three
($20.0\%$) are one-to-one, two ($13.3\%$) are one-to-many, and ten
($66.7\%$) are many-to-many. The conservation laws yield three obligatory
isomer relations and two obligatory oligomer relations, corresponding to
densities of $1.00\%$ and $0.67\%$, respectively. Compared with the NOG
instances, the KEGG network therefore combines a smaller number of
entities with more heterogeneous reaction arities and a predominance of
many-to-many reactions. This reflects the explicit representation of
cofactors and non-trivial stoichiometric coefficients.

The automorphism analysis identifies the two nontrivial species orbits
$\{L,P\}$ and $\{N,Q\}$. These orbits are not independently permutable.
Instead, the corresponding automorphism acts as
$\vartheta_X=(L\,P)(N\,Q)$ on the species and as
$\vartheta_R=(r_8\,r_9)$ on the reactions. Thus, exchanging $L$ and $P$
requires the simultaneous exchange of $N$ and $Q$. This coupling is
important for symmetry breaking because the two transpositions must be
treated as a single network automorphism.

The backtracking results are given in Supplemental
Table~\ref{tab:m00001-backtracking}. All three configurations employ
compositional pruning, frontier scheduling, and symmetry breaking.
Configuration B13 explores 145 search nodes in $489.11$ seconds. Adding
obligatory isomer and oligomer pruning in B15 does not reduce the search
effort for this instance: the number of explored nodes increases to 152 and
the running time to $494.90$ seconds. With mass-interval pruning
additionally enabled, B16 explores 144 nodes in $439.35$ seconds. Thus, B16
is the fastest of the three configurations, reducing the running time by
$10.2\%$ relative to B13 and by $11.2\%$ relative to B15. The three
configurations nevertheless explore nearly the same number of nodes.
Hence, for this instance, the obligatory isomer and oligomer relations and
the mass intervals do not substantially contract the search tree beyond the
restrictions already imposed by compositional pruning, frontier scheduling,
and symmetry breaking.

Despite having fewer entities and reactions than the NOG networks, M00001
requires between $439$ and $495$ seconds, while exploring only 144--152
search nodes. This result illustrates that the running time is not
determined by network size or the number of explored search nodes alone:
the cost of validating the chemically heterogeneous reactions at each
search node is also consequential.

\subsubsection{Non-uniqueness beyond network symmetry}
\label{sec:nog-counterexample}

There is no guarantee that solutions to the assignment problems
considered here are unique. This is obvious if the target network contains
symmetries. Non-unique solutions, however, are not limited to such cases.
The analysis of the NOG benchmark identified an ambiguity that cannot be
attributed to the prescribed molecular symmetry orbits. The corresponding
instance contains the two aldol reactions in
Fig.~\ref{fig:nog-product-exchange}.

\begin{figure}[htbp]
  \centering
  \resizebox{0.98\linewidth}{!}{\begin{tikzpicture}[
  x=1cm,y=1cm,font=\rmfamily\footnotesize,
  line cap=round,line join=round,
  rulecard/.style={rounded corners=2.6pt,draw=hair!85,fill=white,
    line width=0.6pt},
  rulearrow/.style={-{Stealth[length=1.8mm,width=1.35mm]},
    line width=0.7pt,draw=ink!48},
  dpoatom/.style={circle,minimum size=5.2mm,inner sep=0pt,
    line width=0.55pt,
    font=\figatomfont\bfseries\fontsize{6.4}{6.4}\selectfont},
  dpoC/.style={dpoatom,draw=cpkC!70,fill=cpkC,text=white},
  dpoO/.style={dpoatom,draw=cpkO!80!black,fill=cpkO,text=white},
  dpoH/.style={dpoatom,draw=hair,fill=white,text=ink2},
  keepbond/.style={draw=ink!55,line width=1.0pt},
  delbond/.style={draw=pbrk,line width=1.45pt},
  newbond/.style={draw=npgGreen!70!black,line width=1.55pt}
]
  \draw[spanel] (0.00,5.10) rectangle (13.60,8.35);
  \node[badge] at (0.36,8.01) {A};

  \draw[rulecard,draw=pbrk!42] (0.56,5.38) rectangle (4.06,7.72);
  \draw[rulecard]               (4.86,5.38) rectangle (8.74,7.72);
  \draw[rulecard,draw=npgGreen!52!black]
                                  (9.54,5.38) rectangle (13.04,7.72);

  \node[font=\rmfamily\bfseries\large,text=ink] at (2.31,7.42) {\(L\)};
  \node[font=\rmfamily\bfseries\large,text=ink] at (6.80,7.42) {\(K\)};
  \node[font=\rmfamily\bfseries\large,text=ink] at (11.29,7.42) {\(R\)};

  \begin{scope}[yshift=-2mm]
  \node[dpoC] (LdC) at (1.86,6.05) {C};
  \node[dpoH] (LH)  at (1.86,6.95) {H};
  \draw[delbond] (LdC)--(LH);
  \node[dpoC] (LaC) at (2.76,6.05) {C};
  \node[dpoO] (LO)  at (2.76,6.95) {O};
  \draw[keepbond]
    ($(LaC)+(-1.55pt,2.15mm)$)--($(LO)+(-1.55pt,-2.15mm)$);
  \draw[delbond]
    ($(LaC)+(1.55pt,2.15mm)$)--($(LO)+(1.55pt,-2.15mm)$);
  \node[dpoC] (KdC) at (6.35,6.05) {C};
  \node[dpoH] (KH)  at (6.35,6.95) {H};
  \node[dpoC] (KaC) at (7.25,6.05) {C};
  \node[dpoO] (KO)  at (7.25,6.95) {O};
  \draw[keepbond] (KaC)--(KO);

  \node[dpoC] (RdC) at (10.84,6.05) {C};
  \node[dpoH] (RH)  at (10.84,6.95) {H};
  \node[dpoC] (RaC) at (11.74,6.05) {C};
  \node[dpoO] (RO)  at (11.74,6.95) {O};
  \draw[newbond]  (RdC)--(RaC);
  \draw[keepbond] (RaC)--(RO);
  \draw[newbond]  (RO)--(RH);
  \end{scope}

  \draw[rulearrow] (4.78,6.30)--(4.14,6.30)
    node[midway,above=2pt,font=\rmfamily\scriptsize,text=mut] {\(l\)};
  \draw[rulearrow] (8.82,6.30)--(9.46,6.30)
    node[midway,above=2pt,font=\rmfamily\scriptsize,text=mut] {\(r\)};

  \draw[spanel] (0.00,2.60) rectangle (13.60,4.85);
  \node[badge] at (0.36,4.51) {B};
  \node[inner sep=0] at (7.02,3.70) {%
    \resizebox{12.25cm}{!}{%
      \schemestart
        \chemfig{HO-[:-30]-[:30](=[:90]O)
          -[:-30](-[:-90]OH)
                    (-[:90,,,,draw=pbrk,line width=1.4pt]H)
          -[:30](-[:90]OH)-[:-30](-[:-90]OH)-[:30]-[:-30]OH}
        \+
        \chemfig{O=[:-30,,,,draw=pbrk,line width=1.4pt]
          -[:30]-[:-30]O-P(=[:90]O)(-[:-90]OH)-OH}
        \arrow{->}
        \chemfig{HO-P(=[:90]O)(-[:-90]OH)-O-[:-30]
          -[:30](-[:90]O-[:25,,,,draw=npgGreen!70!black,
                            line width=1.4pt]H)
          -[:-30,,,,draw=npgGreen!70!black,line width=1.4pt]
            (-[:-90]OH)
          -[:30](=[:90]O)-[:-30](-[:-90]OH)-[:30](-[:90]OH)
          -[:-30](-[:-90]OH)-[:30]-[:-30]OH}
      \schemestop
    }%
  };

  \draw[spanel] (0.00,0.10) rectangle (13.60,2.35);
  \node[badge] at (0.36,2.01) {C};
  \node[inner sep=0] at (7.02,1.20) {%
    \resizebox{12.25cm}{!}{%
      \schemestart
        \chemfig{HO-[:-30]-[:30](=[:90]O)
          -[:-30](-[:-90]OH)
                    (-[:90,,,,draw=pbrk,line width=1.4pt]H)
          -[:30](-[:90]OH)-[:-30]-[:30]O-P(=[:90]O)(-[:-90]OH)-OH}
        \+
        \chemfig{O=[:-30,,,,draw=pbrk,line width=1.4pt]
          -[:30](-[:90]OH)-[:-30]-[:30]OH}
        \arrow{->}
        \chemfig{HO-P(=[:90]O)(-[:-90]OH)-O-[:-30]
          -[:30](-[:90]OH)-[:-30]
          -[:30](-[:90]O-[:25,,,,draw=npgGreen!70!black,
                            line width=1.4pt]H)
          (-[:-90,,,,draw=npgGreen!70!black,line width=1.4pt]
            -[:-150](=[:-90]O)-[:150]-[:-150]OH)
          -[:-30](-[:-90]OH)-[:30](-[:90]OH)-[:-30]-[:30]OH}
      \schemestop
    }%
  };
\end{tikzpicture}}
  \caption{Aldol-rule validation under product exchange.
    (A) DPO rule $L\xleftarrow{\,l\,}K\xrightarrow{\,r\,}R$.
    (B,C) Its application to two complete reactions.
    Red bonds are deleted and green bonds are formed.}
  \label{fig:nog-product-exchange}
\end{figure}

The two products have the same molecular formula,
$\mathrm{C_8H_{17}O_{11}P}$. When they are exchanged between these two
reactions, both resulting reactions still satisfy elemental balance and
the implemented aldol-reaction template.
Fig.~\ref{fig:nog-product-exchange}A makes the common local rewrite
explicit: the indicated hydrogen is transferred from the donor carbon to
the acceptor oxygen, the acceptor carbonyl is reduced to a single bond,
and the new carbon--carbon bond is formed. Nevertheless, this exchange is
not generated by any of the prescribed symmetry orbits of the abstract
network. The ambiguity is therefore chemical rather than a consequence of
an automorphism of $(X,R)$: two distinct assignments satisfy the same
formula and local reaction-mechanism constraints.

The KEGG M00001 instance provides a complementary example. All three
configurations return exactly two solutions. One agrees with the
ground-truth assignment $\varphi^*$ up to an automorphism of $(X,R)$,
whereas the other exchanges $G$ and $T$, which represent fructose
6-phosphate and glucose 6-phosphate, respectively. The pair $\{G,T\}$ is
identified as an obligatory isomer relation, but neither entity belongs
to a nontrivial species orbit. Consequently, compositional constraints and
symmetry breaking cannot distinguish the two assignments. Their ambiguity
instead arises because, in particular, the reaction center of
\texttt{class:10} describes only the local bond changes of the
isomerization and lacks sufficient chemical context to distinguish these
isomers consistently.

Together, the NOG and KEGG examples show that uniqueness modulo network
symmetry is stronger than feasibility under elemental balance and local
reaction-mechanism validation. The ground-truth assignment may therefore
be recovered among the feasible solutions without being uniquely
determined by the available constraints.

\section{Concluding Remarks}

Abstract reaction networks provide an appealing paradigm for non-standard
computation and a framework for modelling a wide range of complex
systems. It is therefore natural to ask whether and how a given reaction
network can be implemented by means of chemical reactions. In recent work
\cite{Mueller:22a}, it was shown that every conservative reaction network
has a realization in terms of chemical graphs and (atom-)balanced
reactions. Here we ask how difficult it is to find such a realization in
practice when candidate molecules and, possibly, the relevant reaction
mechanisms are given. We show that the problem is NP-complete even in the
simplest case, in which all reactions have the form $A+B\longrightarrow C$
and all molecules are oligomers of a single moiety. Nevertheless, we
develop a backtrace-and-prune algorithm inspired by the subgraph
isomorphism problem and general principles from constraint programming that
solves the problem efficiently, at least for moderate-size instances. We
use chemical reaction networks produced by generative graph grammars as
benchmarking data and explore the impact of different strategies for
pruning the search trees. We observe that symmetries of the abstract
network have a substantial impact on performance and thus require
systematic handling.

The benchmark results further show that the utility of a pruning criterion
depends on the search strategy with which it is combined: frontier
scheduling becomes increasingly important with network size, while enabling
all pruning procedures is not always optimal. Increasing the number of
reactions can even reduce the running time when the additional constraints
contract the search sufficiently to compensate for the additional
reaction-mechanism checks. The KEGG M00001 example further demonstrates
that neither network size nor the number of explored search nodes alone
determines the running time, since chemically heterogeneous reactions can
make validation at each search node costly. Finally, the NOG and KEGG
examples show that recovering the ground-truth assignment among the
feasible solutions does not imply uniqueness modulo network symmetry.

So far, we have considered only the simplest case $|X|=|M|$. Clearly, the
results remain valid if $|M|>|X|$ remains bounded and $\varphi:X\to M$ is
required to be injective, i.e., different abstract entities must be
realized by different molecules. Some further generalizations can also be
handled in a similar manner, at least in principle.
\begin{definition}
  A reaction network is \emph{generated from $S\subseteq X$} if there is a
  (partial) order $\prec$ of the reactions that for each $r\in R$,
  $s^-_{xr}>0$ it holds that $x\in S$ or there is a reaction $r'\in R$ with
  $r'\preceq r$ such that $s^-_{xr}>0$. A set $S$ from which $(X,R)$ is
  generated is called a \emph{seed set}. 
\end{definition}
It then suffices to provide a finite candidate set $M$ for the seed
set $S\subseteq X$ to allow the construction of a finite candidate set
$M^*\subset\mathfrak{M}$. For larger networks, however, the set $M^*$ may
become too large to be of practical use. We expect this to be the case in
particular when the reactions are not constrained by a small set of allowed
mechanisms.

On the other hand, a key result of \cite{Mueller:22a},
Prop.~\ref{prop:exists}, guarantees that, without constraints on molecules
and reactions, it is always possible to find a solution. It remains an open
question whether this is also true for a small set $\mathfrak{R}$ of
prescribed reaction mechanisms and unconstrained choices of molecules,
i.e., $M=\mathfrak{M}$. In this case, the infinite set $\mathfrak{M}$ is
not part of the input. It is also unclear whether this variant of the
problem remains in NP, since the molecules $m\in\mathfrak{M}$ needed for a
realization may grow exponentially with the input size
$N\coloneqq |X|+|R|+|\mathfrak{R}|$. Thus, there is no longer a guarantee
that a certificate can be verified in polynomial time.

Several interesting questions remain open for future research. While a
given seed set at least in principle enables the strategy pursued here,
this is no longer true if no constraints on the set of molecules are
provided as part of the input. Moreover, it is often useful to consider
(chemical) reaction networks embedded in an open system that provides
``food'' from an external reservoir and removes a commensurate output into
an external reservoir. In such a setting, the appropriate notion of
realizability will also involve energetic considerations, e.g., an overall
exothermic transformation from ``food'' reactants to output products and,
depending on the application, even kinetic considerations.

\backmatter

\section*{Declarations}

\bmhead{Funding}

This work has received funding from the European Union’s Horizon Europe
Doctoral Network programme under the Marie Sk{\l}odowska-Curie grant
agreement No. 101072930 (TACsy---Training Alliance for Computational
systems chemistry).  Views and opinions expressed are however those of the
author(s) only and do not necessarily reflect those of the European
Union. Neither the European Union nor the granting authority can be held
responsible for them.

Further support was provided by the Novo Nordisk Foundation under grant
reference NNF21OC0066551 (Mathematical Modelling for Microbial Community
Induced Metabolic Diseases, MATOMIC).

PFS acknowledges support by the Federal Ministry of Research, Technology,
and Space of Germany through DAAD project 57616814 (SECAI, School of
Embedded Composite AI), jointly with S{\"a}chsische Staatsministerium für
Wissenschaft, Kultur und Tourismus in the programme Center of Excellence
for AI-research \emph{Center for Scalable Data Analytics and Artificial
Intelligence Dresden/Leipzig} (ScaDS.AI, proj.\ no.\ SCADS24B), and by the
Deutsche Forschungsgemeinschaft (DFG, German Research Foundation) under
Germany's Excellence Strategy (EXC-3105/1, 533765739).

\bmhead{Competing interests}
The authors declare no competing interests.

\bmhead{Author contributions}
T.-M.P. developed the software, curated the benchmark data, performed the
experiments, and prepared the initial manuscript. D.M., T.-L.P., C.F., and
P.F.S. contributed to the methodology, formal analysis, interpretation of the
results, and review and editing of the manuscript. P.F.S. supervised the
project and acquired funding. All authors reviewed and approved the final
manuscript.

\bmhead{Data availability}
The benchmark inputs and result tables supporting this study are distributed
with \texttt{SynDOKU} v0.2.0. The source repository is available at
\url{https://github.com/tuyetminhphan/SynDOKU}, and the archived release is
available at \url{https://doi.org/10.5281/zenodo.21705350}.

\bmhead{Code availability}
\texttt{SynDOKU} v0.2.0 is available under the Apache License 2.0 from
\href{https://pypi.org/project/syndoku/}{PyPI},
\href{https://anaconda.org/tuyetminhphan/syndoku}{Anaconda.org}, and
\href{https://hub.docker.com/r/tuyetminhphan/syndoku}{Docker Hub}. Source code
is available at \url{https://github.com/tuyetminhphan/SynDOKU}.

\bmhead{Ethics approval and consent to participate}
Not applicable.

\bmhead{Consent for publication}
Not applicable.

\bibliography{syndoku}

\clearpage
\setcounter{page}{1}
\renewcommand{\thepage}{S\arabic{page}}
\section*{Supplemental Material}

\setcounter{figure}{0}
\renewcommand{\thefigure}{S\arabic{figure}}
\renewcommand{\theHfigure}{S\arabic{figure}}
\setcounter{table}{0}
\renewcommand{\thetable}{S\arabic{table}}
\renewcommand{\theHtable}{S\arabic{table}}

\subsection*{S1: Software Installation and Benchmark Environment}

\subsubsection*{S1.1: Installation}

\texttt{SynDOKU} \cite{tuyet_minh_phan_2026_21705350} requires
Python~3.11 or newer. The software can be installed from
\href{https://pypi.org/project/syndoku/}{PyPI} with:
\begin{lstlisting}[style=shell]
pip install syndoku
\end{lstlisting}
It is also available as a platform-independent package from the
\href{https://anaconda.org/tuyetminhphan/syndoku}{author's channel on
Anaconda.org}:
\begin{lstlisting}[style=shell]
conda install tuyetminhphan::syndoku
\end{lstlisting}
A pre-built Linux container, including the runtime environment and all
dependencies, can be obtained from
\href{https://hub.docker.com/r/tuyetminhphan/syndoku}{Docker Hub}:
\begin{lstlisting}[style=shell]
docker pull tuyetminhphan/syndoku
\end{lstlisting}
Alternatively, a source checkout can be installed directly from the
\texttt{GitHub} repository:
\begin{lstlisting}[style=shell]
git clone https://github.com/tuyetminhphan/SynDOKU.git
cd SynDOKU
python -m pip install .
\end{lstlisting}
Each method installs \texttt{SynDOKU} together with the dependencies needed
to run the software.

\subsubsection*{S1.2: Benchmark Environment}

The benchmarks were run with \texttt{SynDOKU} version 0.2.0 on an HP
EliteBook 8 G1a 14-inch computer running Fedora Linux 43. The computer was
equipped with an 8-core/16-thread AMD
Ryzen 7 250 processor with a maximum boost frequency of 5.13\,GHz,
integrated AMD Radeon 780M graphics, an AMD NPU, 64\,GB RAM, and a 2\,TB
SK hynix NVMe solid-state drive.

\clearpage

\subsection*{S2: Details of Benchmarking Data Sets}

\subsubsection*{S2.1: Rule-Generated NOG Networks}

\begin{table}[!ht]
  \centering
  \caption{Rule classes represented in the M{\O}D-generated NOG
    benchmark. The stoichiometric schema indicates the species-level
    structure induced by each rule class.}
  \label{tab:nog-rules}
  \begin{tabular}{lll}
  \hline
  Rule(s) & Reaction class & Stoichiometric schema \\
  \hline
  AL
    & Aldol-type C--C bond formation
    & $X_1+X_2\to X_3$ \\
  AlKe/KeAl
    & Aldose--ketose interconversion
    & $X_1\to X_2$ \\
  PHL
    & Phosphate ester hydrolysis
    & $X_1+\mathrm{H_2O}\to P_i+X_2$ \\
  PK
    & Phosphoketolase cleavage
    & $P_i+X_1\to \mathrm{H_2O}+X_2+X_3$ \\
  TAL/TKL
    & Carbon-fragment transfer
    & $X_1+X_2\to X_3+X_4$ \\
  \hline
  \end{tabular}
\end{table}

\clearpage

\begin{figure}[!ht]
  \centering
  \input{figure/fig-nog-seeds}
  \resizebox{0.965\linewidth}{!}{\nogseedgallery}
  \caption{Seed molecules used for the M{\O}D-generated
    non-oxidative glycolysis benchmark. Panels (A)--(K) show
    AcP, DHAP, E4P, F6P, FBP, G3P, water, inorganic phosphate, R5P, Ru5P,
    and S7P, respectively.}
  \label{fig:nog-seeds}
\end{figure}

\clearpage

\begingroup

\providecolor{ink}{HTML}{252A2E}
\providecolor{ink2}{HTML}{4E5961}
\providecolor{hair}{HTML}{D8DEE2}
\providecolor{pbrk}{HTML}{D55E45}
\providecolor{npgGreen}{HTML}{198754}
\providecolor{cpkC}{HTML}{4B5358}
\providecolor{cpkO}{HTML}{D94A42}
\providecolor{cpkP}{HTML}{E49B29}

\tikzset{
  badge/.style={
    circle,fill=ink,text=white,draw=none,minimum size=5.4mm,
    inner sep=0pt,font=\figatomfont\bfseries\footnotesize
  },
  dporule/.style={line cap=round,line join=round},
  rulecard/.style={
    rounded corners=1.4mm,draw=hair,fill=white,line width=0.55pt
  },
  cardlab/.style={font=\rmfamily\bfseries\large,text=ink},
  spanlab/.style={font=\rmfamily\bfseries\small,text=ink},
  rulearrow/.style={-{Stealth[length=2.4mm,width=1.7mm]},draw=ink2,line width=0.75pt},
  keepbond/.style={draw=ink2,line width=0.78pt},
  delbond/.style={draw=pbrk,line width=1.05pt},
  newbond/.style={draw=npgGreen!86!black,line width=1.05pt},
  dpoC/.style={circle,draw=ink,fill=cpkC,text=white,inner sep=0pt},
  dpoO/.style={circle,draw=pbrk!85!black,fill=cpkO,text=white,inner sep=0pt},
  dpoP/.style={circle,draw=cpkP!70!black,fill=cpkP,text=ink,inner sep=0pt},
  dpoH/.style={circle,draw=ink2,fill=white,text=ink,inner sep=0pt},
  atomid/.style={
    label distance=1.25pt,
    label={[font=\rmfamily\bfseries\tiny,text=ink,inner sep=0.35pt]45:#1}
  },
  atomidpos/.style args={#1/#2}{
    label distance=1.25pt,
    label={[font=\rmfamily\bfseries\tiny,text=ink,inner sep=0.35pt]#2:#1}
  }
}

\providecommand{\dbond}[4]{}
\renewcommand{\dbond}[4]{%
  \draw[#3,shorten <=\dposhorten,shorten >=\dposhorten]
    ($(#1)!1.15pt!90:(#2)$)--($(#2)!1.15pt!-90:(#1)$);
  \draw[#4,shorten <=\dposhorten,shorten >=\dposhorten]
    ($(#1)!1.15pt!-90:(#2)$)--($(#2)!1.15pt!90:(#1)$);
}

\newcommand{\dpospanlegenditem}[2]{%
  \tikz[baseline=-0.6ex]{\draw[#1] (0,0)--(0.75,0);}%
  \hspace{0.35em}{\rmfamily\scriptsize\color{ink2}#2}%
}
\newcommand{\dpospanlegend}{%
  \noindent\hbox to \linewidth{%
    \dpospanlegenditem{keepbond}{preserved}\hfil
    \dpospanlegenditem{delbond}{broken}\hfil
    \dpospanlegenditem{newbond}{formed}\hfil
    {\rmfamily\scriptsize\color{ink2}numbers = atom map}%
  }%
}

\newcommand{\nogdpopanelsABC}{%
  \noindent\hfill\begin{minipage}[t]{0.940\linewidth}
  \raggedright
  \tikz[baseline=-0.65ex]{\node[badge] {A};}\hspace{0.45em}{\rmfamily\bfseries\small\texttt{AL}}
  \par\vspace{0.4mm}
  \resizebox{\linewidth}{!}{%
\begin{tikzpicture}[
    x=0.895cm,y=0.895cm,dporule,
    dpodisc/.style={minimum size=4.11mm},
    dpofont/.style={font=\figatomfont\bfseries\fontsize{6.00}{6.00}\selectfont}]
    \def\dposhorten{2.20mm}
    \draw[rulecard,draw=pbrk!42]          (0.00,0) rectangle (4.20,4.10);
    \draw[rulecard]                        (5.50,0) rectangle (9.70,4.10);
    \draw[rulecard,draw=npgGreen!52!black] (11.00,0) rectangle (15.20,4.10);
    \node[cardlab] at (2.10,3.60) {\(L\)};
    \node[cardlab] at (7.60,3.60) {\(K\)};
    \node[cardlab] at (13.10,3.60) {\(R\)};
    \node[dpoC,dpodisc,dpofont,atomid=1] (L1) at (1.600,1.600) {C};
    \node[dpoO,dpodisc,dpofont,atomid=2] (L2) at (1.600,2.600) {O};
    \node[dpoH,dpodisc,dpofont,atomid=3] (L3) at (0.600,1.600) {H};
    \node[dpoC,dpodisc,dpofont,atomid=4] (L4) at (2.600,1.600) {C};
    \node[dpoH,dpodisc,dpofont,atomid=5] (L5) at (2.600,2.600) {H};
    \node[dpoO,dpodisc,dpofont,atomid=6] (L6) at (2.600,0.600) {O};
    \node[dpoH,dpodisc,dpofont,atomid=7] (L7) at (3.600,0.600) {H};
    \node[dpoC,dpodisc,dpofont,atomid=8] (L8) at (3.600,1.600) {C};
    \node[dpoO,dpodisc,dpofont,atomid=9] (L9) at (3.600,2.600) {O};
    \dbond{L1}{L2}{keepbond}{delbond}
    \draw[keepbond] (L1)--(L3);
    \draw[delbond] (L4)--(L5);
    \draw[keepbond] (L4)--(L6);
    \draw[keepbond] (L4)--(L8);
    \draw[keepbond] (L6)--(L7);
    \dbond{L8}{L9}{keepbond}{keepbond}
    \node[dpoC,dpodisc,dpofont,atomid=1] (K1) at (7.100,1.600) {C};
    \node[dpoO,dpodisc,dpofont,atomid=2] (K2) at (7.100,2.600) {O};
    \node[dpoH,dpodisc,dpofont,atomid=3] (K3) at (6.100,1.600) {H};
    \node[dpoC,dpodisc,dpofont,atomid=4] (K4) at (8.100,1.600) {C};
    \node[dpoH,dpodisc,dpofont,atomid=5] (K5) at (8.100,2.600) {H};
    \node[dpoO,dpodisc,dpofont,atomid=6] (K6) at (8.100,0.600) {O};
    \node[dpoH,dpodisc,dpofont,atomid=7] (K7) at (9.100,0.600) {H};
    \node[dpoC,dpodisc,dpofont,atomid=8] (K8) at (9.100,1.600) {C};
    \node[dpoO,dpodisc,dpofont,atomid=9] (K9) at (9.100,2.600) {O};
    \draw[keepbond] (K1)--(K2);
    \draw[keepbond] (K1)--(K3);
    \draw[keepbond] (K4)--(K6);
    \draw[keepbond] (K4)--(K8);
    \draw[keepbond] (K6)--(K7);
    \dbond{K8}{K9}{keepbond}{keepbond}
    \node[dpoC,dpodisc,dpofont,atomid=1] (R1) at (12.600,1.600) {C};
    \node[dpoO,dpodisc,dpofont,atomid=2] (R2) at (12.600,2.600) {O};
    \node[dpoH,dpodisc,dpofont,atomid=3] (R3) at (11.600,1.600) {H};
    \node[dpoC,dpodisc,dpofont,atomid=4] (R4) at (13.600,1.600) {C};
    \node[dpoH,dpodisc,dpofont,atomid=5] (R5) at (13.600,2.600) {H};
    \node[dpoO,dpodisc,dpofont,atomid=6] (R6) at (13.600,0.600) {O};
    \node[dpoH,dpodisc,dpofont,atomid=7] (R7) at (14.600,0.600) {H};
    \node[dpoC,dpodisc,dpofont,atomid=8] (R8) at (14.600,1.600) {C};
    \node[dpoO,dpodisc,dpofont,atomid=9] (R9) at (14.600,2.600) {O};
    \draw[keepbond] (R1)--(R2);
    \draw[keepbond] (R1)--(R3);
    \draw[newbond] (R1)--(R4);
    \draw[newbond] (R2)--(R5);
    \draw[keepbond] (R4)--(R6);
    \draw[keepbond] (R4)--(R8);
    \draw[keepbond] (R6)--(R7);
    \dbond{R8}{R9}{keepbond}{keepbond}
    \draw[rulearrow] (5.35,2.05)--(4.35,2.05) node[midway,above=2pt,spanlab] {\(l\)};
    \draw[rulearrow] (9.85,2.05)--(10.85,2.05) node[midway,above=2pt,spanlab] {\(r\)};
  \end{tikzpicture}
  }
\end{minipage}\hfill\mbox{}
  \par\vspace{1mm}
  \noindent\hfill\begin{minipage}[t]{0.940\linewidth}
  \raggedright
  \tikz[baseline=-0.65ex]{\node[badge] {B};}\hspace{0.45em}{\rmfamily\bfseries\small\texttt{AlKe}}
  \par\vspace{0.4mm}
  \resizebox{\linewidth}{!}{%
\begin{tikzpicture}[
    x=0.799cm,y=0.799cm,dporule,
    dpodisc/.style={minimum size=4.39mm},
    dpofont/.style={font=\figatomfont\bfseries\fontsize{6.40}{6.40}\selectfont,
      yshift=2.5mm}]
    \def\dposhorten{2.35mm}
    \draw[rulecard,draw=pbrk!42]          (0.00,0.45) rectangle (4.81,4.65);
    \draw[rulecard]                        (6.11,0.45) rectangle (10.92,4.65);
    \draw[rulecard,draw=npgGreen!52!black] (12.22,0.45) rectangle (17.03,4.65);
    \node[cardlab] at (2.41,4.25) {\(L\)};
    \node[cardlab] at (8.52,4.25) {\(K\)};
    \node[cardlab] at (14.62,4.25) {\(R\)};
    \node[dpoC,dpodisc,dpofont,atomidpos={1/225}] (L1) at (2.405,1.600) {C};
    \node[dpoO,dpodisc,dpofont,atomidpos={2/135}] (L2) at (2.405,2.600) {O};
    \node[dpoH,dpodisc,dpofont,atomidpos={3/135}] (L3) at (1.405,1.600) {H};
    \node[dpoC,dpodisc,dpofont,atomidpos={4/315}] (L4) at (3.405,1.600) {C};
    \node[dpoH,dpodisc,dpofont,atomidpos={5/315}] (L5) at (3.405,0.600) {H};
    \node[dpoO,dpodisc,dpofont,atomidpos={6/45}] (L6) at (3.405,2.600) {O};
    \node[dpoH,dpodisc,dpofont,atomidpos={7/45}] (L7) at (3.405,3.600) {H};
    \dbond{L1}{L2}{keepbond}{delbond}
    \draw[keepbond] (L1)--(L3);
    \draw[keepbond] (L1)--(L4);
    \draw[delbond] (L4)--(L5);
    \draw[keepbond] (L4)--(L6);
    \draw[delbond] (L6)--(L7);
    \node[dpoC,dpodisc,dpofont,atomidpos={1/225}] (K1) at (8.515,1.600) {C};
    \node[dpoO,dpodisc,dpofont,atomidpos={2/135}] (K2) at (8.515,2.600) {O};
    \node[dpoH,dpodisc,dpofont,atomidpos={3/135}] (K3) at (7.515,1.600) {H};
    \node[dpoC,dpodisc,dpofont,atomidpos={4/315}] (K4) at (9.515,1.600) {C};
    \node[dpoH,dpodisc,dpofont,atomidpos={5/315}] (K5) at (9.515,0.600) {H};
    \node[dpoO,dpodisc,dpofont,atomidpos={6/45}] (K6) at (9.515,2.600) {O};
    \node[dpoH,dpodisc,dpofont,atomidpos={7/45}] (K7) at (9.515,3.600) {H};
    \draw[keepbond] (K1)--(K2);
    \draw[keepbond] (K1)--(K3);
    \draw[keepbond] (K1)--(K4);
    \draw[keepbond] (K4)--(K6);
    \node[dpoC,dpodisc,dpofont,atomidpos={1/225}] (R1) at (14.625,1.600) {C};
    \node[dpoO,dpodisc,dpofont,atomidpos={2/135}] (R2) at (14.625,2.600) {O};
    \node[dpoH,dpodisc,dpofont,atomidpos={3/135}] (R3) at (13.625,1.600) {H};
    \node[dpoC,dpodisc,dpofont,atomidpos={4/315}] (R4) at (15.625,1.600) {C};
    \node[dpoH,dpodisc,dpofont,atomidpos={5/315}] (R5) at (15.625,0.600) {H};
    \node[dpoO,dpodisc,dpofont,atomidpos={6/45}] (R6) at (15.625,2.600) {O};
    \node[dpoH,dpodisc,dpofont,atomidpos={7/45}] (R7) at (15.625,3.600) {H};
    \draw[keepbond] (R1)--(R2);
    \draw[keepbond] (R1)--(R3);
    \draw[keepbond] (R1)--(R4);
    \draw[newbond] (R1)--(R5);
    \draw[newbond] (R2)--(R7);
    \dbond{R4}{R6}{keepbond}{newbond}
    \draw[rulearrow] (5.96,2.55)--(4.96,2.55) node[midway,above=2pt,spanlab] {\(l\)};
    \draw[rulearrow] (11.07,2.55)--(12.07,2.55) node[midway,above=2pt,spanlab] {\(r\)};
  \end{tikzpicture}
  }
\end{minipage}\hfill\mbox{}
  \par\vspace{1mm}
  \noindent\hfill\begin{minipage}[t]{0.940\linewidth}
  \raggedright
  \tikz[baseline=-0.65ex]{\node[badge] {C};}\hspace{0.45em}{\rmfamily\bfseries\small\texttt{KeAl}}
  \par\vspace{0.4mm}
  \resizebox{\linewidth}{!}{%
\begin{tikzpicture}[
    x=0.799cm,y=0.799cm,dporule,
    dpodisc/.style={minimum size=4.39mm},
    dpofont/.style={font=\figatomfont\bfseries\fontsize{6.40}{6.40}\selectfont,
      yshift=2.5mm}]
    \def\dposhorten{2.35mm}
    \draw[rulecard,draw=pbrk!42]          (0.00,0.45) rectangle (4.81,4.65);
    \draw[rulecard]                        (6.11,0.45) rectangle (10.92,4.65);
    \draw[rulecard,draw=npgGreen!52!black] (12.22,0.45) rectangle (17.03,4.65);
    \node[cardlab] at (2.41,4.25) {\(L\)};
    \node[cardlab] at (8.52,4.25) {\(K\)};
    \node[cardlab] at (14.62,4.25) {\(R\)};
    \node[dpoC,dpodisc,dpofont,atomidpos={1/225}] (L1) at (2.405,1.600) {C};
    \node[dpoO,dpodisc,dpofont,atomidpos={2/135}] (L2) at (2.405,2.600) {O};
    \node[dpoH,dpodisc,dpofont,atomidpos={3/135}] (L3) at (1.405,1.600) {H};
    \node[dpoC,dpodisc,dpofont,atomidpos={4/315}] (L4) at (3.405,1.600) {C};
    \node[dpoH,dpodisc,dpofont,atomidpos={5/315}] (L5) at (3.405,0.600) {H};
    \node[dpoO,dpodisc,dpofont,atomidpos={6/45}] (L6) at (3.405,2.600) {O};
    \node[dpoH,dpodisc,dpofont,atomidpos={7/45}] (L7) at (3.405,3.600) {H};
    \draw[keepbond] (L1)--(L2);
    \draw[keepbond] (L1)--(L3);
    \draw[keepbond] (L1)--(L4);
    \draw[delbond] (L1)--(L5);
    \draw[delbond] (L2)--(L7);
    \dbond{L4}{L6}{keepbond}{delbond}
    \node[dpoC,dpodisc,dpofont,atomidpos={1/225}] (K1) at (8.515,1.600) {C};
    \node[dpoO,dpodisc,dpofont,atomidpos={2/135}] (K2) at (8.515,2.600) {O};
    \node[dpoH,dpodisc,dpofont,atomidpos={3/135}] (K3) at (7.515,1.600) {H};
    \node[dpoC,dpodisc,dpofont,atomidpos={4/315}] (K4) at (9.515,1.600) {C};
    \node[dpoH,dpodisc,dpofont,atomidpos={5/315}] (K5) at (9.515,0.600) {H};
    \node[dpoO,dpodisc,dpofont,atomidpos={6/45}] (K6) at (9.515,2.600) {O};
    \node[dpoH,dpodisc,dpofont,atomidpos={7/45}] (K7) at (9.515,3.600) {H};
    \draw[keepbond] (K1)--(K2);
    \draw[keepbond] (K1)--(K3);
    \draw[keepbond] (K1)--(K4);
    \draw[keepbond] (K4)--(K6);
    \node[dpoC,dpodisc,dpofont,atomidpos={1/225}] (R1) at (14.625,1.600) {C};
    \node[dpoO,dpodisc,dpofont,atomidpos={2/135}] (R2) at (14.625,2.600) {O};
    \node[dpoH,dpodisc,dpofont,atomidpos={3/135}] (R3) at (13.625,1.600) {H};
    \node[dpoC,dpodisc,dpofont,atomidpos={4/315}] (R4) at (15.625,1.600) {C};
    \node[dpoH,dpodisc,dpofont,atomidpos={5/315}] (R5) at (15.625,0.600) {H};
    \node[dpoO,dpodisc,dpofont,atomidpos={6/45}] (R6) at (15.625,2.600) {O};
    \node[dpoH,dpodisc,dpofont,atomidpos={7/45}] (R7) at (15.625,3.600) {H};
    \dbond{R1}{R2}{keepbond}{newbond}
    \draw[keepbond] (R1)--(R3);
    \draw[keepbond] (R1)--(R4);
    \draw[newbond] (R4)--(R5);
    \draw[keepbond] (R4)--(R6);
    \draw[newbond] (R6)--(R7);
    \draw[rulearrow] (5.96,2.55)--(4.96,2.55) node[midway,above=2pt,spanlab] {\(l\)};
    \draw[rulearrow] (11.07,2.55)--(12.07,2.55) node[midway,above=2pt,spanlab] {\(r\)};
  \end{tikzpicture}
  }
\end{minipage}\hfill\mbox{}
}

\newcommand{\nogdpopanelD}{%
  \noindent\hfill\begin{minipage}[t]{0.940\linewidth}
  \raggedright
  \tikz[baseline=-0.65ex]{\node[badge] {D};}\hspace{0.45em}{\rmfamily\bfseries\small\texttt{TKL}}
  \par\vspace{0.4mm}
  \resizebox{\linewidth}{!}{%
\begin{tikzpicture}[
    x=0.799cm,y=0.799cm,dporule,
    dpodisc/.style={minimum size=4.39mm},
    dpofont/.style={font=\figatomfont\bfseries\fontsize{6.40}{6.40}\selectfont,
      yshift=2.5mm}]
    \def\dposhorten{2.35mm}
    \draw[rulecard,draw=pbrk!42]          (0.00,0.35) rectangle (4.81,5.60);
    \draw[rulecard]                        (6.11,0.35) rectangle (10.92,5.60);
    \draw[rulecard,draw=npgGreen!52!black] (12.22,0.35) rectangle (17.03,5.60);
    \node[cardlab] at (2.41,5.10) {\(L\)};
    \node[cardlab] at (8.52,5.10) {\(K\)};
    \node[cardlab] at (14.62,5.10) {\(R\)};
    \node[dpoC,dpodisc,dpofont,atomid=1] (L1) at (2.340,1.600) {C};
    \node[dpoO,dpodisc,dpofont,atomid=2] (L2) at (3.210,1.100) {O};
    \node[dpoH,dpodisc,dpofont,atomidpos={3/225}] (L3) at (2.340,0.600) {H};
    \node[dpoC,dpodisc,dpofont,atomid=4] (L4) at (1.470,3.100) {C};
    \node[dpoO,dpodisc,dpofont,atomid=5] (L5) at (1.470,4.100) {O};
    \node[dpoH,dpodisc,dpofont,atomid=6] (L6) at (0.600,3.600) {H};
    \node[dpoH,dpodisc,dpofont,atomid=7] (L7) at (0.600,2.600) {H};
    \node[dpoH,dpodisc,dpofont,atomid=8] (L8) at (1.470,2.100) {H};
    \node[dpoC,dpodisc,dpofont,atomid=9] (L9) at (2.340,2.600) {C};
    \node[dpoO,dpodisc,dpofont,atomid=10] (L10) at (2.340,3.600) {O};
    \node[dpoC,dpodisc,dpofont,atomid=11] (L11) at (3.340,2.600) {C};
    \node[dpoH,dpodisc,dpofont,atomid=12] (L12) at (3.340,3.600) {H};
    \node[dpoO,dpodisc,dpofont,atomid=13] (L13) at (4.210,2.100) {O};
    \node[dpoH,dpodisc,dpofont,atomid=14] (L14) at (4.090,1.100) {H};
    \dbond{L1}{L2}{keepbond}{delbond}
    \draw[keepbond] (L1)--(L3);
    \draw[keepbond] (L4)--(L5);
    \draw[keepbond] (L4)--(L7);
    \draw[keepbond] (L4)--(L8);
    \draw[keepbond] (L4)--(L9);
    \draw[keepbond] (L5)--(L6);
    \dbond{L9}{L10}{keepbond}{keepbond}
    \draw[delbond] (L9)--(L11);
    \draw[keepbond] (L11)--(L12);
    \draw[keepbond] (L11)--(L13);
    \draw[delbond] (L13)--(L14);
    \node[dpoC,dpodisc,dpofont,atomid=1] (K1) at (8.450,1.600) {C};
    \node[dpoO,dpodisc,dpofont,atomid=2] (K2) at (9.320,1.100) {O};
    \node[dpoH,dpodisc,dpofont,atomidpos={3/225}] (K3) at (8.450,0.600) {H};
    \node[dpoC,dpodisc,dpofont,atomid=4] (K4) at (7.580,3.100) {C};
    \node[dpoO,dpodisc,dpofont,atomid=5] (K5) at (7.580,4.100) {O};
    \node[dpoH,dpodisc,dpofont,atomid=6] (K6) at (6.710,3.600) {H};
    \node[dpoH,dpodisc,dpofont,atomid=7] (K7) at (6.710,2.600) {H};
    \node[dpoH,dpodisc,dpofont,atomid=8] (K8) at (7.580,2.100) {H};
    \node[dpoC,dpodisc,dpofont,atomid=9] (K9) at (8.450,2.600) {C};
    \node[dpoO,dpodisc,dpofont,atomid=10] (K10) at (8.450,3.600) {O};
    \node[dpoC,dpodisc,dpofont,atomid=11] (K11) at (9.450,2.600) {C};
    \node[dpoH,dpodisc,dpofont,atomid=12] (K12) at (9.450,3.600) {H};
    \node[dpoO,dpodisc,dpofont,atomid=13] (K13) at (10.320,2.100) {O};
    \node[dpoH,dpodisc,dpofont,atomid=14] (K14) at (10.200,1.100) {H};
    \draw[keepbond] (K1)--(K2);
    \draw[keepbond] (K1)--(K3);
    \draw[keepbond] (K4)--(K5);
    \draw[keepbond] (K4)--(K7);
    \draw[keepbond] (K4)--(K8);
    \draw[keepbond] (K4)--(K9);
    \draw[keepbond] (K5)--(K6);
    \dbond{K9}{K10}{keepbond}{keepbond}
    \draw[keepbond] (K11)--(K12);
    \draw[keepbond] (K11)--(K13);
    \node[dpoC,dpodisc,dpofont,atomid=1] (R1) at (14.560,1.600) {C};
    \node[dpoO,dpodisc,dpofont,atomid=2] (R2) at (15.430,1.100) {O};
    \node[dpoH,dpodisc,dpofont,atomidpos={3/225}] (R3) at (14.560,0.600) {H};
    \node[dpoC,dpodisc,dpofont,atomid=4] (R4) at (13.690,3.100) {C};
    \node[dpoO,dpodisc,dpofont,atomid=5] (R5) at (13.690,4.100) {O};
    \node[dpoH,dpodisc,dpofont,atomid=6] (R6) at (12.820,3.600) {H};
    \node[dpoH,dpodisc,dpofont,atomid=7] (R7) at (12.820,2.600) {H};
    \node[dpoH,dpodisc,dpofont,atomid=8] (R8) at (13.690,2.100) {H};
    \node[dpoC,dpodisc,dpofont,atomid=9] (R9) at (14.560,2.600) {C};
    \node[dpoO,dpodisc,dpofont,atomid=10] (R10) at (14.560,3.600) {O};
    \node[dpoC,dpodisc,dpofont,atomid=11] (R11) at (15.560,2.600) {C};
    \node[dpoH,dpodisc,dpofont,atomid=12] (R12) at (15.560,3.600) {H};
    \node[dpoO,dpodisc,dpofont,atomid=13] (R13) at (16.430,2.100) {O};
    \node[dpoH,dpodisc,dpofont,atomid=14] (R14) at (16.310,1.100) {H};
    \draw[keepbond] (R1)--(R2);
    \draw[keepbond] (R1)--(R3);
    \draw[newbond] (R1)--(R9);
    \draw[newbond] (R2)--(R14);
    \draw[keepbond] (R4)--(R5);
    \draw[keepbond] (R4)--(R7);
    \draw[keepbond] (R4)--(R8);
    \draw[keepbond] (R4)--(R9);
    \draw[keepbond] (R5)--(R6);
    \dbond{R9}{R10}{keepbond}{keepbond}
    \draw[keepbond] (R11)--(R12);
    \dbond{R11}{R13}{keepbond}{newbond}
    \draw[rulearrow] (5.96,2.80)--(4.96,2.80) node[midway,above=2pt,spanlab] {\(l\)};
    \draw[rulearrow] (11.07,2.80)--(12.07,2.80) node[midway,above=2pt,spanlab] {\(r\)};
  \end{tikzpicture}
  }
\end{minipage}\hfill\mbox{}
}

\newcommand{\nogdpopanelE}{%
  \noindent\hfill\begin{minipage}[t]{0.940\linewidth}
  \raggedright
  \tikz[baseline=-0.65ex]{\node[badge] {E};}\hspace{0.45em}{\rmfamily\bfseries\small\texttt{TAL}}
  \par\vspace{0.4mm}
  \resizebox{\linewidth}{!}{%
\begin{tikzpicture}[
    x=0.676cm,y=0.676cm,dporule,
    dpodisc/.style={minimum size=3.72mm},
    dpofont/.style={font=\figatomfont\bfseries\fontsize{5.40}{5.40}\selectfont}]
    \def\dposhorten{2.01mm}
    \draw[rulecard,draw=pbrk!42]          (0.00,-0.25) rectangle (5.84,5.20);
    \draw[rulecard]                        (7.14,-0.25) rectangle (12.98,5.20);
    \draw[rulecard,draw=npgGreen!52!black] (14.28,-0.25) rectangle (20.12,5.20);
    \node[cardlab] at (2.92,4.70) {\(L\)};
    \node[cardlab] at (10.06,4.70) {\(K\)};
    \node[cardlab] at (17.20,4.70) {\(R\)};
    \node[dpoC,dpodisc,dpofont,atomidpos={1/45}] (L1) at (4.210,1.700) {C};
    \node[dpoO,dpodisc,dpofont,atomidpos={2/315}] (L2) at (5.080,1.200) {O};
    \node[dpoH,dpodisc,dpofont,atomidpos={3/225}] (L3) at (4.210,0.700) {H};
    \node[dpoC,dpodisc,dpofont,atomidpos={4/180}] (L4) at (1.470,2.700) {C};
    \node[dpoO,dpodisc,dpofont,atomidpos={5/45}] (L5) at (1.470,3.700) {O};
    \node[dpoH,dpodisc,dpofont,atomidpos={6/135}] (L6) at (0.600,3.200) {H};
    \node[dpoH,dpodisc,dpofont,atomidpos={7/225}] (L7) at (0.600,2.200) {H};
    \node[dpoH,dpodisc,dpofont,atomidpos={8/225}] (L8) at (1.470,1.700) {H};
    \node[dpoC,dpodisc,dpofont,atomidpos={9/270}] (L9) at (2.340,2.200) {C};
    \node[dpoO,dpodisc,dpofont,atomidpos={10/135}] (L10) at (2.340,3.200) {O};
    \node[dpoC,dpodisc,dpofont,atomidpos={11/225}] (L11) at (3.340,2.200) {C};
    \node[dpoH,dpodisc,dpofont,atomidpos={12/135}] (L12) at (3.340,3.200) {H};
    \node[dpoO,dpodisc,dpofont,atomidpos={13/315}] (L13) at (3.340,1.200) {O};
    \node[dpoH,dpodisc,dpofont,atomidpos={14/225}] (L14) at (2.540,0.600) {H};
    \node[dpoC,dpodisc,dpofont,atomidpos={15/135}] (L15) at (4.210,2.700) {C};
    \node[dpoH,dpodisc,dpofont,atomidpos={16/135}] (L16) at (4.210,3.700) {H};
    \node[dpoO,dpodisc,dpofont,atomidpos={17/45}] (L17) at (4.940,3.300) {O};
    \node[dpoH,dpodisc,dpofont,atomidpos={18/0}] (L18) at (5.240,2.400) {H};
    \dbond{L1}{L2}{keepbond}{delbond}
    \draw[keepbond] (L1)--(L3);
    \draw[keepbond] (L4)--(L5);
    \draw[keepbond] (L4)--(L7);
    \draw[keepbond] (L4)--(L8);
    \draw[keepbond] (L4)--(L9);
    \draw[keepbond] (L5)--(L6);
    \dbond{L9}{L10}{keepbond}{keepbond}
    \draw[keepbond] (L9)--(L11);
    \draw[keepbond] (L11)--(L12);
    \draw[keepbond] (L11)--(L13);
    \draw[delbond] (L11)--(L15);
    \draw[keepbond] (L13)--(L14);
    \draw[keepbond] (L15)--(L16);
    \draw[keepbond] (L15)--(L17);
    \draw[delbond] (L17)--(L18);
    \node[dpoC,dpodisc,dpofont,atomidpos={1/45}] (K1) at (11.350,1.700) {C};
    \node[dpoO,dpodisc,dpofont,atomidpos={2/315}] (K2) at (12.220,1.200) {O};
    \node[dpoH,dpodisc,dpofont,atomidpos={3/225}] (K3) at (11.350,0.700) {H};
    \node[dpoC,dpodisc,dpofont,atomidpos={4/180}] (K4) at (8.610,2.700) {C};
    \node[dpoO,dpodisc,dpofont,atomidpos={5/45}] (K5) at (8.610,3.700) {O};
    \node[dpoH,dpodisc,dpofont,atomidpos={6/135}] (K6) at (7.740,3.200) {H};
    \node[dpoH,dpodisc,dpofont,atomidpos={7/225}] (K7) at (7.740,2.200) {H};
    \node[dpoH,dpodisc,dpofont,atomidpos={8/225}] (K8) at (8.610,1.700) {H};
    \node[dpoC,dpodisc,dpofont,atomidpos={9/270}] (K9) at (9.480,2.200) {C};
    \node[dpoO,dpodisc,dpofont,atomidpos={10/135}] (K10) at (9.480,3.200) {O};
    \node[dpoC,dpodisc,dpofont,atomidpos={11/225}] (K11) at (10.480,2.200) {C};
    \node[dpoH,dpodisc,dpofont,atomidpos={12/135}] (K12) at (10.480,3.200) {H};
    \node[dpoO,dpodisc,dpofont,atomidpos={13/315}] (K13) at (10.480,1.200) {O};
    \node[dpoH,dpodisc,dpofont,atomidpos={14/225}] (K14) at (9.680,0.600) {H};
    \node[dpoC,dpodisc,dpofont,atomidpos={15/135}] (K15) at (11.350,2.700) {C};
    \node[dpoH,dpodisc,dpofont,atomidpos={16/135}] (K16) at (11.350,3.700) {H};
    \node[dpoO,dpodisc,dpofont,atomidpos={17/45}] (K17) at (12.080,3.300) {O};
    \node[dpoH,dpodisc,dpofont,atomidpos={18/0}] (K18) at (12.380,2.400) {H};
    \draw[keepbond] (K1)--(K2);
    \draw[keepbond] (K1)--(K3);
    \draw[keepbond] (K4)--(K5);
    \draw[keepbond] (K4)--(K7);
    \draw[keepbond] (K4)--(K8);
    \draw[keepbond] (K4)--(K9);
    \draw[keepbond] (K5)--(K6);
    \dbond{K9}{K10}{keepbond}{keepbond}
    \draw[keepbond] (K9)--(K11);
    \draw[keepbond] (K11)--(K12);
    \draw[keepbond] (K11)--(K13);
    \draw[keepbond] (K13)--(K14);
    \draw[keepbond] (K15)--(K16);
    \draw[keepbond] (K15)--(K17);
    \node[dpoC,dpodisc,dpofont,atomidpos={1/45}] (R1) at (18.490,1.700) {C};
    \node[dpoO,dpodisc,dpofont,atomidpos={2/315}] (R2) at (19.360,1.200) {O};
    \node[dpoH,dpodisc,dpofont,atomidpos={3/225}] (R3) at (18.490,0.700) {H};
    \node[dpoC,dpodisc,dpofont,atomidpos={4/180}] (R4) at (15.750,2.700) {C};
    \node[dpoO,dpodisc,dpofont,atomidpos={5/45}] (R5) at (15.750,3.700) {O};
    \node[dpoH,dpodisc,dpofont,atomidpos={6/135}] (R6) at (14.880,3.200) {H};
    \node[dpoH,dpodisc,dpofont,atomidpos={7/225}] (R7) at (14.880,2.200) {H};
    \node[dpoH,dpodisc,dpofont,atomidpos={8/225}] (R8) at (15.750,1.700) {H};
    \node[dpoC,dpodisc,dpofont,atomidpos={9/270}] (R9) at (16.620,2.200) {C};
    \node[dpoO,dpodisc,dpofont,atomidpos={10/135}] (R10) at (16.620,3.200) {O};
    \node[dpoC,dpodisc,dpofont,atomidpos={11/225}] (R11) at (17.620,2.200) {C};
    \node[dpoH,dpodisc,dpofont,atomidpos={12/135}] (R12) at (17.620,3.200) {H};
    \node[dpoO,dpodisc,dpofont,atomidpos={13/315}] (R13) at (17.620,1.200) {O};
    \node[dpoH,dpodisc,dpofont,atomidpos={14/225}] (R14) at (16.820,0.600) {H};
    \node[dpoC,dpodisc,dpofont,atomidpos={15/135}] (R15) at (18.490,2.700) {C};
    \node[dpoH,dpodisc,dpofont,atomidpos={16/135}] (R16) at (18.490,3.700) {H};
    \node[dpoO,dpodisc,dpofont,atomidpos={17/45}] (R17) at (19.220,3.300) {O};
    \node[dpoH,dpodisc,dpofont,atomidpos={18/0}] (R18) at (19.520,2.400) {H};
    \draw[keepbond] (R1)--(R2);
    \draw[keepbond] (R1)--(R3);
    \draw[newbond] (R1)--(R11);
    \draw[newbond] (R2)--(R18);
    \draw[keepbond] (R4)--(R5);
    \draw[keepbond] (R4)--(R7);
    \draw[keepbond] (R4)--(R8);
    \draw[keepbond] (R4)--(R9);
    \draw[keepbond] (R5)--(R6);
    \dbond{R9}{R10}{keepbond}{keepbond}
    \draw[keepbond] (R9)--(R11);
    \draw[keepbond] (R11)--(R12);
    \draw[keepbond] (R11)--(R13);
    \draw[keepbond] (R13)--(R14);
    \draw[keepbond] (R15)--(R16);
    \dbond{R15}{R17}{keepbond}{newbond}
    \draw[rulearrow] (6.99,2.60)--(5.99,2.60) node[midway,above=2pt,spanlab] {\(l\)};
    \draw[rulearrow] (13.13,2.60)--(14.13,2.60) node[midway,above=2pt,spanlab] {\(r\)};
  \end{tikzpicture}
  }
\end{minipage}\hfill\mbox{}
}

\newcommand{\nogdpopanelF}{%
  \noindent\hfill\begin{minipage}[t]{0.940\linewidth}
  \raggedright
  \tikz[baseline=-0.65ex]{\node[badge] {F};}\hspace{0.45em}{\rmfamily\bfseries\small\texttt{PK}}
  \par\vspace{0.4mm}
  \resizebox{\linewidth}{!}{%
\begin{tikzpicture}[
    x=0.719cm,y=0.719cm,dporule,
    dpodisc/.style={minimum size=3.95mm},
    dpofont/.style={font=\figatomfont\bfseries\fontsize{5.70}{5.70}\selectfont}]
    \def\dposhorten{2.13mm}
    \draw[rulecard,draw=pbrk!42]          (0.00,-0.25) rectangle (5.44,7.15);
    \draw[rulecard]                        (6.74,-0.25) rectangle (12.18,7.15);
    \draw[rulecard,draw=npgGreen!52!black] (13.48,-0.25) rectangle (18.92,7.15);
    \node[cardlab] at (2.72,6.65) {\(L\)};
    \node[cardlab] at (9.46,6.65) {\(K\)};
    \node[cardlab] at (16.20,6.65) {\(R\)};
    \node[dpoC,dpodisc,dpofont,atomid=1] (L1) at (1.600,4.550) {C};
    \node[dpoH,dpodisc,dpofont,atomid=2] (L2) at (0.600,4.550) {H};
    \node[dpoO,dpodisc,dpofont,atomid=3] (L3) at (1.600,5.550) {O};
    \node[dpoH,dpodisc,dpofont,atomid=4] (L4) at (2.600,5.650) {H};
    \node[dpoC,dpodisc,dpofont,atomidpos={5/90}] (L5) at (2.470,4.050) {C};
    \node[dpoO,dpodisc,dpofont,atomidpos={6/225}] (L6) at (1.970,3.180) {O};
    \node[dpoC,dpodisc,dpofont,atomid=7] (L7) at (3.340,4.550) {C};
    \node[dpoH,dpodisc,dpofont,atomid=8] (L8) at (4.200,5.050) {H};
    \node[dpoO,dpodisc,dpofont,atomid=9] (L9) at (3.840,3.680) {O};
    \node[dpoH,dpodisc,dpofont,atomid=10] (L10) at (4.840,3.680) {H};
    \node[dpoP,dpodisc,dpofont,atomidpos={11/135}] (L11) at (2.500,2.300) {P};
    \node[dpoO,dpodisc,dpofont,atomid=12] (L12) at (2.970,3.180) {O};
    \node[dpoH,dpodisc,dpofont,atomid=13] (L13) at (3.700,2.700) {H};
    \node[dpoO,dpodisc,dpofont,atomidpos={14/225}] (L14) at (1.600,1.850) {O};
    \node[dpoO,dpodisc,dpofont,atomid=15] (L15) at (2.500,1.300) {O};
    \node[dpoH,dpodisc,dpofont,atomid=16] (L16) at (3.200,0.600) {H};
    \node[dpoO,dpodisc,dpofont,atomid=17] (L17) at (3.400,1.850) {O};
    \node[dpoH,dpodisc,dpofont,atomid=18] (L18) at (4.200,1.250) {H};
    \draw[keepbond] (L1)--(L2);
    \draw[keepbond] (L1)--(L3);
    \draw[delbond] (L1)--(L5);
    \draw[delbond] (L3)--(L4);
    \dbond{L5}{L6}{keepbond}{keepbond}
    \draw[keepbond] (L5)--(L7);
    \draw[keepbond] (L7)--(L8);
    \draw[delbond] (L7)--(L9);
    \draw[keepbond] (L9)--(L10);
    \draw[keepbond] (L11)--(L12);
    \dbond{L11}{L14}{keepbond}{keepbond}
    \draw[keepbond] (L11)--(L15);
    \draw[keepbond] (L11)--(L17);
    \draw[delbond] (L12)--(L13);
    \draw[keepbond] (L15)--(L16);
    \draw[keepbond] (L17)--(L18);
    \node[dpoC,dpodisc,dpofont,atomid=1] (K1) at (8.340,4.550) {C};
    \node[dpoH,dpodisc,dpofont,atomid=2] (K2) at (7.340,4.550) {H};
    \node[dpoO,dpodisc,dpofont,atomid=3] (K3) at (8.340,5.550) {O};
    \node[dpoH,dpodisc,dpofont,atomid=4] (K4) at (9.340,5.650) {H};
    \node[dpoC,dpodisc,dpofont,atomidpos={5/90}] (K5) at (9.210,4.050) {C};
    \node[dpoO,dpodisc,dpofont,atomidpos={6/225}] (K6) at (8.710,3.180) {O};
    \node[dpoC,dpodisc,dpofont,atomid=7] (K7) at (10.080,4.550) {C};
    \node[dpoH,dpodisc,dpofont,atomid=8] (K8) at (10.940,5.050) {H};
    \node[dpoO,dpodisc,dpofont,atomid=9] (K9) at (10.580,3.680) {O};
    \node[dpoH,dpodisc,dpofont,atomid=10] (K10) at (11.580,3.680) {H};
    \node[dpoP,dpodisc,dpofont,atomidpos={11/135}] (K11) at (9.240,2.300) {P};
    \node[dpoO,dpodisc,dpofont,atomid=12] (K12) at (9.710,3.180) {O};
    \node[dpoH,dpodisc,dpofont,atomid=13] (K13) at (10.440,2.700) {H};
    \node[dpoO,dpodisc,dpofont,atomidpos={14/225}] (K14) at (8.340,1.850) {O};
    \node[dpoO,dpodisc,dpofont,atomid=15] (K15) at (9.240,1.300) {O};
    \node[dpoH,dpodisc,dpofont,atomid=16] (K16) at (9.940,0.600) {H};
    \node[dpoO,dpodisc,dpofont,atomid=17] (K17) at (10.140,1.850) {O};
    \node[dpoH,dpodisc,dpofont,atomid=18] (K18) at (10.940,1.250) {H};
    \draw[keepbond] (K1)--(K2);
    \draw[keepbond] (K1)--(K3);
    \dbond{K5}{K6}{keepbond}{keepbond}
    \draw[keepbond] (K5)--(K7);
    \draw[keepbond] (K7)--(K8);
    \draw[keepbond] (K9)--(K10);
    \draw[keepbond] (K11)--(K12);
    \dbond{K11}{K14}{keepbond}{keepbond}
    \draw[keepbond] (K11)--(K15);
    \draw[keepbond] (K11)--(K17);
    \draw[keepbond] (K15)--(K16);
    \draw[keepbond] (K17)--(K18);
    \node[dpoC,dpodisc,dpofont,atomid=1] (R1) at (15.080,4.550) {C};
    \node[dpoH,dpodisc,dpofont,atomid=2] (R2) at (14.080,4.550) {H};
    \node[dpoO,dpodisc,dpofont,atomid=3] (R3) at (15.080,5.550) {O};
    \node[dpoH,dpodisc,dpofont,atomid=4] (R4) at (16.080,5.650) {H};
    \node[dpoC,dpodisc,dpofont,atomidpos={5/90}] (R5) at (15.950,4.050) {C};
    \node[dpoO,dpodisc,dpofont,atomidpos={6/225}] (R6) at (15.450,3.180) {O};
    \node[dpoC,dpodisc,dpofont,atomid=7] (R7) at (16.820,4.550) {C};
    \node[dpoH,dpodisc,dpofont,atomid=8] (R8) at (17.680,5.050) {H};
    \node[dpoO,dpodisc,dpofont,atomid=9] (R9) at (17.320,3.680) {O};
    \node[dpoH,dpodisc,dpofont,atomid=10] (R10) at (18.320,3.680) {H};
    \node[dpoP,dpodisc,dpofont,atomidpos={11/135}] (R11) at (15.980,2.300) {P};
    \node[dpoO,dpodisc,dpofont,atomid=12] (R12) at (16.450,3.180) {O};
    \node[dpoH,dpodisc,dpofont,atomid=13] (R13) at (17.180,2.700) {H};
    \node[dpoO,dpodisc,dpofont,atomidpos={14/225}] (R14) at (15.080,1.850) {O};
    \node[dpoO,dpodisc,dpofont,atomid=15] (R15) at (15.980,1.300) {O};
    \node[dpoH,dpodisc,dpofont,atomid=16] (R16) at (16.680,0.600) {H};
    \node[dpoO,dpodisc,dpofont,atomid=17] (R17) at (16.880,1.850) {O};
    \node[dpoH,dpodisc,dpofont,atomid=18] (R18) at (17.680,1.250) {H};
    \draw[keepbond] (R1)--(R2);
    \dbond{R1}{R3}{keepbond}{newbond}
    \draw[newbond] (R4)--(R7);
    \dbond{R5}{R6}{keepbond}{keepbond}
    \draw[keepbond] (R5)--(R7);
    \draw[newbond] (R5)--(R12);
    \draw[keepbond] (R7)--(R8);
    \draw[keepbond] (R9)--(R10);
    \draw[newbond] (R9)--(R13);
    \draw[keepbond] (R11)--(R12);
    \dbond{R11}{R14}{keepbond}{keepbond}
    \draw[keepbond] (R11)--(R15);
    \draw[keepbond] (R11)--(R17);
    \draw[keepbond] (R15)--(R16);
    \draw[keepbond] (R17)--(R18);
    \draw[rulearrow] (6.59,3.58)--(5.59,3.58) node[midway,above=2pt,spanlab] {\(l\)};
    \draw[rulearrow] (12.33,3.58)--(13.33,3.58) node[midway,above=2pt,spanlab] {\(r\)};
  \end{tikzpicture}
  }
\end{minipage}\hfill\mbox{}
}

\newcommand{\nogdpopanelG}{%
  \noindent\hfill\begin{minipage}[t]{0.940\linewidth}
  \raggedright
  \tikz[baseline=-0.65ex]{\node[badge] {G};}\hspace{0.45em}{\rmfamily\bfseries\small\texttt{PHL}}
  \par\vspace{0.4mm}
  \resizebox{\linewidth}{!}{%
\begin{tikzpicture}[
    x=0.747cm,y=0.747cm,dporule,
    dpodisc/.style={minimum size=4.11mm},
    dpofont/.style={font=\figatomfont\bfseries\fontsize{6.00}{6.00}\selectfont}]
    \def\dposhorten{2.20mm}
    \draw[rulecard,draw=pbrk!42]          (0.00,-0.25) rectangle (5.20,5.40);
    \draw[rulecard]                        (6.50,-0.25) rectangle (11.70,5.40);
    \draw[rulecard,draw=npgGreen!52!black] (13.00,-0.25) rectangle (18.20,5.40);
    \node[cardlab] at (2.60,4.90) {\(L\)};
    \node[cardlab] at (9.10,4.90) {\(K\)};
    \node[cardlab] at (15.60,4.90) {\(R\)};
    \node[dpoC,dpodisc,dpofont,atomidpos={1/180}] (L1) at (0.600,2.600) {C};
    \node[dpoO,dpodisc,dpofont,atomidpos={2/90}] (L2) at (1.600,2.600) {O};
    \node[dpoP,dpodisc,dpofont,atomidpos={3/225}] (L3) at (2.600,2.600) {P};
    \node[dpoO,dpodisc,dpofont,atomidpos={4/135}] (L4) at (2.100,3.470) {O};
    \node[dpoO,dpodisc,dpofont,atomidpos={5/135}] (L5) at (3.300,3.300) {O};
    \node[dpoH,dpodisc,dpofont,atomidpos={6/45}] (L6) at (4.100,3.900) {H};
    \node[dpoO,dpodisc,dpofont,atomidpos={7/0}] (L7) at (2.600,1.600) {O};
    \node[dpoH,dpodisc,dpofont,atomidpos={8/225}] (L8) at (1.600,1.600) {H};
    \node[dpoO,dpodisc,dpofont,atomidpos={9/270}] (L9) at (3.600,2.350) {O};
    \node[dpoH,dpodisc,dpofont,atomidpos={10/0}] (L10) at (4.600,2.350) {H};
    \node[dpoH,dpodisc,dpofont,atomidpos={11/270}] (L11) at (2.600,0.600) {H};
    \draw[keepbond] (L1)--(L2);
    \draw[delbond] (L2)--(L3);
    \dbond{L3}{L4}{keepbond}{keepbond}
    \draw[keepbond] (L3)--(L5);
    \draw[keepbond] (L3)--(L9);
    \draw[keepbond] (L5)--(L6);
    \draw[delbond] (L7)--(L8);
    \draw[keepbond] (L7)--(L11);
    \draw[keepbond] (L9)--(L10);
    \node[dpoC,dpodisc,dpofont,atomidpos={1/180}] (K1) at (7.100,2.600) {C};
    \node[dpoO,dpodisc,dpofont,atomidpos={2/90}] (K2) at (8.100,2.600) {O};
    \node[dpoP,dpodisc,dpofont,atomidpos={3/225}] (K3) at (9.100,2.600) {P};
    \node[dpoO,dpodisc,dpofont,atomidpos={4/135}] (K4) at (8.600,3.470) {O};
    \node[dpoO,dpodisc,dpofont,atomidpos={5/135}] (K5) at (9.800,3.300) {O};
    \node[dpoH,dpodisc,dpofont,atomidpos={6/45}] (K6) at (10.600,3.900) {H};
    \node[dpoO,dpodisc,dpofont,atomidpos={7/0}] (K7) at (9.100,1.600) {O};
    \node[dpoH,dpodisc,dpofont,atomidpos={8/225}] (K8) at (8.100,1.600) {H};
    \node[dpoO,dpodisc,dpofont,atomidpos={9/270}] (K9) at (10.100,2.350) {O};
    \node[dpoH,dpodisc,dpofont,atomidpos={10/0}] (K10) at (11.100,2.350) {H};
    \node[dpoH,dpodisc,dpofont,atomidpos={11/270}] (K11) at (9.100,0.600) {H};
    \draw[keepbond] (K1)--(K2);
    \dbond{K3}{K4}{keepbond}{keepbond}
    \draw[keepbond] (K3)--(K5);
    \draw[keepbond] (K3)--(K9);
    \draw[keepbond] (K5)--(K6);
    \draw[keepbond] (K7)--(K11);
    \draw[keepbond] (K9)--(K10);
    \node[dpoC,dpodisc,dpofont,atomidpos={1/180}] (R1) at (13.600,2.600) {C};
    \node[dpoO,dpodisc,dpofont,atomidpos={2/90}] (R2) at (14.600,2.600) {O};
    \node[dpoP,dpodisc,dpofont,atomidpos={3/225}] (R3) at (15.600,2.600) {P};
    \node[dpoO,dpodisc,dpofont,atomidpos={4/135}] (R4) at (15.100,3.470) {O};
    \node[dpoO,dpodisc,dpofont,atomidpos={5/135}] (R5) at (16.300,3.300) {O};
    \node[dpoH,dpodisc,dpofont,atomidpos={6/45}] (R6) at (17.100,3.900) {H};
    \node[dpoO,dpodisc,dpofont,atomidpos={7/0}] (R7) at (15.600,1.600) {O};
    \node[dpoH,dpodisc,dpofont,atomidpos={8/225}] (R8) at (14.600,1.600) {H};
    \node[dpoO,dpodisc,dpofont,atomidpos={9/270}] (R9) at (16.600,2.350) {O};
    \node[dpoH,dpodisc,dpofont,atomidpos={10/0}] (R10) at (17.600,2.350) {H};
    \node[dpoH,dpodisc,dpofont,atomidpos={11/270}] (R11) at (15.600,0.600) {H};
    \draw[keepbond] (R1)--(R2);
    \draw[newbond] (R2)--(R8);
    \dbond{R3}{R4}{keepbond}{keepbond}
    \draw[keepbond] (R3)--(R5);
    \draw[newbond] (R3)--(R7);
    \draw[keepbond] (R3)--(R9);
    \draw[keepbond] (R5)--(R6);
    \draw[keepbond] (R7)--(R11);
    \draw[keepbond] (R9)--(R10);
    \draw[rulearrow] (6.35,2.70)--(5.35,2.70) node[midway,above=2pt,spanlab] {\(l\)};
    \draw[rulearrow] (11.85,2.70)--(12.85,2.70) node[midway,above=2pt,spanlab] {\(r\)};
  \end{tikzpicture}
  }
\end{minipage}\hfill\mbox{}
}

\newcommand{\nogdpopageone}{%
  \nogdpopanelsABC
  \par\vspace{1.5mm}
  \nogdpopanelD
  \par\vspace{1.5mm}
  \dpospanlegend
}

\newcommand{\nogdpopagetwo}{%
  \nogdpopanelE
  \par\vspace{1.5mm}
  \nogdpopanelF
  \par\vspace{1.5mm}
  \nogdpopanelG
  \par\vspace{1.5mm}
  \dpospanlegend
}

\begin{figure}[p]
  \centering
  \nogdpopageone
  \caption{Double-pushout rules used in the non-oxidative glycolysis
    benchmark. Each panel shows a rule as the span
    \(L\xleftarrow{\,l\,}K\xrightarrow{\,r\,}R\):
    (A) \texttt{AL},
    (B) \texttt{AlKe},
    (C) \texttt{KeAl},
    (D) \texttt{TKL},
    (E) \texttt{TAL},
    (F) \texttt{PK}, and
    (G) \texttt{PHL}.
    Gray bonds are preserved by the span, red bonds are deleted by
    \(l\), and green bonds are created by \(r\). Numbered atoms retain their
    identity across \(L\), \(K\), and \(R\).}
  \label{fig:si-nog-dpo}
\end{figure}

\clearpage

\begin{figure}[p]
  \ContinuedFloat
  \centering
  \nogdpopagetwo
  \caption[]{Double-pushout rules used in the non-oxidative glycolysis
    benchmark (continued). Panels
    (E)--(G) show \texttt{TAL}, \texttt{PK}, and \texttt{PHL},
    respectively. Bond colors and atom numbering follow the conventions
    stated on the first page of Figure~\ref{fig:si-nog-dpo}.}
\end{figure}
\endgroup

\clearpage

\subsubsection*{S2.2: KEGG Glycolysis Module M00001}

\begin{figure}[!ht]
\centering
\resizebox{\linewidth}{!}{\begin{tikzpicture}[
  x=1cm,y=1cm,font=\rmfamily\footnotesize,
  line cap=round,line join=round,
  species/.style={circle,draw=npgBlue!85!black,fill=npgBlue!17,
                  line width=.75pt,inner sep=0pt,minimum size=6.2mm,
                  font=\rmfamily\small,text=npgNavy},
  leafsp/.style={species,minimum size=5.4mm},
  reaction/.style={rounded corners=1.4pt,draw=npgRed!82!black,
                   fill=npgSalmon!24,line width=.75pt,inner sep=1.2pt,
                   minimum width=8.4mm,minimum height=6.2mm,
                   align=center,font=\rmfamily\footnotesize,
                   text=npgRed!72!black},
  inedge/.style={-{Stealth[length=1.8mm,width=1.35mm]},
                  draw=npgRed!85!black,line width=.95pt,
                  shorten >=1.8pt,shorten <=1.8pt},
  outedge/.style={-{Stealth[length=1.8mm,width=1.35mm]},
                   draw=npgGreen!70!black,line width=1pt,
                   shorten >=1.8pt,shorten <=1.8pt},
  mult/.style={font=\rmfamily\tiny,text=mut,inner sep=1pt},
  note/.style={font=\rmfamily\tiny,text=mut,align=center,inner sep=1pt},
  legend/.style={font=\rmfamily\footnotesize,text=black!78,align=center,inner sep=1pt},
  key/.style={font=\rmfamily\footnotesize,text=black!85,align=left,inner sep=1pt},
  modbox/.style={rounded corners=2.2mm,draw=mut,draw opacity=.35,dash pattern=on 2pt off 1.6pt,
                 fill=mut,fill opacity=.06}
]
  \draw[spanel] (0.00,3.80) rectangle (13.60,15.15);
  \node[badge] at (0.36,14.79) {A};
  \draw[spanel] (0.00,0.10) rectangle (13.60,3.65);
  \node[badge] at (0.36,3.29) {B};

  \begin{scope}[cm={1.05,0,0,1.20,(-0.340,2.18)}]

  \node[species,minimum size=9.0mm] (sA) at (6.80,8.79) {$A$};
  \node[species,minimum size=5.8mm] (sB) at (3.70,9.13) {$B$};
  \node[species,minimum size=7.4mm] (sC) at (6.57,5.92) {$C$};
  \node[leafsp] (sD) at (5.25,10.11) {$D$};
  \node[species,minimum size=5.8mm] (sE) at (2.24,6.59) {$E$};
  \node[species,minimum size=5.8mm] (sF) at (1.24,8.30) {$F$};
  \node[species,minimum size=6.6mm] (sG) at (10.66,6.74) {$G$};
  \node[species,minimum size=6.6mm] (sH) at (10.32,4.67) {$H$};
  \node[species,minimum size=5.8mm] (sI) at (7.92,2.99) {$I$};
  \node[species,minimum size=8.2mm] (sJ) at (6.05,3.65) {$J$};
  \node[species,minimum size=5.8mm] (sK) at (5.60,2.55) {$K$};
  \node[leafsp] (sL) at (4.55,2.70) {$L$};
  \node[species,minimum size=6.6mm] (sM) at (4.65,5.55) {$M$};
  \node[leafsp] (sN) at (3.60,2.55) {$N$};
  \node[species,minimum size=6.6mm] (sO) at (2.74,3.19) {$O$};
  \node[leafsp] (sP) at (6.10,4.60) {$P$};
  \node[leafsp] (sQ) at (3.48,5.72) {$Q$};
  \node[species,minimum size=6.6mm] (sR) at (3.68,4.82) {$R$};
  \node[species,minimum size=6.6mm] (sS) at (8.43,10.14) {$S$};
  \node[species,minimum size=7.4mm] (sT) at (10.44,8.96) {$T$};
  \node[leafsp] (sU) at (12.75,10.43) {$U$};
  \node[leafsp] (sV) at (12.75,8.68) {$V$};
  \node[species,minimum size=5.8mm] (sW) at (9.01,4.95) {$W$};
  \node[leafsp] (sX) at (0.84,3.29) {$X$};
  \node[leafsp] (sY) at (1.61,2.64) {$Y$};

  \node[reaction] (r1) at (6.84,2.85) {\(\mathrm r_1\)};
  \node[reaction] (r2) at (2.38,5.72) {\(\mathrm r_2\)};
  \node[reaction] (r3) at (11.44,7.96) {\(\mathrm r_3\)};
  \node[reaction] (r4) at (2.57,8.15) {\(\mathrm r_4\)};
  \node[reaction] (r5) at (9.15,3.38) {\(\mathrm r_5\)};
  \node[reaction] (r6) at (5.27,8.19) {\(\mathrm r_6\)};
  \node[reaction] (r7) at (8.10,6.59) {\(\mathrm r_7\)};
  \node[reaction] (r8) at (4.25,3.38) {\(\mathrm r_8\)};
  \node[reaction] (r9) at (5.02,4.42) {\(\mathrm r_9\)};
  \node[reaction] (r10) at (5.19,6.76) {\(\mathrm r_{10}\)};
  \node[reaction] (r11) at (8.38,8.02) {\(\mathrm r_{11}\)};
  \node[reaction] (r12) at (11.92,9.67) {\(\mathrm r_{12}\)};
  \node[reaction] (r13) at (9.51,6.22) {\(\mathrm r_{13}\)};
  \node[reaction] (r14) at (1.72,4.48) {\(\mathrm r_{14}\)};
  \node[reaction] (r15) at (9.14,9.04) {\(\mathrm r_{15}\)};

  \draw[inedge] (sI) to (r1);
  \draw[outedge] (r1) to (sJ);
  \draw[inedge] (sR) to (r2);
  \draw[outedge] (r2) to (sE);
  \draw[inedge] (sT) to (r3);
  \draw[outedge] (r3) to (sG);
  \draw[inedge] (sE) to (r4);
  \draw[outedge] (r4) to (sB);
  \draw[outedge] (r4) to (sF);
  \draw[inedge] (sH) to[bend left=10] (r5);
  \draw[outedge] (r5) to (sI);
  \draw[outedge] (r5) to[bend right=8] (sJ);
  \draw[inedge] (sA) to (r6);
  \draw[inedge] (sB) to (r6);
  \draw[outedge] (r6) to (sC);
  \draw[outedge] (r6) to (sD);
  \draw[inedge] (sC) to (r7);
  \draw[inedge] (sG) to (r7);
  \draw[outedge] (r7) to (sA);
  \draw[outedge] (r7) to (sH);
  \draw[inedge] (sJ) to (r8);
  \draw[inedge] (sK) to (r8);
  \draw[inedge,shorten <=0pt,shorten >=0pt] (sL) to (r8);
  \draw[outedge] (r8) to (sM);
  \draw[outedge] (r8) to (sN);
  \draw[outedge] (r8) to (sO);
  \draw[inedge] (sJ) to (r9);
  \draw[inedge] (sK) to (r9);
  \draw[inedge] (sP) to (r9);
  \draw[outedge] (r9) to (sM);
  \draw[outedge] (r9) to (sQ);
  \draw[outedge] (r9) to (sO);
  \draw[inedge] (sA) to[bend right=6] (r10);
  \draw[inedge] (sM) to (r10);
  \draw[outedge] (r10) to (sC);
  \draw[outedge] (r10) to (sR);
  \draw[inedge] (sC) to[bend right=6] (r11);
  \draw[inedge] (sS) to (r11);
  \draw[outedge] (r11) to (sA);
  \draw[outedge] (r11) to (sT);
  \draw[inedge] (sU) to (r12);
  \draw[inedge] (sS) to (r12);
  \draw[outedge] (r12) to (sV);
  \draw[outedge] (r12) to (sT);
  \draw[inedge] (sA) to (r13);
  \draw[inedge] (sG) to (r13);
  \draw[outedge] (r13) to (sW);
  \draw[outedge] (r13) to (sH);
  \draw[inedge] (sJ) to[bend right=8] (r14);
  \draw[inedge] (sF) to[bend left=6] (r14);
  \draw[inedge] (sX) to node[mult,pos=0.42,right=1pt] {\(\times2\)} (r14);
  \draw[outedge] (r14) to[bend left=8] (sR);
  \draw[outedge] (r14) to node[mult,pos=0.42,right=1pt] {\(\times2\)} (sO);
  \draw[outedge] (r14) to node[mult,pos=0.55,right=1pt] {\(\times2\)} (sY);
  \draw[inedge] (sS) to (r15);
  \draw[inedge] (sA) to (r15);
  \draw[outedge] (r15) to (sT);
  \draw[outedge] (r15) to[bend right=14] (sW);
  \end{scope}

  \node[species,minimum size=4.0mm,font=\rmfamily\scriptsize] at (0.55,4.48) {};
  \node[legend,anchor=west] at (0.84,4.48) {species (size \(\propto\) degree)};
  \node[reaction,minimum width=6.0mm,minimum height=4.0mm,font=\rmfamily\tiny]
    at (6.05,4.48) {};
  \node[legend,anchor=west] at (6.43,4.48) {reaction};
  \node[legend,anchor=west] at (9.15,4.48) {\(\times n\): stoichiometric multiplicity};
  \draw[inedge] (0.55,4.08)--(1.25,4.08);
  \node[legend,anchor=west] at (1.40,4.08) {reactant \(\to\) reaction};
  \draw[outedge] (9.05,4.08)--(9.75,4.08);
  \node[legend,anchor=west] at (9.90,4.08) {reaction \(\to\) product};

  \node[key,anchor=west] at (0.45,2.75) {\(\mathrm r_1\): \(\mathrm{I\to J}\)};
  \node[key,anchor=west] at (0.45,2.22) {\(\mathrm r_2\): \(\mathrm{R\to E}\)};
  \node[key,anchor=west] at (0.45,1.69) {\(\mathrm r_3\): \(\mathrm{T\to G}\)};
  \node[key,anchor=west] at (0.45,1.16) {\(\mathrm r_4\): \(\mathrm{E\to B+F}\)};
  \node[key,anchor=west] at (0.45,0.63) {\(\mathrm r_5\): \(\mathrm{H\to I+J}\)};
  \node[key,anchor=west] at (4.28,2.75) {\(\mathrm r_6\): \(\mathrm{A+B\to C+D}\)};
  \node[key,anchor=west] at (4.28,2.22) {\(\mathrm r_7\): \(\mathrm{C+G\to A+H}\)};
  \node[key,anchor=west] at (4.28,1.69) {\(\mathrm r_8\): \(\mathrm{J+K+L\to M+N+O}\)};
  \node[key,anchor=west] at (4.28,1.16) {\(\mathrm r_9\): \(\mathrm{J+K+P\to M+Q+O}\)};
  \node[key,anchor=west] at (4.28,0.63) {\(\mathrm r_{10}\): \(\mathrm{A+M\to C+R}\)};
  \node[key,anchor=west] at (9.11,2.75) {\(\mathrm r_{11}\): \(\mathrm{C+S\to A+T}\)};
  \node[key,anchor=west] at (9.11,2.22) {\(\mathrm r_{12}\): \(\mathrm{U+S\to V+T}\)};
  \node[key,anchor=west] at (9.11,1.69) {\(\mathrm r_{13}\): \(\mathrm{A+G\to W+H}\)};
  \node[key,anchor=west] at (9.11,1.16) {\(\mathrm r_{14}\): \(\mathrm{J+F+2X\to R+2O+2Y}\)};
  \node[key,anchor=west] at (9.11,0.63) {\(\mathrm r_{15}\): \(\mathrm{S+A\to T+W}\)};
\end{tikzpicture}}
  \caption{M00001 abstract reaction system.
  (A) Species--reaction bipartite (K{\"o}nig) representation of the
  M00001 example data. Species-node size scales with network degree; red and
  green arrows denote reactant and product incidence, respectively, and
  \(\times n\) marks stoichiometric multiplicity.
  (B) Abstract reactions used to construct the network.}
\label{fig:si-m00001-koenig}
\end{figure}

\clearpage
\begingroup
\tikzset{
  dporulerow/.style={draw=none,fill=none},
  dporulecard/.style={rounded corners=1.4mm,draw=hair,fill=white,line width=.55pt},
  dporulebadge/.style={circle,fill=ink,text=white,draw=none,
    minimum size=5.4mm,inner sep=0pt,
    font=\figatomfont\bfseries\footnotesize,xshift=2.4mm},
  dporuletitle/.style={font=\rmfamily\bfseries\small,text=ink,
    xshift=2.4mm},
  dporulecardtitle/.style={font=\rmfamily\bfseries\large,text=ink},
  dporulearrow/.style={-{Stealth[length=2.4mm,width=1.7mm]},draw=ink2,line width=.75pt},
  dporulespanlabel/.style={font=\rmfamily\bfseries\small,text=ink},
  dporulekeep/.style={draw=ink2,line width=.78pt},
  dporuledel/.style={draw=pbrk,line width=1.05pt},
  dporulenew/.style={draw=npgGreen!86!black,line width=1.05pt},
  dporuleatom/.style={circle,minimum size=4.7mm,inner sep=0pt,
    draw=ink2,line width=.45pt,font=\figatomfont\bfseries\fontsize{6.8}{6.8}\selectfont,
    label distance=1.25pt},
  dporuleC/.style={dporuleatom,draw=cpkC!70!black,fill=cpkC,text=white},
  dporuleO/.style={dporuleatom,draw=cpkO!80!black,fill=cpkO,text=white},
  dporuleN/.style={dporuleatom,draw=blue!75!black,fill=blue!64!black,text=white},
  dporuleP/.style={dporuleatom,draw=cpkP!70!black,fill=cpkP,text=ink},
  dporuleS/.style={dporuleatom,draw=yellow!55!black,fill=yellow!60,text=ink},
  dporuleFe/.style={dporuleatom,draw=brown!88!black,fill=brown!72,text=white},
  dporuleH/.style={dporuleatom,fill=white,text=ink},
  dporulemap/.style={font=\rmfamily\bfseries\tiny,
    text=ink,inner sep=.35pt}
}
\newcommand{\mzerodporulelegenditem}[2]{%
  \tikz[baseline=-.6ex]{\draw[#1] (0,0)--(.75,0);}%
  \hspace{0.35em}{\rmfamily\scriptsize\color{ink2}#2}%
}
\newcommand{\mzerodporulelegend}{%
  \noindent\hbox to \linewidth{%
    \mzerodporulelegenditem{dporulekeep}{preserved}\hfil
    \mzerodporulelegenditem{dporuledel}{broken}\hfil
    \mzerodporulelegenditem{dporulenew}{formed}\hfil
    {\rmfamily\scriptsize\color{ink2}numbers = atom map}%
  }%
}
\newcommand{\mzerodporulespageone}{%
  \resizebox{\linewidth}{!}{%
\begin{tikzpicture}[x=1cm,y=1cm,line cap=round,line join=round]
  \draw[dporulerow] (0,0) rectangle (15.2,4.48);
  \node[dporulebadge] at (0.38,4.10) {A};
  \node[anchor=west,dporuletitle] at (0.72,4.11) {\texttt{class:0}};
  \draw[dporulecard,draw=pbrk!42] (0.35,0.10) rectangle (4.65,3.78);
  \draw[dporulecard] (5.45,0.10) rectangle (9.75,3.78);
  \draw[dporulecard,draw=npgGreen!52!black] (10.55,0.10) rectangle (14.85,3.78);
  \node[dporulecardtitle] at (2.50,3.48) {\(L\)};
  \node[dporulecardtitle] at (7.60,3.48) {\(K\)};
  \node[dporulecardtitle] at (12.70,3.48) {\(R\)};
  \draw[dporulearrow] (5.30,2.05)--(4.80,2.05)
    node[midway,above=2pt,dporulespanlabel] {\(l\)};
  \draw[dporulearrow] (9.90,2.05)--(10.40,2.05)
    node[midway,above=2pt,dporulespanlabel] {\(r\)};
    \draw[dporuledel] (1.722,1.052)--(1.400,1.830);
    \draw[dporuledel] (1.400,1.830)--(1.722,2.608);
    \draw[dporuledel] (1.706,1.012)--(2.484,0.690);
    \draw[dporuledel] (1.739,1.092)--(2.516,0.770);
    \draw[dporuledel] (3.600,1.830)--(3.278,1.052);
    \draw[dporuledel] (2.500,2.930)--(3.278,2.608);
    \node[dporuleC,label={[dporulemap]225:5}] at (1.722,1.052) {C};
    \node[dporuleO,label={[dporulemap]135:6}] at (1.400,1.830) {O};
    \node[dporuleP,label={[dporulemap]135:7}] at (1.722,2.608) {P};
    \node[dporuleC,label={[dporulemap]315:8}] at (2.500,0.730) {C};
    \node[dporuleN,label={[dporulemap]45:1}] at (3.600,1.830) {N};
    \node[dporuleH,label={[dporulemap]315:2}] at (3.278,1.052) {H};
    \node[dporuleO,label={[dporulemap]45:3}] at (2.500,2.930) {O};
    \node[dporuleH,label={[dporulemap]45:4}] at (3.278,2.608) {H};
    \node[dporuleN,label={[dporulemap]45:1}] at (8.700,1.830) {N};
    \node[dporuleH,label={[dporulemap]315:2}] at (8.378,1.052) {H};
    \node[dporuleO,label={[dporulemap]45:3}] at (7.600,2.930) {O};
    \node[dporuleH,label={[dporulemap]45:4}] at (8.378,2.608) {H};
    \node[dporuleC,label={[dporulemap]225:5}] at (6.822,1.052) {C};
    \node[dporuleO,label={[dporulemap]135:6}] at (6.500,1.830) {O};
    \node[dporuleP,label={[dporulemap]135:7}] at (6.822,2.608) {P};
    \node[dporuleC,label={[dporulemap]315:8}] at (7.600,0.730) {C};
    \draw[dporulenew] (13.478,1.052)--(12.700,0.730);
    \draw[dporulenew] (12.700,0.730)--(11.922,1.052);
    \draw[dporulenew] (11.962,1.069)--(11.640,1.846);
    \draw[dporulenew] (11.882,1.036)--(11.560,1.814);
    \draw[dporulenew] (13.800,1.830)--(13.478,2.608);
    \draw[dporulenew] (12.700,2.930)--(11.922,2.608);
    \node[dporuleH,label={[dporulemap]315:2}] at (13.478,1.052) {H};
    \node[dporuleC,label={[dporulemap]315:8}] at (12.700,0.730) {C};
    \node[dporuleC,label={[dporulemap]225:5}] at (11.922,1.052) {C};
    \node[dporuleO,label={[dporulemap]135:6}] at (11.600,1.830) {O};
    \node[dporuleN,label={[dporulemap]45:1}] at (13.800,1.830) {N};
    \node[dporuleH,label={[dporulemap]45:4}] at (13.478,2.608) {H};
    \node[dporuleO,label={[dporulemap]45:3}] at (12.700,2.930) {O};
    \node[dporuleP,label={[dporulemap]135:7}] at (11.922,2.608) {P};
\end{tikzpicture}
  }
  \par\vspace{2.2mm}
  \resizebox{\linewidth}{!}{%
\begin{tikzpicture}[x=1cm,y=1cm,line cap=round,line join=round]
  \draw[dporulerow] (0,0) rectangle (15.2,4.48);
  \node[dporulebadge] at (0.38,4.10) {B};
  \node[anchor=west,dporuletitle] at (0.72,4.11) {\texttt{class:1}};
  \draw[dporulecard,draw=pbrk!42] (0.35,0.10) rectangle (4.65,3.78);
  \draw[dporulecard] (5.45,0.10) rectangle (9.75,3.78);
  \draw[dporulecard,draw=npgGreen!52!black] (10.55,0.10) rectangle (14.85,3.78);
  \node[dporulecardtitle] at (2.50,3.48) {\(L\)};
  \node[dporulecardtitle] at (7.60,3.48) {\(K\)};
  \node[dporulecardtitle] at (12.70,3.48) {\(R\)};
  \draw[dporulearrow] (5.30,2.05)--(4.80,2.05)
    node[midway,above=2pt,dporulespanlabel] {\(l\)};
  \draw[dporulearrow] (9.90,2.05)--(10.40,2.05)
    node[midway,above=2pt,dporulespanlabel] {\(r\)};
    \draw[dporuledel] (3.455,1.650)--(2.500,0.695);
    \draw[dporuledel] (3.455,1.650)--(2.500,2.605);
    \draw[dporuledel] (2.500,2.605)--(1.545,1.650);
    \node[dporuleC,label={[dporulemap]45:1}] at (3.455,1.650) {C};
    \node[dporuleH,label={[dporulemap]315:2}] at (2.500,0.695) {H};
    \node[dporuleC,label={[dporulemap]45:3}] at (2.500,2.605) {C};
    \node[dporuleO,label={[dporulemap]135:4}] at (1.545,1.650) {O};
    \node[dporuleC,label={[dporulemap]45:1}] at (8.555,1.650) {C};
    \node[dporuleH,label={[dporulemap]315:2}] at (7.600,0.695) {H};
    \node[dporuleC,label={[dporulemap]45:3}] at (7.600,2.605) {C};
    \node[dporuleO,label={[dporulemap]135:4}] at (6.645,1.650) {O};
    \draw[dporulenew] (13.685,1.680)--(12.730,2.635);
    \draw[dporulenew] (13.624,1.620)--(12.670,2.574);
    \draw[dporulenew] (12.700,0.695)--(11.745,1.650);
    \node[dporuleC,label={[dporulemap]45:1}] at (13.655,1.650) {C};
    \node[dporuleC,label={[dporulemap]45:3}] at (12.700,2.605) {C};
    \node[dporuleH,label={[dporulemap]315:2}] at (12.700,0.695) {H};
    \node[dporuleO,label={[dporulemap]135:4}] at (11.745,1.650) {O};
\end{tikzpicture}
  }
  \par\vspace{2.2mm}
  \resizebox{\linewidth}{!}{%
\begin{tikzpicture}[x=1cm,y=1cm,line cap=round,line join=round]
  \draw[dporulerow] (0,0) rectangle (15.2,4.48);
  \node[dporulebadge] at (0.38,4.10) {C};
  \node[anchor=west,dporuletitle] at (0.72,4.11) {\texttt{class:2}};
  \draw[dporulecard,draw=pbrk!42] (0.35,0.10) rectangle (4.65,3.78);
  \draw[dporulecard] (5.45,0.10) rectangle (9.75,3.78);
  \draw[dporulecard,draw=npgGreen!52!black] (10.55,0.10) rectangle (14.85,3.78);
  \node[dporulecardtitle] at (2.50,3.48) {\(L\)};
  \node[dporulecardtitle] at (7.60,3.48) {\(K\)};
  \node[dporulecardtitle] at (12.70,3.48) {\(R\)};
  \draw[dporulearrow] (5.30,2.05)--(4.80,2.05)
    node[midway,above=2pt,dporulespanlabel] {\(l\)};
  \draw[dporulearrow] (9.90,2.05)--(10.40,2.05)
    node[midway,above=2pt,dporulespanlabel] {\(r\)};
    \draw[dporuledel] (3.005,2.405)--(2.500,1.900);
    \draw[dporuledel] (1.995,2.405)--(2.500,2.910);
    \draw[dporuledel] (1.786,0.710)--(1.786,1.424);
    \draw[dporuledel] (3.214,0.710)--(2.500,0.710);
    \draw[dporulekeep] (2.500,0.710)--(2.500,1.424);
    \draw[dporuledel] (2.500,1.424)--(3.214,1.424);
    \node[dporuleC,label={[dporulemap]45:7}] at (3.005,2.405) {C};
    \node[dporuleO,label={[dporulemap]45:8}] at (2.500,1.900) {O};
    \node[dporuleC,label={[dporulemap]135:9}] at (1.995,2.405) {C};
    \node[dporuleH,label={[dporulemap]45:10}] at (2.500,2.910) {H};
    \node[dporuleH,label={[dporulemap]225:5}] at (1.786,0.710) {H};
    \node[dporuleO,label={[dporulemap]225:6}] at (1.786,1.424) {O};
    \node[dporuleP,label={[dporulemap]315:1}] at (3.214,0.710) {P};
    \node[dporuleO,label={[dporulemap]315:2}] at (2.500,0.710) {O};
    \node[dporuleP,label={[dporulemap]315:3}] at (2.500,1.424) {P};
    \node[dporuleO,label={[dporulemap]315:4}] at (3.214,1.424) {O};
    \draw[dporulekeep] (7.600,0.710)--(7.600,1.424);
    \node[dporuleP,label={[dporulemap]315:1}] at (8.314,0.710) {P};
    \node[dporuleO,label={[dporulemap]315:2}] at (7.600,0.710) {O};
    \node[dporuleP,label={[dporulemap]315:3}] at (7.600,1.424) {P};
    \node[dporuleO,label={[dporulemap]315:4}] at (8.314,1.424) {O};
    \node[dporuleH,label={[dporulemap]225:5}] at (6.886,0.710) {H};
    \node[dporuleO,label={[dporulemap]225:6}] at (6.886,1.424) {O};
    \node[dporuleC,label={[dporulemap]45:7}] at (8.105,2.405) {C};
    \node[dporuleO,label={[dporulemap]45:8}] at (7.600,1.900) {O};
    \node[dporuleC,label={[dporulemap]135:9}] at (7.095,2.405) {C};
    \node[dporuleH,label={[dporulemap]45:10}] at (7.600,2.910) {H};
    \draw[dporulenew] (13.205,2.405)--(12.700,2.910);
    \draw[dporulekeep] (12.700,0.710)--(12.700,1.424);
    \draw[dporulenew] (12.700,1.424)--(11.986,1.424);
    \draw[dporulenew] (12.700,0.710)--(11.986,0.710);
    \draw[dporulenew] (12.700,1.900)--(12.195,2.405);
    \draw[dporulenew] (13.414,0.710)--(13.414,1.424);
    \node[dporuleC,label={[dporulemap]45:7}] at (13.205,2.405) {C};
    \node[dporuleH,label={[dporulemap]45:10}] at (12.700,2.910) {H};
    \node[dporuleO,label={[dporulemap]315:2}] at (12.700,0.710) {O};
    \node[dporuleP,label={[dporulemap]315:3}] at (12.700,1.424) {P};
    \node[dporuleO,label={[dporulemap]225:6}] at (11.986,1.424) {O};
    \node[dporuleH,label={[dporulemap]225:5}] at (11.986,0.710) {H};
    \node[dporuleO,label={[dporulemap]45:8}] at (12.700,1.900) {O};
    \node[dporuleC,label={[dporulemap]135:9}] at (12.195,2.405) {C};
    \node[dporuleP,label={[dporulemap]315:1}] at (13.414,0.710) {P};
    \node[dporuleO,label={[dporulemap]315:4}] at (13.414,1.424) {O};
\end{tikzpicture}
  }
  \par\vspace{1.6mm}\mzerodporulelegend
}
\newcommand{\mzerodporulespagetwo}{%
  \resizebox{\linewidth}{!}{%
\begin{tikzpicture}[x=1cm,y=1cm,line cap=round,line join=round]
  \draw[dporulerow] (0,0) rectangle (15.2,4.48);
  \node[dporulebadge] at (0.38,4.10) {D};
  \node[anchor=west,dporuletitle] at (0.72,4.11) {\texttt{class:3}};
  \draw[dporulecard,draw=pbrk!42] (0.35,0.10) rectangle (4.65,3.78);
  \draw[dporulecard] (5.45,0.10) rectangle (9.75,3.78);
  \draw[dporulecard,draw=npgGreen!52!black] (10.55,0.10) rectangle (14.85,3.78);
  \node[dporulecardtitle] at (2.50,3.48) {\(L\)};
  \node[dporulecardtitle] at (7.60,3.48) {\(K\)};
  \node[dporulecardtitle] at (12.70,3.48) {\(R\)};
  \draw[dporulearrow] (5.30,2.05)--(4.80,2.05)
    node[midway,above=2pt,dporulespanlabel] {\(l\)};
  \draw[dporulearrow] (9.90,2.05)--(10.40,2.05)
    node[midway,above=2pt,dporulespanlabel] {\(r\)};
    \draw[dporuledel] (4.123,1.810)--(3.330,2.902);
    \draw[dporuledel] (3.330,2.902)--(2.046,2.485);
    \draw[dporuledel] (2.046,2.485)--(0.877,1.810);
    \draw[dporulekeep] (2.046,2.485)--(2.046,1.135);
    \draw[dporuledel] (2.033,1.094)--(3.316,0.677);
    \draw[dporuledel] (2.059,1.176)--(3.343,0.759);
    \node[dporuleH,label={[dporulemap]45:1}] at (4.123,1.810) {H};
    \node[dporuleO,label={[dporulemap]45:2}] at (3.330,2.902) {O};
    \node[dporuleC,label={[dporulemap]135:3}] at (2.046,2.485) {C};
    \node[dporuleH,label={[dporulemap]135:4}] at (0.877,1.810) {H};
    \node[dporuleC,label={[dporulemap]225:5}] at (2.046,1.135) {C};
    \node[dporuleO,label={[dporulemap]315:6}] at (3.330,0.718) {O};
    \draw[dporulekeep] (7.146,2.485)--(7.146,1.135);
    \node[dporuleH,label={[dporulemap]45:1}] at (9.223,1.810) {H};
    \node[dporuleO,label={[dporulemap]45:2}] at (8.430,2.902) {O};
    \node[dporuleC,label={[dporulemap]135:3}] at (7.146,2.485) {C};
    \node[dporuleH,label={[dporulemap]135:4}] at (5.977,1.810) {H};
    \node[dporuleC,label={[dporulemap]225:5}] at (7.146,1.135) {C};
    \node[dporuleO,label={[dporulemap]315:6}] at (8.430,0.718) {O};
    \draw[dporulenew] (14.323,1.810)--(13.530,0.718);
    \draw[dporulenew] (13.530,0.718)--(12.246,1.135);
    \draw[dporulekeep] (12.246,1.135)--(12.246,2.485);
    \draw[dporulenew] (12.259,2.444)--(13.543,2.861);
    \draw[dporulenew] (12.233,2.526)--(13.516,2.943);
    \draw[dporulenew] (12.246,1.135)--(11.077,1.810);
    \node[dporuleH,label={[dporulemap]45:1}] at (14.323,1.810) {H};
    \node[dporuleO,label={[dporulemap]315:6}] at (13.530,0.718) {O};
    \node[dporuleC,label={[dporulemap]225:5}] at (12.246,1.135) {C};
    \node[dporuleC,label={[dporulemap]135:3}] at (12.246,2.485) {C};
    \node[dporuleO,label={[dporulemap]45:2}] at (13.530,2.902) {O};
    \node[dporuleH,label={[dporulemap]135:4}] at (11.077,1.810) {H};
\end{tikzpicture}
  }
  \par\vspace{2.2mm}
  \resizebox{\linewidth}{!}{%
\begin{tikzpicture}[x=1cm,y=1cm,line cap=round,line join=round]
  \draw[dporulerow] (0,0) rectangle (15.2,4.48);
  \node[dporulebadge] at (0.38,4.10) {E};
  \node[anchor=west,dporuletitle] at (0.72,4.11) {\texttt{class:4}};
  \draw[dporulecard,draw=pbrk!42] (0.35,0.10) rectangle (4.65,3.78);
  \draw[dporulecard] (5.45,0.10) rectangle (9.75,3.78);
  \draw[dporulecard,draw=npgGreen!52!black] (10.55,0.10) rectangle (14.85,3.78);
  \node[dporulecardtitle] at (2.50,3.48) {\(L\)};
  \node[dporulecardtitle] at (7.60,3.48) {\(K\)};
  \node[dporulecardtitle] at (12.70,3.48) {\(R\)};
  \draw[dporulearrow] (5.30,2.05)--(4.80,2.05)
    node[midway,above=2pt,dporulespanlabel] {\(l\)};
  \draw[dporulearrow] (9.90,2.05)--(10.40,2.05)
    node[midway,above=2pt,dporulespanlabel] {\(r\)};
    \draw[dporuledel] (2.030,0.820)--(1.070,0.820);
    \draw[dporuledel] (2.251,2.387)--(1.671,2.717);
    \draw[dporuledel] (2.209,2.313)--(1.629,2.643);
    \draw[dporuledel] (1.650,2.680)--(1.070,2.350);
    \draw[dporuledel] (1.027,2.350)--(1.027,1.720);
    \draw[dporuledel] (1.113,2.350)--(1.113,1.720);
    \draw[dporulekeep] (1.070,1.720)--(1.650,1.390);
    \draw[dporulekeep] (1.671,1.353)--(2.251,1.683);
    \draw[dporulekeep] (1.629,1.427)--(2.209,1.757);
    \draw[dporuledel] (2.990,0.820)--(3.900,0.820);
    \draw[dporulekeep] (2.230,1.720)--(2.230,2.350);
    \node[dporuleC,label={[dporulemap]315:1}] at (2.030,0.820) {C};
    \node[dporuleH,label={[dporulemap]225:2}] at (1.070,0.820) {H};
    \node[dporuleN,label={[dporulemap]45:5}] at (2.230,2.350) {\(\mathsf{N}^{+}\)};
    \node[dporuleC,label={[dporulemap]135:10}] at (1.650,2.680) {C};
    \node[dporuleC,label={[dporulemap]135:9}] at (1.070,2.350) {C};
    \node[dporuleC,label={[dporulemap]135:8}] at (1.070,1.720) {C};
    \node[dporuleC,label={[dporulemap]225:7}] at (1.650,1.390) {C};
    \node[dporuleC,label={[dporulemap]45:6}] at (2.230,1.720) {C};
    \node[dporuleO,label={[dporulemap]315:3}] at (2.990,0.820) {O};
    \node[dporuleH,label={[dporulemap]315:4}] at (3.900,0.820) {H};
    \draw[dporulekeep] (6.170,1.720)--(6.750,1.390);
    \draw[dporulekeep] (6.771,1.353)--(7.351,1.683);
    \draw[dporulekeep] (6.729,1.427)--(7.309,1.757);
    \draw[dporulekeep] (7.330,1.720)--(7.330,2.350);
    \node[dporuleC,label={[dporulemap]315:1}] at (7.130,0.820) {C};
    \node[dporuleH,label={[dporulemap]225:2}] at (6.170,0.820) {H};
    \node[dporuleO,label={[dporulemap]315:3}] at (8.090,0.820) {O};
    \node[dporuleH,label={[dporulemap]315:4}] at (9.000,0.820) {H};
    \node[dporuleN,label={[dporulemap]45:5}] at (7.330,2.350) {N};
    \node[dporuleC,label={[dporulemap]45:6}] at (7.330,1.720) {C};
    \node[dporuleC,label={[dporulemap]225:7}] at (6.750,1.390) {C};
    \node[dporuleC,label={[dporulemap]135:8}] at (6.170,1.720) {C};
    \node[dporuleC,label={[dporulemap]135:9}] at (6.170,2.350) {C};
    \node[dporuleC,label={[dporulemap]135:10}] at (6.750,2.680) {C};
    \draw[dporulenew] (12.230,0.820)--(13.190,0.820);
    \draw[dporulenew] (11.270,0.820)--(11.270,1.720);
    \draw[dporulekeep] (11.270,1.720)--(11.850,1.390);
    \draw[dporulekeep] (11.871,1.353)--(12.451,1.683);
    \draw[dporulekeep] (11.829,1.427)--(12.409,1.757);
    \draw[dporulekeep] (12.430,1.720)--(12.430,2.350);
    \draw[dporulenew] (12.430,2.350)--(11.850,2.680);
    \draw[dporulenew] (11.829,2.717)--(11.249,2.387);
    \draw[dporulenew] (11.871,2.643)--(11.291,2.313);
    \draw[dporulenew] (11.270,2.350)--(11.270,1.720);
    \node[dporuleC,label={[dporulemap]315:1}] at (12.230,0.820) {C};
    \node[dporuleO,label={[dporulemap]315:3}] at (13.190,0.820) {O};
    \node[dporuleH,label={[dporulemap]315:4}] at (14.100,0.820) {\(\mathsf{H}^{+}\)};
    \node[dporuleH,label={[dporulemap]225:2}] at (11.270,0.820) {H};
    \node[dporuleC,label={[dporulemap]135:8}] at (11.270,1.720) {C};
    \node[dporuleC,label={[dporulemap]225:7}] at (11.850,1.390) {C};
    \node[dporuleC,label={[dporulemap]45:6}] at (12.430,1.720) {C};
    \node[dporuleN,label={[dporulemap]45:5}] at (12.430,2.350) {N};
    \node[dporuleC,label={[dporulemap]135:10}] at (11.850,2.680) {C};
    \node[dporuleC,label={[dporulemap]135:9}] at (11.270,2.350) {C};
\end{tikzpicture}
  }
  \par\vspace{2.2mm}
  \resizebox{\linewidth}{!}{%
\begin{tikzpicture}[x=1cm,y=1cm,line cap=round,line join=round]
  \draw[dporulerow] (0,0) rectangle (15.2,4.48);
  \node[dporulebadge] at (0.38,4.10) {F};
  \node[anchor=west,dporuletitle] at (0.72,4.11) {\texttt{class:5}};
  \draw[dporulecard,draw=pbrk!42] (0.35,0.10) rectangle (4.65,3.78);
  \draw[dporulecard] (5.45,0.10) rectangle (9.75,3.78);
  \draw[dporulecard,draw=npgGreen!52!black] (10.55,0.10) rectangle (14.85,3.78);
  \node[dporulecardtitle] at (2.50,3.48) {\(L\)};
  \node[dporulecardtitle] at (7.60,3.48) {\(K\)};
  \node[dporulecardtitle] at (12.70,3.48) {\(R\)};
  \draw[dporulearrow] (5.30,2.05)--(4.80,2.05)
    node[midway,above=2pt,dporulespanlabel] {\(l\)};
  \draw[dporulearrow] (9.90,2.05)--(10.40,2.05)
    node[midway,above=2pt,dporulespanlabel] {\(r\)};
    \draw[dporuledel] (0.997,2.360)--(0.997,1.260);
    \draw[dporuledel] (0.997,1.260)--(2.097,1.260);
    \draw[dporuledel] (2.097,1.260)--(3.050,0.710);
    \draw[dporulekeep] (2.097,1.260)--(2.097,2.360);
    \draw[dporuledel] (2.097,2.360)--(3.050,2.910);
    \draw[dporuledel] (3.050,2.910)--(4.003,2.360);
    \draw[dporuledel] (4.003,2.360)--(4.003,1.260);
    \node[dporuleH,label={[dporulemap]135:1}] at (0.997,2.360) {H};
    \node[dporuleO,label={[dporulemap]225:2}] at (0.997,1.260) {O};
    \node[dporuleC,label={[dporulemap]225:3}] at (2.097,1.260) {C};
    \node[dporuleO,label={[dporulemap]315:4}] at (3.050,0.710) {O};
    \node[dporuleC,label={[dporulemap]135:8}] at (2.097,2.360) {C};
    \node[dporuleC,label={[dporulemap]45:5}] at (3.050,2.910) {C};
    \node[dporuleO,label={[dporulemap]45:6}] at (4.003,2.360) {O};
    \node[dporuleH,label={[dporulemap]315:7}] at (4.003,1.260) {H};
    \draw[dporulekeep] (7.197,1.260)--(7.197,2.360);
    \node[dporuleH,label={[dporulemap]135:1}] at (6.097,2.360) {H};
    \node[dporuleO,label={[dporulemap]225:2}] at (6.097,1.260) {O};
    \node[dporuleC,label={[dporulemap]225:3}] at (7.197,1.260) {C};
    \node[dporuleO,label={[dporulemap]315:4}] at (8.150,0.710) {O};
    \node[dporuleC,label={[dporulemap]45:5}] at (8.150,2.910) {C};
    \node[dporuleO,label={[dporulemap]45:6}] at (9.103,2.360) {O};
    \node[dporuleH,label={[dporulemap]315:7}] at (9.103,1.260) {H};
    \node[dporuleC,label={[dporulemap]135:8}] at (7.197,2.360) {C};
    \draw[dporulenew] (13.229,2.873)--(14.181,2.323);
    \draw[dporulenew] (13.271,2.947)--(14.224,2.397);
    \draw[dporulenew] (11.197,2.360)--(12.297,2.360);
    \draw[dporulekeep] (12.297,2.360)--(12.297,1.260);
    \draw[dporulenew] (12.297,1.303)--(11.197,1.303);
    \draw[dporulenew] (12.297,1.217)--(11.197,1.217);
    \draw[dporulenew] (13.250,0.710)--(14.203,1.260);
    \node[dporuleC,label={[dporulemap]45:5}] at (13.250,2.910) {C};
    \node[dporuleO,label={[dporulemap]45:6}] at (14.203,2.360) {O};
    \node[dporuleH,label={[dporulemap]135:1}] at (11.197,2.360) {H};
    \node[dporuleC,label={[dporulemap]135:8}] at (12.297,2.360) {C};
    \node[dporuleC,label={[dporulemap]225:3}] at (12.297,1.260) {C};
    \node[dporuleO,label={[dporulemap]225:2}] at (11.197,1.260) {O};
    \node[dporuleO,label={[dporulemap]315:4}] at (13.250,0.710) {O};
    \node[dporuleH,label={[dporulemap]315:7}] at (14.203,1.260) {H};
\end{tikzpicture}
  }
  \par\vspace{1.6mm}\mzerodporulelegend
}
\newcommand{\mzerodporulespagethree}{%
  \resizebox{\linewidth}{!}{%
\begin{tikzpicture}[x=1cm,y=1cm,line cap=round,line join=round]
  \draw[dporulerow] (0,0) rectangle (15.2,4.48);
  \node[dporulebadge] at (0.38,4.10) {G};
  \node[anchor=west,dporuletitle] at (0.72,4.11) {\texttt{class:6}};
  \draw[dporulecard,draw=pbrk!42] (0.35,0.10) rectangle (4.65,3.78);
  \draw[dporulecard] (5.45,0.10) rectangle (9.75,3.78);
  \draw[dporulecard,draw=npgGreen!52!black] (10.55,0.10) rectangle (14.85,3.78);
  \node[dporulecardtitle] at (2.50,3.48) {\(L\)};
  \node[dporulecardtitle] at (7.60,3.48) {\(K\)};
  \node[dporulecardtitle] at (12.70,3.48) {\(R\)};
  \draw[dporulearrow] (5.30,2.05)--(4.80,2.05)
    node[midway,above=2pt,dporulespanlabel] {\(l\)};
  \draw[dporulearrow] (9.90,2.05)--(10.40,2.05)
    node[midway,above=2pt,dporulespanlabel] {\(r\)};
    \draw[dporuledel] (3.770,1.810)--(3.135,0.710);
    \draw[dporuledel] (1.230,1.810)--(1.865,0.710);
    \draw[dporuledel] (3.135,2.910)--(1.865,2.910);
    \node[dporuleH,label={[dporulemap]315:1}] at (3.770,1.810) {H};
    \node[dporuleO,label={[dporulemap]315:2}] at (3.135,0.710) {O};
    \node[dporuleO,label={[dporulemap]225:3}] at (1.230,1.810) {O};
    \node[dporuleH,label={[dporulemap]225:4}] at (1.865,0.710) {H};
    \node[dporuleO,label={[dporulemap]45:5}] at (3.135,2.910) {O};
    \node[dporuleP,label={[dporulemap]135:6}] at (1.865,2.910) {P};
    \node[dporuleH,label={[dporulemap]315:1}] at (8.870,1.810) {H};
    \node[dporuleO,label={[dporulemap]315:2}] at (8.235,0.710) {O};
    \node[dporuleO,label={[dporulemap]225:3}] at (6.330,1.810) {O};
    \node[dporuleH,label={[dporulemap]225:4}] at (6.965,0.710) {H};
    \node[dporuleO,label={[dporulemap]45:5}] at (8.235,2.910) {O};
    \node[dporuleP,label={[dporulemap]135:6}] at (6.965,2.910) {P};
    \draw[dporulenew] (13.970,1.810)--(13.335,2.910);
    \draw[dporulenew] (13.335,0.710)--(12.065,0.710);
    \draw[dporulenew] (11.430,1.810)--(12.065,2.910);
    \node[dporuleH,label={[dporulemap]315:1}] at (13.970,1.810) {H};
    \node[dporuleO,label={[dporulemap]45:5}] at (13.335,2.910) {O};
    \node[dporuleO,label={[dporulemap]315:2}] at (13.335,0.710) {O};
    \node[dporuleH,label={[dporulemap]225:4}] at (12.065,0.710) {H};
    \node[dporuleO,label={[dporulemap]225:3}] at (11.430,1.810) {O};
    \node[dporuleP,label={[dporulemap]135:6}] at (12.065,2.910) {P};
\end{tikzpicture}
  }
  \par\vspace{2.2mm}
  \resizebox{\linewidth}{!}{%
\begin{tikzpicture}[x=1cm,y=1cm,line cap=round,line join=round]
  \draw[dporulerow] (0,0) rectangle (15.2,4.48);
  \node[dporulebadge] at (0.38,4.10) {H};
  \node[anchor=west,dporuletitle] at (0.72,4.11) {\texttt{class:7}};
  \draw[dporulecard,draw=pbrk!42] (0.35,0.10) rectangle (4.65,3.78);
  \draw[dporulecard] (5.45,0.10) rectangle (9.75,3.78);
  \draw[dporulecard,draw=npgGreen!52!black] (10.55,0.10) rectangle (14.85,3.78);
  \node[dporulecardtitle] at (2.50,3.48) {\(L\)};
  \node[dporulecardtitle] at (7.60,3.48) {\(K\)};
  \node[dporulecardtitle] at (12.70,3.48) {\(R\)};
  \draw[dporulearrow] (5.30,2.05)--(4.80,2.05)
    node[midway,above=2pt,dporulespanlabel] {\(l\)};
  \draw[dporulearrow] (9.90,2.05)--(10.40,2.05)
    node[midway,above=2pt,dporulespanlabel] {\(r\)};
    \draw[dporuledel] (1.545,1.650)--(2.500,2.605);
    \draw[dporuledel] (3.455,1.650)--(2.500,0.695);
    \node[dporuleO,label={[dporulemap]135:3}] at (1.545,1.650) {O};
    \node[dporuleP,label={[dporulemap]45:4}] at (2.500,2.605) {P};
    \node[dporuleO,label={[dporulemap]45:1}] at (3.455,1.650) {O};
    \node[dporuleH,label={[dporulemap]315:2}] at (2.500,0.695) {H};
    \node[dporuleO,label={[dporulemap]45:1}] at (8.555,1.650) {O};
    \node[dporuleH,label={[dporulemap]315:2}] at (7.600,0.695) {H};
    \node[dporuleO,label={[dporulemap]135:3}] at (6.645,1.650) {O};
    \node[dporuleP,label={[dporulemap]45:4}] at (7.600,2.605) {P};
    \draw[dporulenew] (12.700,0.695)--(11.745,1.650);
    \draw[dporulenew] (13.655,1.650)--(12.700,2.605);
    \node[dporuleH,label={[dporulemap]315:2}] at (12.700,0.695) {H};
    \node[dporuleO,label={[dporulemap]135:3}] at (11.745,1.650) {O};
    \node[dporuleO,label={[dporulemap]45:1}] at (13.655,1.650) {O};
    \node[dporuleP,label={[dporulemap]45:4}] at (12.700,2.605) {P};
\end{tikzpicture}
  }
  \par\vspace{2.2mm}
  \resizebox{\linewidth}{!}{%
\begin{tikzpicture}[x=1cm,y=1cm,line cap=round,line join=round]
  \draw[dporulerow] (0,0) rectangle (15.2,4.48);
  \node[dporulebadge] at (0.38,4.10) {I};
  \node[anchor=west,dporuletitle] at (0.72,4.11) {\texttt{class:8}};
  \draw[dporulecard,draw=pbrk!42] (0.35,0.10) rectangle (4.65,3.78);
  \draw[dporulecard] (5.45,0.10) rectangle (9.75,3.78);
  \draw[dporulecard,draw=npgGreen!52!black] (10.55,0.10) rectangle (14.85,3.78);
  \node[dporulecardtitle] at (2.50,3.48) {\(L\)};
  \node[dporulecardtitle] at (7.60,3.48) {\(K\)};
  \node[dporulecardtitle] at (12.70,3.48) {\(R\)};
  \draw[dporulearrow] (5.30,2.05)--(4.80,2.05)
    node[midway,above=2pt,dporulespanlabel] {\(l\)};
  \draw[dporulearrow] (9.90,2.05)--(10.40,2.05)
    node[midway,above=2pt,dporulespanlabel] {\(r\)};
    \draw[dporuledel] (1.545,1.650)--(2.500,2.605);
    \draw[dporuledel] (3.455,1.650)--(2.500,0.695);
    \node[dporuleO,label={[dporulemap]135:3}] at (1.545,1.650) {O};
    \node[dporuleC,label={[dporulemap]45:4}] at (2.500,2.605) {C};
    \node[dporuleO,label={[dporulemap]45:1}] at (3.455,1.650) {O};
    \node[dporuleP,label={[dporulemap]315:2}] at (2.500,0.695) {P};
    \node[dporuleO,label={[dporulemap]45:1}] at (8.555,1.650) {O};
    \node[dporuleP,label={[dporulemap]315:2}] at (7.600,0.695) {P};
    \node[dporuleO,label={[dporulemap]135:3}] at (6.645,1.650) {O};
    \node[dporuleC,label={[dporulemap]45:4}] at (7.600,2.605) {C};
    \draw[dporulenew] (13.655,1.650)--(12.700,2.605);
    \draw[dporulenew] (12.700,0.695)--(11.745,1.650);
    \node[dporuleO,label={[dporulemap]45:1}] at (13.655,1.650) {O};
    \node[dporuleC,label={[dporulemap]45:4}] at (12.700,2.605) {C};
    \node[dporuleP,label={[dporulemap]315:2}] at (12.700,0.695) {P};
    \node[dporuleO,label={[dporulemap]135:3}] at (11.745,1.650) {O};
\end{tikzpicture}
  }
  \par\vspace{1.6mm}\mzerodporulelegend
}
\newcommand{\mzerodporulespagefour}{%
  \resizebox{\linewidth}{!}{%
\begin{tikzpicture}[x=1cm,y=1cm,line cap=round,line join=round]
  \draw[dporulerow] (0,0) rectangle (15.2,4.48);
  \node[dporulebadge] at (0.38,4.10) {J};
  \node[anchor=west,dporuletitle] at (0.72,4.11) {\texttt{class:9}};
  \draw[dporulecard,draw=pbrk!42] (0.35,0.10) rectangle (4.65,3.78);
  \draw[dporulecard] (5.45,0.10) rectangle (9.75,3.78);
  \draw[dporulecard,draw=npgGreen!52!black] (10.55,0.10) rectangle (14.85,3.78);
  \node[dporulecardtitle] at (2.50,3.48) {\(L\)};
  \node[dporulecardtitle] at (7.60,3.48) {\(K\)};
  \node[dporulecardtitle] at (12.70,3.48) {\(R\)};
  \draw[dporulearrow] (5.30,2.05)--(4.80,2.05)
    node[midway,above=2pt,dporulespanlabel] {\(l\)};
  \draw[dporulearrow] (9.90,2.05)--(10.40,2.05)
    node[midway,above=2pt,dporulespanlabel] {\(r\)};
    \draw[dporuledel] (1.550,1.480)--(3.300,1.480);
    \draw[dporuledel] (3.300,0.680)--(1.550,0.680);
    \node[dporuleC,label={[dporulemap]225:1}] at (1.550,1.480) {C};
    \node[dporuleH,label={[dporulemap]315:2}] at (3.300,1.480) {H};
    \node[dporuleFe,label={[dporulemap]135:5}] at (1.770,2.550) {\(\mathsf{Fe}^{+}\)};
    \node[dporuleFe,label={[dporulemap]45:6}] at (3.070,2.550) {\(\mathsf{Fe}^{+}\)};
    \node[dporuleH,label={[dporulemap]315:3}] at (3.300,0.680) {H};
    \node[dporuleO,label={[dporulemap]225:4}] at (1.550,0.680) {O};
    \node[dporuleC,label={[dporulemap]225:1}] at (6.650,1.480) {C};
    \node[dporuleH,label={[dporulemap]315:2}] at (8.400,1.480) {H};
    \node[dporuleH,label={[dporulemap]315:3}] at (8.400,0.680) {H};
    \node[dporuleO,label={[dporulemap]225:4}] at (6.650,0.680) {O};
    \node[dporuleFe,label={[dporulemap]135:5}] at (6.870,2.550) {Fe};
    \node[dporuleFe,label={[dporulemap]45:6}] at (8.170,2.550) {Fe};
    \draw[dporulenew] (11.750,1.480)--(11.750,0.680);
    \node[dporuleC,label={[dporulemap]225:1}] at (11.750,1.480) {C};
    \node[dporuleO,label={[dporulemap]225:4}] at (11.750,0.680) {O};
    \node[dporuleFe,label={[dporulemap]135:5}] at (11.970,2.550) {Fe};
    \node[dporuleFe,label={[dporulemap]45:6}] at (13.270,2.550) {Fe};
    \node[dporuleH,label={[dporulemap]315:3}] at (13.500,0.680) {\(\mathsf{H}^{+}\)};
    \node[dporuleH,label={[dporulemap]315:2}] at (13.500,1.480) {\(\mathsf{H}^{+}\)};
\end{tikzpicture}
  }
  \par\vspace{2.2mm}
  \resizebox{\linewidth}{!}{%
\begin{tikzpicture}[x=1cm,y=1cm,line cap=round,line join=round]
  \draw[dporulerow] (0,0) rectangle (15.2,4.48);
  \node[dporulebadge] at (0.38,4.10) {K};
  \node[anchor=west,dporuletitle] at (0.72,4.11) {\texttt{class:10}};
  \draw[dporulecard,draw=pbrk!42] (0.35,0.34) rectangle (4.65,3.78);
  \draw[dporulecard] (5.45,0.34) rectangle (9.75,3.78);
  \draw[dporulecard,draw=npgGreen!52!black] (10.55,0.34) rectangle (14.85,3.78);
  \node[dporulecardtitle] at (2.50,3.48) {\(L\)};
  \node[dporulecardtitle] at (7.60,3.48) {\(K\)};
  \node[dporulecardtitle] at (12.70,3.48) {\(R\)};
  \draw[dporulearrow] (5.30,2.05)--(4.80,2.05)
    node[midway,above=2pt,dporulespanlabel] {\(l\)};
  \draw[dporulearrow] (9.90,2.05)--(10.40,2.05)
    node[midway,above=2pt,dporulespanlabel] {\(r\)};
    \draw[dporuledel] (2.500,2.325)--(3.669,1.650);
    \draw[dporulekeep] (2.500,2.325)--(2.500,0.975);
    \draw[dporuledel] (2.500,0.975)--(1.331,1.650);
    \node[dporuleC,label={[dporulemap]45:1}] at (2.500,2.325) {C};
    \node[dporuleH,label={[dporulemap]315:2}] at (3.669,1.650) {H};
    \node[dporuleC,label={[dporulemap]315:4}] at (2.500,0.975) {C};
    \node[dporuleC,label={[dporulemap]135:3}] at (1.331,1.650) {C};
    \draw[dporulekeep] (7.600,2.325)--(7.600,0.975);
    \node[dporuleC,label={[dporulemap]45:1}] at (7.600,2.325) {C};
    \node[dporuleH,label={[dporulemap]315:2}] at (8.769,1.650) {H};
    \node[dporuleC,label={[dporulemap]135:3}] at (6.431,1.650) {C};
    \node[dporuleC,label={[dporulemap]315:4}] at (7.600,0.975) {C};
    \draw[dporulenew] (12.700,2.325)--(11.531,1.650);
    \draw[dporulekeep] (12.700,2.325)--(12.700,0.975);
    \draw[dporulenew] (12.700,0.975)--(13.869,1.650);
    \node[dporuleC,label={[dporulemap]45:1}] at (12.700,2.325) {C};
    \node[dporuleC,label={[dporulemap]135:3}] at (11.531,1.650) {C};
    \node[dporuleC,label={[dporulemap]315:4}] at (12.700,0.975) {C};
    \node[dporuleH,label={[dporulemap]315:2}] at (13.869,1.650) {H};
\end{tikzpicture}
  }
  \par\vspace{1.6mm}\mzerodporulelegend
}

\begin{figure}[p]
  \centering
  \mzerodporulespageone
  \caption{Double-pushout transformations for KEGG glycolysis module M00001.
    Panels (A)--(K) show the eleven transformations as DPO spans
    \(L\xleftarrow{\,l\,}K\xrightarrow{\,r\,}R\), with mapped atoms placed
    consistently across all three graphs. Gray bonds are preserved in \(K\),
    red bonds are deleted or change order in \(L\), and green bonds
    are created or change order in \(R\). Numbered atoms retain their identity
    across the span.}
  \label{fig:si-m00001-rules}
\end{figure}

\clearpage

\begin{figure}[p]
  \ContinuedFloat
  \centering
  \mzerodporulespagetwo
  \caption[]{Double-pushout transformations for KEGG glycolysis module
    M00001 (continued). Panels (D)--(F) are shown on this page; colors and
    atom maps follow the conventions stated on the first page.}
\end{figure}

\clearpage

\begin{figure}[p]
  \ContinuedFloat
  \centering
  \mzerodporulespagethree
  \caption[]{Double-pushout transformations for KEGG glycolysis module
    M00001 (continued). Panels (G)--(I) are shown on this page; colors and
    atom maps follow the conventions stated on the first page.}
\end{figure}

\clearpage

\begin{figure}[p]
  \ContinuedFloat
  \centering
  \mzerodporulespagefour
  \caption[]{Double-pushout transformations for KEGG glycolysis module
    M00001 (continued). Panels (J) and (K) are shown on this page; colors
    and atom maps follow the conventions stated on the first page of
    Figure~\ref{fig:si-m00001-rules}.}
\end{figure}
\endgroup

\clearpage

\begingroup
\input{figure/fig-m00001-spool}

\begin{figure}[p]
  \centering
  \resizebox{\linewidth}{!}{\mzerospeciespageone}
  \caption{Molecular structures of the 25 species in KEGG glycolysis module
    M00001. Panels (A)--(Y) show the molecular graphs as skeletal
    structures.}
  \label{fig:si-m00001-molecules}
\end{figure}

\clearpage

\begin{figure}[p]
  \ContinuedFloat
  \centering
  \resizebox{\linewidth}{!}{\mzerospeciespagetwo}
  \caption[]{Molecular structures of the 25 species in KEGG glycolysis
    module M00001 (continued). Panels (K)--(T) are shown on this page.}
\end{figure}

\clearpage

\begin{figure}[p]
  \ContinuedFloat
  \centering
  \resizebox{\linewidth}{!}{\mzerospeciespagethree}
  \caption[]{Molecular structures of the 25 species in KEGG glycolysis
    module M00001 (continued). Panels (U)--(Y) complete
    Figure~\ref{fig:si-m00001-molecules}.}
\end{figure}
\endgroup

\clearpage

\begin{landscape}
\subsection*{S3: Structural Characteristics of the Benchmark Networks}

\begin{table}[!ht]
  \centering
  \caption{Structural summary of the benchmark networks. In
  NOG-X\(x\)-R\(r\), \(x\) and \(r\) denote the numbers of abstract entities
  and reactions. The quantities \(\bar d_R\) and \(d_R^{\max}\) are the mean
  and maximum reaction degrees. The columns \(P_{1\text{-}1}\),
  \(P_{1\text{-}m}\), and \(P_{m\text{-}m}\) give the percentages of
  one-to-one, one-to-many, and many-to-many reaction patterns; Sym.\ counts
  nontrivial symmetry orbits, and Iso.\ and Olig.\ density give the
  percentages of entity pairs linked by the corresponding obligatory
  relations. NOG values are means \(\pm\) sample standard deviations over
  ten seeds. All networks are connected and use every entity.}
  \label{tab:network-information}
  \setlength{\tabcolsep}{4.2pt}
  \renewcommand{\arraystretch}{1.08}
  \newcommand{\mstd}[2]{#1\,{\(\pm\)}\,#2}

  \scriptsize
  \begin{tabular}{lrrrrrrrrrr}
      \toprule
      &
      \multicolumn{4}{c}{Network information}
      &
      \multicolumn{3}{c}{Reaction patterns}
      &
      \multicolumn{3}{c}{Oblig.\ relations}
      \\
      \cmidrule(lr){2-5}
      \cmidrule(lr){6-8}
      \cmidrule(lr){9-11}

      Network
      & \( |X| \)
      & \( |R| \)
      & \( \bar d_R \)
      & \( d_R^{\max} \)
      & \( P_{1\text{-}1} \) (\%)
      & \( P_{1\text{-}m} \) (\%)
      & \( P_{m\text{-}m} \) (\%)
      & Sym.
      & Iso.\ density (\%)
      & Olig.\ density (\%)
      \\
      \midrule

      NOG-X42-R42
      & 42 & 42
      & 3.00 & \mstd{12.20}{1.87}
      & 33.3
      & 33.3
      & 33.3
      & 2
      & \mstd{4.27}{1.06}
      & \mstd{12.09}{11.61}
      \\

      NOG-X42-R66
      & 42 & 66
      & 4.71 & \mstd{15.20}{1.62}
      & 33.3
      & 33.3
      & 33.3
      & 2
      & \mstd{5.16}{0.60}
      & \mstd{18.35}{7.30}
      \\

      NOG-X63-R63
      & 63 & 63
      & 3.00 & \mstd{13.40}{1.65}
      & 33.3
      & 33.3
      & 33.3
      & 2
      & \mstd{3.35}{1.11}
      & \mstd{10.60}{10.47}
      \\

      NOG-X63-R99
      & 63 & 99
      & 4.71 & \mstd{17.60}{1.58}
      & 33.3
      & 33.3
      & 33.3
      & 2
      & \mstd{4.02}{0.57}
      & \mstd{17.01}{6.18}
      \\

      NOG-X84-R84
      & 84 & 84
      & 3.00 & \mstd{14.50}{1.18}
      & 33.3
      & 33.3
      & 33.3
      & 2
      & \mstd{2.92}{0.97}
      & \mstd{9.89}{9.39}
      \\

      NOG-X84-R132
      & 84 & 132
      & 4.71 & \mstd{19.50}{1.27}
      & 33.3
      & 33.3
      & 33.3
      & 2
      & \mstd{3.55}{0.39}
      & \mstd{15.71}{4.57}
      \\

      NOG-X105-R105
      & 105 & 105
      & 3.00 & \mstd{16.30}{1.70}
      & 33.3
      & 33.3
      & 33.3
      & 2
      & \mstd{2.84}{1.00}
      & \mstd{9.28}{8.78}
      \\

      NOG-X105-R165
      & 105 & 165
      & 4.71 & \mstd{21.70}{1.25}
      & 33.3
      & 33.3
      & 33.3
      & 2
      & \mstd{3.58}{0.31}
      & \mstd{14.58}{4.65}
      \\

      KEGG M00001
      & 25 & 15
      & 2.32 & 6
      & 20.0
      & 13.3
      & 66.7
      & 2
      & 1.00
      & 0.67
      \\

    \bottomrule
  \end{tabular}
\end{table}
\end{landscape}

\clearpage
\providecommand{\cmark}{\checkmark}

\subsection*{S4: NOG Ablation Results}

\begingroup
\small
\setlength{\tabcolsep}{1.6pt}
\setlength{\LTcapwidth}{\linewidth}
\begin{longtable}{llcccccrrrr}

\caption{Ablation results for the original NOG benchmark networks. All
configurations enforce atom conservation and reaction-mechanism validation.
M denotes mass-interval pruning, I/O obligatory isomer and oligomer pruning,
C compositional partial-order pruning, F frontier scheduling, and S symmetry
breaking. Running times over the available seeds 0--9 are reported as mean
\(\pm\) sample standard deviation and median [IQR]; Nodes is the mean number
of explored search nodes, and Sol.\ is the number of solutions. B2 results
for NOG-X84-R84 and NOG-X105-R105 are not reported.
Nine of ten X84 runs and seven of ten X105 runs completed within the
1800-second timeout. Bold values are the lowest mean or median within each
labeled regime and network.}
\label{tab:ablation}
\\
\toprule
\multirow{2}{*}{\shortstack{Network\\regime}}
& \multirow{2}{*}{Config.}
& \multicolumn{3}{c}{Pruning}
& \multicolumn{2}{c}{Search}
& \multicolumn{4}{c}{Results}
\\
\cmidrule(lr){3-5}
\cmidrule(lr){6-7}
\cmidrule(lr){8-11}

&
&
M
& I/O
& C
& F
& S
& Mean \(\pm\) SD (s)
& Median [IQR] (s)
& Nodes
& Sol.
\\
\midrule
\endfirsthead

\multicolumn{11}{c}{
  \tablename\ \thetable{} continued from the previous page
}
\\

\toprule
\multirow{2}{*}{\shortstack{Network\\regime}}
& \multirow{2}{*}{Config.}
& \multicolumn{3}{c}{Pruning}
& \multicolumn{2}{c}{Search}
& \multicolumn{4}{c}{Results}
\\
\cmidrule(lr){3-5}
\cmidrule(lr){6-7}
\cmidrule(lr){8-11}

&
&
M
& I/O
& C
& F
& S
& Mean \(\pm\) SD (s)
& Median [IQR] (s)
& Nodes
& Sol.
\\
\midrule
\endhead

\midrule
\multicolumn{11}{r}{Continued on the next page}
\\
\endfoot

\bottomrule
\endlastfoot

\multirow{7}{*}{\shortstack{NOG\\-X42\\-R42\\[-1pt]{\scriptsize Pruning only}}}
& B1
& \(\cmark\) &  &  &  &
& \(1.59 \pm 0.39\)
& \(1.46\,[0.39]\)
& 115.2
& 4 \\*
& B2
&  & \(\cmark\) &  &  &
& \(65.14 \pm 157.08\)
& \(4.34\,[37.27]\)
& 14384.6
& 4 \\*
& B3
&  &  & \(\cmark\) &  &
& \(1.63 \pm 0.41\)
& \(1.49\,[0.52]\)
& 130.0
& 4 \\*
& B4
& \(\cmark\) & \(\cmark\) &  &  &
& \(\mathbf{1.45} \pm 0.43\)
& \(\mathbf{1.28}\,[0.42]\)
& 104.7
& 4 \\*
& B5
& \(\cmark\) &  & \(\cmark\) &  &
& \(1.82 \pm 0.52\)
& \(1.61\,[0.47]\)
& 129.8
& 4 \\*
& B6
&  & \(\cmark\) & \(\cmark\) &  &
& \(1.49 \pm 0.40\)
& \(1.34\,[0.50]\)
& 113.2
& 4 \\*
& B7
& \(\cmark\) & \(\cmark\) & \(\cmark\) &  &
& \(1.79 \pm 0.60\)
& \(1.47\,[0.52]\)
& 113.2
& 4 \\*
\cmidrule(lr){1-11}
\multirow{5}{*}{\shortstack{NOG\\-X42\\-R42\\[-1pt]{\scriptsize + Frontier}}}
& B8
& \(\cmark\) &  &  & \(\cmark\) &
& \(1.62 \pm 0.32\)
& \(1.53\,[0.38]\)
& 115.2
& 4 \\*
& B9
&  &  & \(\cmark\) & \(\cmark\) &
& \(1.39 \pm 0.36\)
& \(1.27\,[0.47]\)
& 130.0
& 4 \\*
& B10
& \(\cmark\) & \(\cmark\) &  & \(\cmark\) &
& \(1.51 \pm 0.36\)
& \(1.40\,[0.37]\)
& 104.7
& 4 \\*
& B11
&  & \(\cmark\) & \(\cmark\) & \(\cmark\) &
& \(\mathbf{1.32} \pm 0.32\)
& \(\mathbf{1.23}\,[0.36]\)
& 113.2
& 4 \\*
& B12
& \(\cmark\) & \(\cmark\) & \(\cmark\) & \(\cmark\) &
& \(1.38 \pm 0.41\)
& \(1.24\,[0.44]\)
& 113.2
& 4 \\*
\cmidrule(lr){1-11}
\multirow{4}{*}{\shortstack{NOG\\-X42\\-R42\\[-1pt]{\scriptsize + Symmetry}}}
& B13
&  &  & \(\cmark\) & \(\cmark\) & \(\cmark\)
& \(1.31 \pm 0.29\)
& \(1.26\,[0.30]\)
& 112.2
& 1 \\*
& B14
& \(\cmark\) & \(\cmark\) &  & \(\cmark\) & \(\cmark\)
& \(1.48 \pm 0.39\)
& \(1.36\,[0.48]\)
& 90.6
& 1 \\*
& B15
&  & \(\cmark\) & \(\cmark\) & \(\cmark\) & \(\cmark\)
& \(\mathbf{1.26} \pm 0.34\)
& \(\mathbf{1.15}\,[0.32]\)
& 99.1
& 1 \\*
& B16
& \(\cmark\) & \(\cmark\) & \(\cmark\) & \(\cmark\) & \(\cmark\)
& \(1.34 \pm 0.43\)
& \(1.22\,[0.39]\)
& 99.1
& 1 \\

\midrule

\multirow{7}{*}{\shortstack{NOG\\-X63\\-R63\\[-1pt]{\scriptsize Pruning only}}}
& B1
& \(\cmark\) &  &  &  &
& \(6.34 \pm 4.12\)
& \(5.27\,[1.60]\)
& 580.1
& 4 \\*
& B2
&  & \(\cmark\) &  &  &
& \(90.70 \pm 116.98\)
& \(16.80\,[174.49]\)
& 9846.7
& 4 \\*
& B3
&  &  & \(\cmark\) &  &
& \(6.56 \pm 2.49\)
& \(6.55\,[3.44]\)
& 485.9
& 4 \\*
& B4
& \(\cmark\) & \(\cmark\) &  &  &
& \(\mathbf{4.63} \pm 1.45\)
& \(\mathbf{4.53}\,[2.29]\)
& 525.6
& 4 \\*
& B5
& \(\cmark\) &  & \(\cmark\) &  &
& \(7.70 \pm 3.03\)
& \(7.55\,[4.31]\)
& 485.9
& 4 \\*
& B6
&  & \(\cmark\) & \(\cmark\) &  &
& \(5.35 \pm 1.97\)
& \(5.09\,[2.55]\)
& 523.4
& 4 \\*
& B7
& \(\cmark\) & \(\cmark\) & \(\cmark\) &  &
& \(6.73 \pm 2.68\)
& \(6.24\,[3.91]\)
& 523.4
& 4 \\*
\cmidrule(lr){1-11}
\multirow{5}{*}{\shortstack{NOG\\-X63\\-R63\\[-1pt]{\scriptsize + Frontier}}}
& B8
& \(\cmark\) &  &  & \(\cmark\) &
& \(6.12 \pm 3.59\)
& \(5.16\,[1.38]\)
& 580.1
& 4 \\*
& B9
&  &  & \(\cmark\) & \(\cmark\) &
& \(4.64 \pm 1.70\)
& \(4.51\,[2.30]\)
& 485.9
& 4 \\*
& B10
& \(\cmark\) & \(\cmark\) &  & \(\cmark\) &
& \(4.40 \pm 1.21\)
& \(4.36\,[1.33]\)
& 525.6
& 4 \\*
& B11
&  & \(\cmark\) & \(\cmark\) & \(\cmark\) &
& \(\mathbf{3.60} \pm 1.01\)
& \(\mathbf{3.45}\,[1.23]\)
& 523.4
& 4 \\*
& B12
& \(\cmark\) & \(\cmark\) & \(\cmark\) & \(\cmark\) &
& \(4.21 \pm 1.34\)
& \(4.02\,[1.67]\)
& 523.4
& 4 \\*
\cmidrule(lr){1-11}
\multirow{4}{*}{\shortstack{NOG\\-X63\\-R63\\[-1pt]{\scriptsize + Symmetry}}}
& B13
&  &  & \(\cmark\) & \(\cmark\) & \(\cmark\)
& \(4.09 \pm 1.59\)
& \(3.63\,[1.75]\)
& 381.2
& 1 \\*
& B14
& \(\cmark\) & \(\cmark\) &  & \(\cmark\) & \(\cmark\)
& \(3.95 \pm 0.82\)
& \(4.13\,[1.34]\)
& 444.5
& 1 \\*
& B15
&  & \(\cmark\) & \(\cmark\) & \(\cmark\) & \(\cmark\)
& \(\mathbf{3.26} \pm 0.78\)
& \(\mathbf{3.16}\,[1.24]\)
& 442.3
& 1 \\*
& B16
& \(\cmark\) & \(\cmark\) & \(\cmark\) & \(\cmark\) & \(\cmark\)
& \(3.47 \pm 0.86\)
& \(3.40\,[1.54]\)
& 442.3
& 1 \\

\midrule

\multirow{7}{*}{\shortstack{NOG\\-X84\\-R84\\[-1pt]{\scriptsize Pruning only}}}
& B1
& \(\cmark\) &  &  &  &
& \(21.98 \pm 10.17\)
& \(21.43\,[17.17]\)
& 2691.3
& 4 \\*
& B2
&  & \(\cmark\) &  &  &
& \multicolumn{4}{c}{\textit{not reported}} \\*
& B3
&  &  & \(\cmark\) &  &
& \(32.32 \pm 17.73\)
& \(29.50\,[31.05]\)
& 2807.7
& 4 \\*
& B4
& \(\cmark\) & \(\cmark\) &  &  &
& \(\mathbf{20.84} \pm 12.76\)
& \(\mathbf{14.48}\,[22.92]\)
& 2453.8
& 4 \\*
& B5
& \(\cmark\) &  & \(\cmark\) &  &
& \(38.63 \pm 21.38\)
& \(35.77\,[37.70]\)
& 2805.9
& 4 \\*
& B6
&  & \(\cmark\) & \(\cmark\) &  &
& \(25.10 \pm 17.74\)
& \(15.31\,[29.56]\)
& 2191.2
& 4 \\*
& B7
& \(\cmark\) & \(\cmark\) & \(\cmark\) &  &
& \(32.29 \pm 23.38\)
& \(19.15\,[39.26]\)
& 2191.2
& 4 \\*
\cmidrule(lr){1-11}
\multirow{5}{*}{\shortstack{NOG\\-X84\\-R84\\[-1pt]{\scriptsize + Frontier}}}
& B8
& \(\cmark\) &  &  & \(\cmark\) &
& \(20.01 \pm 8.89\)
& \(19.16\,[15.33]\)
& 2691.3
& 4 \\*
& B9
&  &  & \(\cmark\) & \(\cmark\) &
& \(16.73 \pm 7.27\)
& \(18.26\,[12.24]\)
& 2807.7
& 4 \\*
& B10
& \(\cmark\) & \(\cmark\) &  & \(\cmark\) &
& \(18.18 \pm 10.96\)
& \(12.55\,[18.20]\)
& 2453.8
& 4 \\*
& B11
&  & \(\cmark\) & \(\cmark\) & \(\cmark\) &
& \(\mathbf{12.95} \pm 8.12\)
& \(\mathbf{7.90}\,[12.36]\)
& 2191.2
& 4 \\*
& B12
& \(\cmark\) & \(\cmark\) & \(\cmark\) & \(\cmark\) &
& \(15.88 \pm 10.06\)
& \(9.69\,[15.77]\)
& 2191.2
& 4 \\*
\cmidrule(lr){1-11}
\multirow{4}{*}{\shortstack{NOG\\-X84\\-R84\\[-1pt]{\scriptsize + Symmetry}}}
& B13
&  &  & \(\cmark\) & \(\cmark\) & \(\cmark\)
& \(13.55 \pm 4.74\)
& \(15.00\,[7.92]\)
& 1953.5
& 1 \\*
& B14
& \(\cmark\) & \(\cmark\) &  & \(\cmark\) & \(\cmark\)
& \(13.93 \pm 6.31\)
& \(11.53\,[11.61]\)
& 1959.1
& 1 \\*
& B15
&  & \(\cmark\) & \(\cmark\) & \(\cmark\) & \(\cmark\)
& \(\mathbf{10.85} \pm 5.04\)
& \(\mathbf{8.02}\,[8.45]\)
& 1696.5
& 1 \\*
& B16
& \(\cmark\) & \(\cmark\) & \(\cmark\) & \(\cmark\) & \(\cmark\)
& \(12.21 \pm 5.86\)
& \(8.98\,[9.65]\)
& 1696.5
& 1 \\

\midrule

\multirow{7}{*}{\shortstack{NOG\\-X105\\-R105\\[-1pt]{\scriptsize Pruning only}}}
& B1
& \(\cmark\) &  &  &  &
& \(\mathbf{137.46} \pm 183.60\)
& \(72.39\,[91.78]\)
& 13356.2
& 4 \\*
& B2
&  & \(\cmark\) &  &  &
& \multicolumn{4}{c}{\textit{not reported}} \\*
& B3
&  &  & \(\cmark\) &  &
& \(235.24 \pm 321.63\)
& \(142.48\,[173.03]\)
& 13763.0
& 4 \\*
& B4
& \(\cmark\) & \(\cmark\) &  &  &
& \(162.67 \pm 238.06\)
& \(\mathbf{71.74}\,[139.45]\)
& 13468.8
& 4 \\*
& B5
& \(\cmark\) &  & \(\cmark\) &  &
& \(296.66 \pm 409.74\)
& \(180.21\,[221.10]\)
& 13763.0
& 4 \\*
& B6
&  & \(\cmark\) & \(\cmark\) &  &
& \(229.35 \pm 329.19\)
& \(107.28\,[179.44]\)
& 13472.3
& 4 \\*
& B7
& \(\cmark\) & \(\cmark\) & \(\cmark\) &  &
& \(333.61 \pm 488.63\)
& \(149.01\,[270.01]\)
& 13472.3
& 4 \\*
\cmidrule(lr){1-11}
\multirow{5}{*}{\shortstack{NOG\\-X105\\-R105\\[-1pt]{\scriptsize + Frontier}}}
& B8
& \(\cmark\) &  &  & \(\cmark\) &
& \(125.42 \pm 162.59\)
& \(66.34\,[84.25]\)
& 13356.2
& 4 \\*
& B9
&  &  & \(\cmark\) & \(\cmark\) &
& \(\mathbf{88.81} \pm 108.73\)
& \(60.02\,[62.01]\)
& 13763.0
& 4 \\*
& B10
& \(\cmark\) & \(\cmark\) &  & \(\cmark\) &
& \(143.66 \pm 205.66\)
& \(61.77\,[126.77]\)
& 13468.8
& 4 \\*
& B11
&  & \(\cmark\) & \(\cmark\) & \(\cmark\) &
& \(100.74 \pm 140.68\)
& \(\mathbf{43.62}\,[92.27]\)
& 13472.3
& 4 \\*
& B12
& \(\cmark\) & \(\cmark\) & \(\cmark\) & \(\cmark\) &
& \(130.44 \pm 185.09\)
& \(56.42\,[115.70]\)
& 13472.3
& 4 \\*
\cmidrule(lr){1-11}
\multirow{4}{*}{\shortstack{NOG\\-X105\\-R105\\[-1pt]{\scriptsize + Symmetry}}}
& B13
&  &  & \(\cmark\) & \(\cmark\) & \(\cmark\)
& \(\mathbf{47.06} \pm 37.06\)
& \(37.29\,[16.18]\)
& 6904.4
& 1 \\*
& B14
& \(\cmark\) & \(\cmark\) &  & \(\cmark\) & \(\cmark\)
& \(65.96 \pm 73.24\)
& \(37.27\,[53.33]\)
& 6610.2
& 1 \\*
& B15
&  & \(\cmark\) & \(\cmark\) & \(\cmark\) & \(\cmark\)
& \(51.93 \pm 58.19\)
& \(\mathbf{29.14}\,[39.92]\)
& 6613.7
& 1 \\*
& B16
& \(\cmark\) & \(\cmark\) & \(\cmark\) & \(\cmark\) & \(\cmark\)
& \(60.45 \pm 67.11\)
& \(37.20\,[41.51]\)
& 6613.7
& 1 \\

\end{longtable}

\endgroup

\clearpage

\subsection*{S5: Reaction-Enriched NOG Results}

\begin{table}[!ht]
  \centering
  \small
  \setlength{\tabcolsep}{2.8pt}
  \renewcommand{\arraystretch}{1.08}
  \caption{Running time comparison between original and reaction-enriched
    NOG networks with identical molecule sets. Check marks denote enabled
    strategies using the notation of the ablation study. Values summarize
    ten seeds and are reported as mean \(\pm\) sample standard deviation
    and median [IQR]; Nodes is the mean number of explored search nodes.}
  \label{tab:reaction-extended-results}

  \begin{tabular}{rrlcccccrrr}
    \toprule
    \multicolumn{2}{c}{Network}
    & \multirow{2}{*}{Config.}
    & \multicolumn{3}{c}{Pruning}
    & \multicolumn{2}{c}{Search}
    & \multicolumn{3}{c}{Results}
    \\
    \cmidrule(lr){1-2}
    \cmidrule(lr){4-6}
    \cmidrule(lr){7-8}
    \cmidrule(lr){9-11}

    \( |X| \)
    & \( |R| \)
    &
    & M
    & I/O
    & C
    & F
    & S
    & Mean \(\pm\) SD (s)
    & Median [IQR] (s)
    & Nodes
    \\
    \midrule

    \multirow{6}{*}{42}
    & \multirow{3}{*}{42}
    & B13 &  &  & \(\cmark\) & \(\cmark\) & \(\cmark\)
    & \(1.31 \pm 0.29\) & \(1.26\,[0.30]\) & 112.2 \\
    & & B15 &  & \(\cmark\) & \(\cmark\) & \(\cmark\) & \(\cmark\)
    & \(1.26 \pm 0.34\) & \(1.15\,[0.32]\) & 99.1 \\
    & & B16 & \(\cmark\) & \(\cmark\) & \(\cmark\) & \(\cmark\) & \(\cmark\)
    & \(1.34 \pm 0.43\) & \(1.22\,[0.39]\) & 99.1 \\
    \cmidrule(lr){2-11}
    & \multirow{3}{*}{66}
    & B13 &  &  & \(\cmark\) & \(\cmark\) & \(\cmark\)
    & \(1.36 \pm 0.26\) & \(1.32\,[0.24]\) & 66.7 \\
    & & B15 &  & \(\cmark\) & \(\cmark\) & \(\cmark\) & \(\cmark\)
    & \(1.38 \pm 0.28\) & \(1.32\,[0.29]\) & 66.3 \\
    & & B16 & \(\cmark\) & \(\cmark\) & \(\cmark\) & \(\cmark\) & \(\cmark\)
    & \(1.39 \pm 0.26\) & \(1.34\,[0.15]\) & 66.3 \\
    \midrule

    \multirow{6}{*}{63}
    & \multirow{3}{*}{63}
    & B13 &  &  & \(\cmark\) & \(\cmark\) & \(\cmark\)
    & \(4.09 \pm 1.59\) & \(3.63\,[1.75]\) & 381.2 \\
    & & B15 &  & \(\cmark\) & \(\cmark\) & \(\cmark\) & \(\cmark\)
    & \(3.26 \pm 0.78\) & \(3.16\,[1.24]\) & 442.3 \\
    & & B16 & \(\cmark\) & \(\cmark\) & \(\cmark\) & \(\cmark\) & \(\cmark\)
    & \(3.47 \pm 0.86\) & \(3.40\,[1.54]\) & 442.3 \\
    \cmidrule(lr){2-11}
    & \multirow{3}{*}{99}
    & B13 &  &  & \(\cmark\) & \(\cmark\) & \(\cmark\)
    & \(3.52 \pm 0.74\) & \(3.45\,[1.25]\) & 230.3 \\
    & & B15 &  & \(\cmark\) & \(\cmark\) & \(\cmark\) & \(\cmark\)
    & \(3.55 \pm 0.76\) & \(3.43\,[1.37]\) & 248.8 \\
    & & B16 & \(\cmark\) & \(\cmark\) & \(\cmark\) & \(\cmark\) & \(\cmark\)
    & \(3.73 \pm 0.90\) & \(3.57\,[1.60]\) & 248.8 \\
    \midrule

    \multirow{6}{*}{84}
    & \multirow{3}{*}{84}
    & B13 &  &  & \(\cmark\) & \(\cmark\) & \(\cmark\)
    & \(13.55 \pm 4.74\) & \(15.00\,[7.92]\) & 1953.5 \\
    & & B15 &  & \(\cmark\) & \(\cmark\) & \(\cmark\) & \(\cmark\)
    & \(10.85 \pm 5.04\) & \(8.02\,[8.45]\) & 1696.5 \\
    & & B16 & \(\cmark\) & \(\cmark\) & \(\cmark\) & \(\cmark\) & \(\cmark\)
    & \(12.21 \pm 5.86\) & \(8.98\,[9.65]\) & 1696.5 \\
    \cmidrule(lr){2-11}
    & \multirow{3}{*}{132}
    & B13 &  &  & \(\cmark\) & \(\cmark\) & \(\cmark\)
    & \(7.52 \pm 2.06\) & \(7.05\,[2.02]\) & 711.7 \\
    & & B15 &  & \(\cmark\) & \(\cmark\) & \(\cmark\) & \(\cmark\)
    & \(8.29 \pm 2.56\) & \(7.60\,[2.93]\) & 711.7 \\
    & & B16 & \(\cmark\) & \(\cmark\) & \(\cmark\) & \(\cmark\) & \(\cmark\)
    & \(9.06 \pm 3.15\) & \(8.24\,[3.61]\) & 711.7 \\
    \midrule

    \multirow{6}{*}{105}
    & \multirow{3}{*}{105}
    & B13 &  &  & \(\cmark\) & \(\cmark\) & \(\cmark\)
    & \(47.06 \pm 37.06\) & \(37.29\,[16.18]\) & 6904.4 \\
    & & B15 &  & \(\cmark\) & \(\cmark\) & \(\cmark\) & \(\cmark\)
    & \(51.93 \pm 58.19\) & \(29.14\,[39.92]\) & 6613.7 \\
    & & B16 & \(\cmark\) & \(\cmark\) & \(\cmark\) & \(\cmark\) & \(\cmark\)
    & \(60.45 \pm 67.11\) & \(37.20\,[41.51]\) & 6613.7 \\
    \cmidrule(lr){2-11}
    & \multirow{3}{*}{165}
    & B13 &  &  & \(\cmark\) & \(\cmark\) & \(\cmark\)
    & \(33.90 \pm 38.63\) & \(23.18\,[17.67]\) & 4448.6 \\
    & & B15 &  & \(\cmark\) & \(\cmark\) & \(\cmark\) & \(\cmark\)
    & \(41.35 \pm 50.63\) & \(26.08\,[23.49]\) & 4448.6 \\
    & & B16 & \(\cmark\) & \(\cmark\) & \(\cmark\) & \(\cmark\) & \(\cmark\)
    & \(46.60 \pm 56.84\) & \(29.72\,[28.10]\) & 4448.6 \\

    \bottomrule
  \end{tabular}
\end{table}

\clearpage

\subsection*{S6: KEGG M00001 Backtracking Results}

\begin{table}[!ht]
  \centering
  \small
  \setlength{\tabcolsep}{2.8pt}
  \renewcommand{\arraystretch}{1.08}
  \caption{Backtracking results for KEGG M00001. Check marks denote enabled
    strategies using the notation of the NOG ablation study. Time is the
    running time of the individual run, Nodes is the number of explored
    search nodes, and Sol.\ is the number of solutions.}
  \label{tab:m00001-backtracking}

  \begin{tabular}{lcccccrrr}
    \toprule
    \multirow{2}{*}{Config.}
    & \multicolumn{3}{c}{Pruning}
    & \multicolumn{2}{c}{Search}
    & \multicolumn{3}{c}{Results}
    \\
    \cmidrule(lr){2-4}
    \cmidrule(lr){5-6}
    \cmidrule(lr){7-9}
    & M
    & I/O
    & C
    & F
    & S
    & Time (s)
    & Nodes
    & Sol.
    \\
    \midrule
    B13
    &  &  & \(\cmark\) & \(\cmark\) & \(\cmark\)
    & 489.11 & 145 & 2
    \\
    B15
    &  & \(\cmark\) & \(\cmark\) & \(\cmark\) & \(\cmark\)
    & 494.90 & 152 & 2
    \\
    B16
    & \(\cmark\) & \(\cmark\) & \(\cmark\) & \(\cmark\) & \(\cmark\)
    & 439.35 & 144 & 2
    \\
    \bottomrule
  \end{tabular}
\end{table}

\end{document}